\def\version{}%September 21, 2010} 
\let\mysection\section
\renewcommand{\section}[1]{\setcounter{equation}{0}\mysection{#1}}
\newenvironment{proof}[1] {\vskip0.1cm\noindent{\rmfamily\itshape#1.}}{\hfill $\square$\vspace{0.15cm}}
\def\Ascr{\mathscr{A}}
\DeclareMathAlphabet{\mathpzc}{OT1}{pzc}{m}{it}
\newcommand{\abs}[1]{{\vert #1\vert}}
\def\bS{\boldsymbol{S}}
\def\be{\boldsymbol{e}}
\def\br{\boldsymbol{r}}
\def\bs{\boldsymbol{s}}
\def\bv{\boldsymbol{v}}
\def\emptyset{\varnothing} 
\def\a{\alpha} 
\def\b{\beta} 
\def\g{\gamma}
\def\d{\delta} 
\def\e{\varepsilon}
\def\O{\Omega}
\def\p{\partial} 
\def\L{\Lambda}
\font\tenBbb=msbm10 
\font\sevenBbb=msbm7 
\font\fiveBbb=msbm5 
\newcommand{\R}     {\mathbb{R}} 
\newcommand{\Z}     {\mathbb{Z}} 
\renewcommand{\P}   {\mathbb{P}} 
\newcommand{\E}     {\mathbb{E}} 
\newcommand{\T}     {\mathbb{T}}
\def\1{{\mathchoice {1\mskip-4mu\mathrm l}      % Blackboard bold 1 
{1\mskip-4mu\mathrm l} 
{1\mskip-4.5mu\mathrm l} {1\mskip-5mu\mathrm l}}} 
\def\comment#1{} 
\def\myffrac#1#2 in #3{\raise 2.6pt\hbox{$#3 #1$}\mkern-1.5mu\raise 0.8pt\hbox{$#3/$}\mkern-1.1mu\lower 1.5pt\hbox{$#3 #2$}}
\newcommand{\ffrac}[2]{\mathchoice%
	{\myffrac{#1}{#2} in \scriptstyle}
	{\myffrac{#1}{#2} in \scriptstyle}
	{\myffrac{#1}{#2} in \scriptscriptstyle}
	{\myffrac{#1}{#2} in \scriptscriptstyle}
}
\newcommand\tfrac[2]{{\textstyle\frac{#1}{#2}}}
\newcommand\twoeqref[2]{(\ref{#1}--\ref{#2})}
\newtheorem{theorem}{Theorem}[section] 
\newtheorem{lemma}[theorem]{Lemma} 
\newtheorem{prop}[theorem] {Proposition}
\renewcommand{\d}{{\rm d}}
\newcommand{\Exp}{\mathscr{E}\kern-0.2mm{\operatorname{xp}}}
\newcommand{\Log}{\mathscr{L}\kern-0.2mm{\operatorname{og}}}
\newcommand{\Acal}   {{\mathcal A }}
\newcommand{\Bcal}   {{\mathcal B }}
\newcommand{\Ecal}   {{\mathcal E }} 
\newcommand{\Gcal}   {{\mathcal G }} 
\newcommand{\Hcal}   {{\mathcal H }}
\newcommand{\Ocal}   {{\mathcal O }}
\newcommand{\Scal}   {{\mathcal S }}
\newcommand{\Ycal}   {{\mathcal Y }}
\def\Ascr{\mathscr{A}}
\def\BBE{\Bcal_{\rm E}}
\def\BBSW{\Bcal_{\rm SW}}
 \newcommand{\bt}{\boldsymbol{t}}
\newcommand{\bk}{\boldsymbol{k}}
\newcommand{\textd}{\mbox{d}}
\newcommand{\hate} {\be}
\newcommand{\hbS} {\hat{\bS}}
\newcommand{\hS} {{\hat S}}
\newcommand{\hHcal} {\hat{\Hcal}}
\newcommand{\eqref} {\eref}
\newcommand{\texte}{\mbox{\rm e}}
\newcommand{\texti}{\mbox{\small\rm i}}
\newcommand{\su}{\mathfrak{su}}
\renewcommand{\emph}[1]{\textit{#1}}
\begin{document} 
\title[\version\hfil 
Long-range order in the plaquette orbital model]
{True nature of long-range order  in a plaquette orbital model} 
 
\author{Marek Biskup$^{1,2}$ and Roman Koteck\'{y}$^{3,4}$}

\address{$^1$Department of Mathematics, UCLA, Los Angeles, CA 90024, U.S.A.}
\address{$^2$School of Economics, University of South Bohemia, 370 05 \v Cesk\'e Bud\v ejovice, Czech Rep.}
\address{$^3$Center for Theoretical Study, Charles University, 110 00 Praha 1, Czech Republic}
\address{$^4$Mathematics Department, University of Warwick, Coventry, CV4 7AL, United Kingdom}
\eads{\mailto{biskup@math.ucla.edu}, \mailto{R.Kotecky@warwick.ac.uk}}

\begin{abstract}
\noindent
We analyze the classical version of a plaquette orbital model that was recently introduced and studied numerically by S.~Wenzel and W.~Janke. In this model, edges of the square lattice are partitioned into $x$ and $z$-types that alternate along both coordinate directions and thus arrange into a checkerboard pattern of $x$ and $z$-plaquettes; classical $O(2)$-spins are then coupled ferromagnetically via their first components over the $x$-edges and via their second components over the $z$-edges. 
We prove from first principles that, at sufficiently low temperatures, the model exhibits orientational long-range order (OLRO) in one of the two principal lattice directions. Magnetic order is precluded by the underlying symmetries. A similar set of results is inferred also for quantum systems with large spin although the \hbox{spin-$\ffrac12$} instance currently seems beyond the reach of rigorous methods.
We point out that the Ne\'el order in the plaquette energy distribution observed in numerical simulations is an artefact of the OLRO and a judicious choice of the plaquette energies. In particular, this order seems to disappear when the plaquette energies are adjusted to vanish at the ground-state level. We also discuss the specific role of the underlying symmetries in Wenzel and Janke's simulations and propose an enhanced method of numerical sampling that could in principle significantly increase the speed of convergence.

\end{abstract}

\pacs{64.60.Bd -- General theory of phase transitions, 05.30.Rt -- Quantum phase transitions, 75.30.Ds -- Spin waves, 75.47.Gk -- Colossal magnetoresistance} 
\ams{82B26, 82B10}

\vspace{2pc}
\noindent{\it Keywords}: Rigorous results in statistical mechanics, Classical phase transitions, Quantum phase transitions, Classical Monte Carlo simulations, Sampling algorithms and rapid mixing

\noindent
To appear in: \textit{Journal of Statistical Mechanics: Theory and Experiment} 
\maketitle

%\tableofcontents

\section{Introduction}

\noindent
The physics of transition-metal compounds~\cite{Goodenough:1955p4961} and of the underlying effects such as colossal magnetoresistance~\cite{BKK} has recently spawned a number of spin models of high theoretical and practical interest~\cite{KIKugel:1976p4966,vandenBrink:1999p4960}. Among the common features of these models is that (1)~the degrees of freedom are represented by quantum or classical spins residing at the vertices of a regular lattice, typically, the square or cubic lattice, (2)~the interactions are nearest neighbor and ferromagnetic but (3)~only certain components   ---   or projections   ---   of the spins are coupled over each edge of the lattice. The spin variables actually represent effective degrees of freedom (pseudospins); typically, occupation characteristics of a partially filled atom orbital. The interaction is also effective and it is arrived at by considerations of crystal fields mediated by interlaced atoms~\cite{KM}, or by appealing to Jahn-Teller distortions~\cite{Jahn:1937p4959} 
(or both methods, via different routes,~\cite{You:2007p4908}). Systems of a similar nature have sprung up independently in the field of topological quantum computation, e.g., the Kitaev model~\cite{Kitaev:2003p4963,Kitaev:2006p4967}, and so besides practical incentives to develop a theory for the behavior of these models, there are also strong theoretical reasons to understand their possible technical implementations~\cite{DSV,
Jetal}.

The definition of the aforementioned class of models starts by partitioning all edges of the lattice into families indexed by some~$\alpha$; a generic edge in the $\alpha$-th family is then denoted by $\langle\br,\br'\rangle_\alpha$. The Hamiltonian invariably takes the form
\begin{equation}
\label{E:Ham}
\mathcal H:=-\sum_\alpha\sum_{\langle \br,\br'\rangle_\alpha} J_\alpha\, S_{\br}^{(\alpha)}S_{\br'}^{(\alpha)}
\end{equation}
with positive coupling constants, $J_\alpha>0$. The explicit meaning of the projections $S_{\br}^{(\alpha)}$ is then a matter of what specific model one wishes to consider. 

Two examples of interest have been studied earlier: the \emph{orbital compass model} (e.g., 
\cite{Khaliullin,Mishra:2004p4919,Dorier:2005p4928,Nussinov:2005p4929,Orus:2009p4933,Brzezicki:2010p4915}), where $S_{\br}^{(\alpha)}$, $\alpha=1,2,3$, are the corresponding Cartesian components of the quantum spin and $\langle \br,\br'\rangle_\alpha$ is an edge in the $\alpha$-th lattice direction, and the \emph{120-degree model} 
(e.g.,~\cite{Nussinov:2004p435,Biskup:2005p701,Biskup-Chayes-Starr}), where the meaning of the edge $\langle \br,\br'\rangle_\alpha$ is preserved but $S_{\br}^{(\alpha)}$, $\alpha=1,2,3$, now denotes the projections of the (three-component) spin~$\bS_{\br}$ onto the vectors
\begin{equation}
%\label{}
\boldsymbol v_1:=\textstyle (1,0),\quad \boldsymbol v_2:=\bigl(-\frac12,\frac{\sqrt3}2\bigr),\quad \boldsymbol v_3:=\bigl(-\frac12,-\frac{\sqrt3}2\bigr),
\end{equation}
i.e., $S_{\br}^{(\alpha)}:=\boldsymbol v_\alpha\cdot\bS_{\br}$.
Kitaev's model is defined similarly to the orbital compass model but the underlying graph is the honeycomb lattice. 

As usual, all models in the above class have a natural quantum version, where $\bS_{\br}$ is a three-component spin operator  ---   with a distinct irreducible matrix representation for each non-negative half integer   ---   and a classical version, where $\bS_{\br}$ is a vector \emph{a priori} uniformly distributed on the unit sphere in $\mathbb R^N$ (i.e., an $O(N)$-spin).

\subsection{The plaquette orbital model}
\noindent
Recently, an interesting variant of the orbital compass model has been proposed and studied by Wenzel and Janke~\cite{Wenzel:2009p4934}. Their model, which they termed the \emph{plaquette orbital model} (POM), is most naturally defined over the square lattice $\mathbb Z^2$, although generalizations to higher dimensions are straightforward. The index $\alpha$ takes only two values, $\alpha=1,2$, and the spin projections are defined as follows:
\begin{equation}
%\label{}
S_{\br}^{(1)}:=S_{\br}^x\quad\mbox{and}\quad S_{\br}^{(2)}:=S_{\br}^z,
\end{equation}
where, in agreement with Wenzel and Janke's notation, $(S_{\br}^x,S_{\br}^z)$ denote the Cartesian components of the vector-valued $O(2)$-spin $\bS_{\br}$ in the classical version while, in the quantum version, it denotes the corresponding pair of operators for the quantum spin.

\begin{figure}[ht]
\begin{center}
\epsfxsize=3.2in
\epsfbox{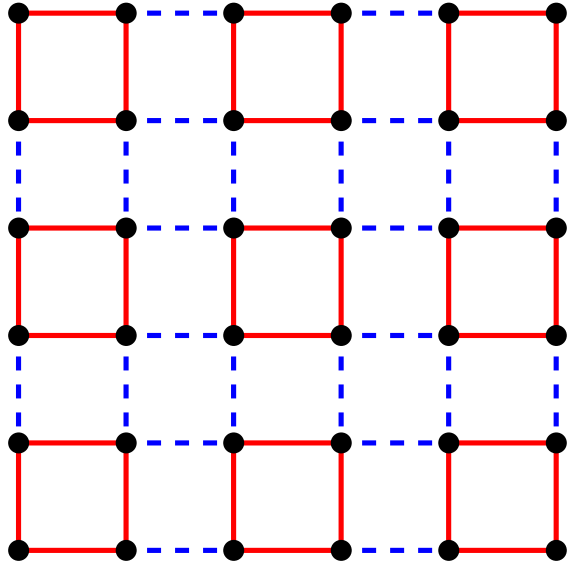} 
\end{center}
\noindent
%\hglue0.1\textwidth
\begin{minipage}{0.85\textwidth}
\small
\caption{The configuration underlying the definition of the plaquette orbital model. Here the $x$-components of the spins are coupled over the red (solid) edges and the $z$-components are coupled over the blue (dashed) edges.}
\end{minipage}
\end{figure}

What distinguishes this model from the previously discussed counterparts is the partitioning of the lattice: edges are designated as $x$ and $z$-types in an alternating fashion along each line of sites (copy of~$\mathbb Z$) inside $\mathbb Z^2$ so that all of the lattice faces (plaquettes) with the lower-left corner on the even sublattice of $\mathbb Z^2$ contain only edges of one type. Thus one gets the Hamiltonian in accord with the general form in Eq.~\eqref{E:Ham} with the edges of two types arranged into to a checkerboard pattern as in Fig.~1,
\begin{equation}
\label{E:1.4}
\mathcal H:=-J_1\sum_{\langle \br,\br'\rangle_x} \, S_{\br}^{x}S_{\br'}^{x}-J_2\sum_{\langle \br,\br'\rangle_z} \, S_{\br}^{z}S_{\br'}^{z}, 
\end{equation}
with the  $x$-components of the spins  coupled over the $x$-edges and $z$-components over the $z$-edges. The signs of the coupling constants are immaterial as they can always be absorbed into a sign change of the corresponding component on one of the sublattices.

In \cite{Wenzel:2009p4934}  Wenzel and Janke studied the POM numerically both in its classical and quantum version. For an order parameter they chose the plaquette energy,
\begin{equation}
\label{E:WJen}
\mathcal E_{\br}:=-\Bigl\{S_{\br}^{\alpha_1}S_{\br+\hate_1}^{\alpha_1}+S_{\br}^{\alpha_2}S_{\br+\hate_2}^{\alpha_2}+S_{\br+\hate_1}^{\alpha_3}S_{\br+\hate_1+\hate_2}^{\alpha_3}+S_{\br+\hate_2}^{\alpha_4}S_{\br+\hate_1+\hate_2}^{\alpha_4}\Bigr\},
\end{equation}
indexed by the position of the lower-left corner of the plaquette, where $\hate_1$ and $\hate_2$ are the unit vectors in the first and second coordinate direction and $\alpha_1,\dots,\alpha_4$ are either $x$ or $z$ depending on whether the corresponding edge is an $x$ or $z$-type, respectively. The simulations of \cite{Wenzel:2009p4934} indicated a clear onset of Ne\'el order in the plaquette energies at low temperatures for both the classical two-component ($O(2)$-spin) model, and the quantum, spin-$\ffrac12$ model. Explicitly, in one such state the plaquette energies on the $x$-plaquettes are low but those on the $z$-plaquettes are generally high and in another state the roles of $x$ and $z$-plaquettes are interchanged. (The mixed-type plaquettes end up with an intermediate energy in both circumstances.) However, no specific conclusion was attempted for other possible ordering mechanisms (magnetic order, orientational order, etc). 

\subsection{Goals and outline}
\noindent
In the present paper we wish to address the true nature of the phase transition in the POM by means of rigorous mathematical (i.e., analytic) methods that draw from earlier work on models of this kind~\cite{Nussinov:2004p435, Biskup:2005p701,Biskup:2004p697,Biskup-Chayes-Starr}. We will predominantly focus on the classical $O(2)$-spin version of the model as the quantum spin-$\ffrac12$ version poses technical difficulties that we do not yet know how to overcome. Nonetheless, thanks to the general theory~\cite{Biskup-Chayes-Starr}, the conclusions for the classical system permit straightforward extensions to quantum systems once the magnitude of the quantum spin (i.e., the quantity $\mathcal S$ such that $\hat{\bS}_{\br}\cdot \hat{\bS}_{\br}=\mathcal S(\mathcal S+1)$) is sufficiently large compared to the inverse temperature squared.

A key conceptual difference with the approach of Wenzel and Janke~\cite{Wenzel:2009p4934} is that instead of going via plaquette energies,  we directly attempt (and succeed in) proving \emph{orientational} long-range order (ORLO) of the spins in one of the two principal lattice directions. We then argue that the Ne\'el ordering in plaquette energies found in~\cite{Wenzel:2009p4934} is actually an \emph{artefact} of this ORLO: indeed, it is a direct consequence of the alignment of the spins along coordinate axes, the 2-periodicity of the interaction and the fact that the plaquette energies were not normalized to vanish in the ground states. In fact, with such normalization the Ne\'el order seems to disappear altogether.

The remainder of this paper is organized as follows: In Sect.~\ref{sec-II} we will discuss the ground states of the classical Hamiltonian and then state our principal results concerning the ORLO in the pure and diluted systems. In Sect.~\ref{sec-III}-\ref{sec-V} we provide full and reasonably self-contained mathematical proofs of these results. In Sect.~\ref{sec-VI} we discuss connections with the numerical findings and propose a potentially more efficient way to obtain samples of equilibrium configurations in this model. 

%\vglue1cm

\section{Rigorous results}
\label{sec-II}

\subsection{Ground states}
Our discussion of the results opens up with the description of the ground states.
We will focus on the situation in finite volumes with periodic boundary conditions.
Specifically, let $\mathbb T_N$ be the $N\times N$ torus  ---  obtained by periodizing the square $\{0,\dots,N-1\}\times\{0,\dots,N-1\}$  ---  and
assume that $N$ is even to reflect the natural period-2 nature of the interaction. Let $\mathcal H_N$ denote the Hamiltonian on
$\mathbb T_N$ which we define by \eqref{E:1.4} with the edges $\langle\br,\br'\rangle_\alpha$ restricted to nearest neighbor pairs (of the proper type) on~$\T_N$. As is common, we call a configuration $\bS=(\bS_{\br})_{\br\in\mathbb T_N}$ a \emph{ground state} of $\mathcal H_N$ if
\begin{equation}
\mathcal H_N(\bS)=\min_{\bS^{\prime}} \mathcal H_N(\bS^{\prime}).
\end{equation}
 Here we note that the minimum is attained  ---  and a ground state exists  ---  by the sheer fact that $\mathcal H_N$ is a bounded 
 and continuous function on a compact metric space. The issue is how many ground states there are and how they can be concisely described.
 
 A common feature of all models of the type \eqref{E:Ham} is abundance of symmetries with respect to simultaneous flips of (specific) components of the spins. An indisputable advantage of the POM over the other systems is that here the flips can be made locally.
 Explicitly, for $\br$  with both coordinates even, i.e., the site designating a lower left corner of an $x$-plaquette, let~$\boldsymbol{\varphi}_{\br}(\bS)$ be the configuration defined by
 \begin{equation}
\bigl[\boldsymbol{\varphi}_{\br}(\bS)\bigr]_{\br^\prime}^x:=
\cases{-S_{\br^\prime}^x &for $\br^\prime=\br, \br+\hate_1,\br+\hate_2, \br+\hate_1+\hate_2$,\\
S_{\br^\prime}^x &otherwise,\\}
\end{equation}
and
\begin{equation}
%\label{}
\bigl[\boldsymbol{\varphi}_{\br}(\bS)\bigr]_{\br^\prime}^z:=S_{\br^\prime}^z.
\end{equation}
For $\br$  with both coordinates odd (lower left corner of a $z$-plaquette),
the map $\boldsymbol{\varphi}_{\br}$ is defined in a similar manner  ---  with the $z$-components reflected instead of the $x$-components. A moment's thought now shows that
\begin{equation}
\label{E:sym}
\mathcal H_N(\boldsymbol{\varphi}_{\br}(\bS))=\mathcal H_N(\bS),
\end{equation}
i.e., $\bS\mapsto\boldsymbol{\varphi}_{\br}(\bS)$ is a symmetry of the model.

\begin{theorem}
\label{T:ground}
Suppose $J_1=J_2>0$. Then every ground state of $\Hcal_N$ can be obtained from a constant configuration, $\bS_{\br}\equiv\hate$, for some unit vector $\hate\in\R^2$, by successive applications of a subset of the maps~$\bigl(\boldsymbol{\varphi}_{\br}\bigr)$.

If $J_1>J_2> 0$, then all ground states arise (via applications of (${\boldsymbol{\varphi}}_{\br}$)) from $\bS_{\br}\equiv\be_1$, while if 
$J_2>J_1> 0$, then all ground states arise (again, via ($\boldsymbol{\varphi}_{\br}$)) from $\bS_{\br}\equiv\be_2$.
\end{theorem}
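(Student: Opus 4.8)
\emph{Proof strategy.} The plan is to pin down the ground-state energy by an edge-by-edge convexity estimate and then read off the geometry of the minimizers. First I would bound $\Hcal_N$ from below. For each component $\alpha\in\{x,z\}$ and each edge one has $-S_\br^{\alpha}S_{\br'}^{\alpha}\ge-\tfrac12\bigl((S_\br^{\alpha})^2+(S_{\br'}^{\alpha})^2\bigr)$, with equality precisely when $S_\br^{\alpha}=S_{\br'}^{\alpha}$. Since $\T_N$ carries $N^2$ edges of each type and each site is incident to exactly two $x$-edges and two $z$-edges (an immediate consequence of the checkerboard pattern), summing gives
\[
\Hcal_N(\bS)\ \ge\ -J_1\sum_{\br\in\T_N}(S_\br^x)^2-J_2\sum_{\br\in\T_N}(S_\br^z)^2 .
\]
If $J_1=J_2=:J$, the right-hand side equals $-J\sum_\br(\bS_\br\cdot\bS_\br)=-JN^2$; if $J_1>J_2$, rewrite it as $-J_2N^2-(J_1-J_2)\sum_\br(S_\br^x)^2\ge-J_1N^2$ using $(S_\br^x)^2\le1$; symmetrically for $J_2>J_1$. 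The constant configuration $\bS_\br\equiv\hate$ attains $-JN^2$, while $\bS_\br\equiv\be_1$ (resp.\ $\bS_\br\equiv\be_2$) attains $-J_1N^2$ (resp.\ $-J_2N^2$), so these bounds are sharp and identify the ground-state energies.

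Next I would extract the equality conditions. Any ground state satisfies $S_\br^x=S_{\br'}^x$ on every $x$-edge and $S_\br^z=S_{\br'}^z$ on every $z$-edge. The geometric observation is that the $x$-edges are exactly the edges lying inside the $x$-plaquettes (the $2\times2$ blocks with lower-left corner on the even sublattice), that these blocks partition $\T_N$, and symmetrically for the $z$-plaquettes; hence $\br\mapsto S_\br^x$ is constant on each $x$-plaquette and $\br\mapsto S_\br^z$ is constant on each $z$-plaquette. Now the combinatorial heart: every $z$-plaquette meets exactly four $x$-plaquettes, sharing exactly one site with each (the four $x$-plaquettes through its four corners), and on such a shared site the constraint $\bS_\br\cdot\bS_\br=1$ forces the common value of $(S^x)^2$ on those four $x$-plaquettes to equal $1$ minus the common value of $(S^z)^2$ on the $z$-plaquette. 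Running this over all $z$-plaquettes, and noting that ``two $x$-plaquettes touch a common $z$-plaquette'' is an adjacency relation making the family of $x$-plaquettes into a connected graph (it contains all nearest-neighbour pairs of the block lattice on $\T_N$), I conclude that $(S_\br^x)^2\equiv a^2$ and $(S_\br^z)^2\equiv 1-a^2$ for a single $a\in[0,1]$. Thus, when $J_1=J_2$, a ground state is precisely: a choice of $a$ (with $b:=\sqrt{1-a^2}$), a sign for each $x$-plaquette and a sign for each $z$-plaquette, with $\bS_\br$ having first component $\pm a$ governed by its $x$-plaquette and second component $\pm b$ governed by its $z$-plaquette.

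Finally I would match this description with the maps $\boldsymbol\varphi_\br$. By definition, $\boldsymbol\varphi_\br$ for $\br$ an even-sublattice corner flips $S^x$ on exactly the $x$-plaquette with lower-left corner $\br$ and changes nothing else, and for $\br$ an odd-sublattice corner it flips $S^z$ on exactly the corresponding $z$-plaquette; these operations commute and, because the two families of plaquettes are each partitions, act independently. Hence, starting from $\bS_\br\equiv(a,b)$ and applying $\boldsymbol\varphi_\br$ at each $x$-plaquette carrying a minus sign and at each $z$-plaquette carrying a minus sign produces exactly the prescribed ground state (and, since each $\boldsymbol\varphi_\br$ preserves $\Hcal_N$, every configuration so obtained is again a ground state), proving the first assertion. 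For $J_1>J_2$ the chain of inequalities is saturated only if in addition $(S_\br^x)^2=1$ for every $\br$, i.e.\ $\bS_\br=\pm\be_1$ everywhere with $S^z\equiv0$; these are precisely the configurations obtained from $\bS_\br\equiv\be_1$ by flipping $S^x$ on a subset of the $x$-plaquettes (applying $\boldsymbol\varphi_\br$ at odd corners does nothing since $S^z\equiv0$). The case $J_2>J_1$ is identical with $x$ and $z$ interchanged.

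The step I expect to be the main obstacle is the combinatorial one: verifying carefully the incidence pattern of $x$- and $z$-plaquettes on $\T_N$ (each $z$-plaquette touching four $x$-plaquettes in single sites, and the derived graph being connected), and making sure the $J_1=J_2$ argument faithfully captures the full one-parameter family of admissible amplitudes $a$, whereas for $J_1\ne J_2$ the last inequality collapses it to the single point $a=1$ (or $a=0$).
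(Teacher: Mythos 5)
Your proposal is correct and follows essentially the same route as the paper: your edge-by-edge bound $-S^\alpha_{\br}S^\alpha_{\br'}\ge-\tfrac12\bigl([S^\alpha_{\br}]^2+[S^\alpha_{\br'}]^2\bigr)$ is exactly the content of the rewrite in Lemma~\ref{lemma-H-rewrite}, and the equality analysis (components constant on monochromatic plaquettes, plus $[S^x_{\br}]^2\equiv1$ when $J_1>J_2$) matches the paper's. The only difference is bookkeeping: you propagate the squared amplitude $[S^x_{\br}]^2$ through the connected plaquette-adjacency graph before invoking the flips $\boldsymbol{\varphi}_{\br}$, whereas the paper first flips to nonnegative components and then shows the resulting configuration is constant by comparing neighbouring spins --- an immaterial reordering of the same ideas.
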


This statement is a precursor of the orientational LRO that we will establish for low (but non-zero) temperatures. The key problem there will be the fact that the ground-state degeneracy significantly increases at the symmetry point $J_1=J_2$. This appears to be a common feature for all models covered by the equation \eqref{E:Ham}.

We also remark that in infinite volume (i.e., the model defined on all of $\Z^2$), the structure of ground states is considerably more complicated. (In infinite volume, a ground state is a configuration whose any local change will result in a non-negative change of energy.) 
It is still true that any configuration obtained from  constant configurations by means of the maps~$({\boldsymbol{\varphi}}_{\br})$
is a ground state. However, further ground states can be constructed by imposing linear (or other) interfaces.

\subsection{Orientational order}
We proceed to discuss our results for positive temperatures. Let $\beta:=\frac1{k_{\mbox{\tiny B}} T}$ denote the inverse temperature.
In the canonical ensemble, the spins on $\T_N$ are distributed according to the \emph{Gibbs measure} $\mu_{N,\b}$ that is defined by
\begin{equation}
\label{E:Gibbs}
\mu_{N,\b}(\d \bS):= \frac{\texte^{-\b \Hcal_N(\bS)}}{Z_{N,\b}}\prod_{\br\in\T_N}\nu(\d \bS_{\br}),
\end{equation}
where $\nu$ denotes the uniform (Haar) measure on the unit circle in $\R^2$ normalized, for later convenience, to $\sqrt{2\pi}$. The normalization constant $Z_{N,\b}$ is the partition function.

In order to formulate the existence of a long range order in a mathematically precise way, one often considers Gibbs measures directly in infinite volume  ---  i.e., 
as measures on infinite configurations~$\bigl(\bS_{\br}\bigr)_{\br\in\Z^2}$. These are defined by means of the so-called DLR-condition \cite{Dobrushin1,Dobrushin2,Lanford-Ruelle} stating that the conditional distribution in any finite $\L\subset\Z^2$ 
given $\bigl(\bS_{\br}\bigr)_{\br\in\Z^2\setminus\L}$ takes the above form with $\Hcal_N$ replaced by the Hamiltonian in $\L$
under the boundary conditions $\bigl(\bS_{\br}\bigr)_{\br\in\Z^2\setminus\L}$. 

A standard reference to the corresponding mathematical theory of these measures is Georgii~\cite{Georgii}. We will now list the aspects that have a bearing on our problem.
It is a standard fact that, for compactly-supported spins, infinite-volume Gibbs measures can be extracted (as weak limits) from sequences of finite-volume, or even torus, Gibbs measures.
In particular, estimates on $\mu_{N,\b}$ that hold uniformly in $N$ readily yield corresponding estimates for infinite volume limits of $\{\mu_{N,\b}\}$.

In light of the period-2 nature of the interactions, a (Gibbs) measure $\mu$ will be called \emph{translation-invariant} if
$\mu(\sigma_{\a}(\mathcal A)) =\mu(\mathcal A)$ for any event $\mathcal A$, where
\begin{equation}
\bigl[\sigma_{\a}(\bS)\bigr]_{\br}:=\bS_{\br-2 \be_{\a}}, \quad \a=1,2,
\end{equation}
are the shift operators.
This measure is said to be \emph{ergodic} if $\mu(\mathcal A)$ is either zero or one for any event $\mathcal A$ that is invariant under the translations above, i.e., such that $\sigma_{\a}^{-1}(\mathcal A)= \mathcal A$
for $\a=1,2$. Ergodic measures have the distinguished feature that all block averages converge to the corresponding expectations
(i.e., expected value represents the average value in sufficiently large boxes) and that a typical sample from these measures thus has a fairly homogeneous structure at sufficiently large spatial scales.

We begin by ruling out magnetic ordering at any parameters of the model.

\begin{theorem}
\label{T:no-magnet}
Let $J_1, J_2\ge0$ and  $\b\ge0$. Then
\begin{equation}
\E_\mu\bigl(\bS_{\br}\bigr)=0
\end{equation}
for all infinite-volume Gibbs states $\mu$.
\end{theorem}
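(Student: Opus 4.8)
The plan is to exploit the local spin-flip symmetries $\boldsymbol{\varphi}_{\br}$ introduced in \eqref{E:sym} to show that each Cartesian component of $\E_\mu(\bS_{\br})$ must vanish separately. The key observation is that for $\br$ a lower-left corner of an $x$-plaquette, the map $\boldsymbol{\varphi}_{\br}$ leaves $\Hcal_N$ invariant (hence also the \emph{a priori} measure $\nu$, which is rotation- and reflection-invariant) while flipping the sign of $S_{\br'}^x$ at the four sites $\br'\in\{\br,\br+\hate_1,\br+\hate_2,\br+\hate_1+\hate_2\}$. Since the finite-volume Gibbs measure $\mu_{N,\b}$ is manifestly invariant under $\boldsymbol{\varphi}_{\br}$, we get $\E_{\mu_{N,\b}}(S_{\br'}^x)=\E_{\mu_{N,\b}}(-S_{\br'}^x)=0$ for any such $\br'$. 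An analogous argument using $z$-plaquette flips gives $\E_{\mu_{N,\b}}(S_{\br'}^z)=0$. Every site of $\T_N$ is a corner of some $x$-plaquette and of some $z$-plaquette, so both components of the magnetization vanish at every site, in finite volume.

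The second step is to pass to infinite-volume Gibbs states. For torus states this is immediate: as noted in the text just before Theorem~\ref{T:no-magnet}, any infinite-volume Gibbs measure can be obtained as a weak limit of finite-volume (torus) measures $\mu_{N,\b}$, and since $\bS_{\br}\mapsto S_{\br}^{x}$ and $\bS_{\br}\mapsto S_{\br}^{z}$ are bounded continuous functions, $\E_\mu(\bS_{\br})=\lim_N \E_{\mu_{N,\b}}(\bS_{\br})=0$. For a \emph{general} infinite-volume Gibbs state $\mu$ (defined via the DLR equations, not necessarily a torus limit) one argues directly from the DLR condition: fix a finite region $\L$ containing all of $\br,\br+\hate_1,\br+\hate_2,\br+\hate_1+\hate_2$, together with every edge incident to these four sites. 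The local flip $\boldsymbol{\varphi}_{\br}$ acts only inside $\L$ and leaves the Hamiltonian in $\L$ under \emph{any} fixed boundary condition invariant (because the plaquette it touches is of pure type and the flip changes the sign of both endpoints of each affected edge). Hence the conditional measure $\mu(\,\cdot\mid \bS_{\Z^2\setminus\L})$ is $\boldsymbol{\varphi}_{\br}$-invariant for $\mu$-a.e.\ boundary condition, which gives $\E_\mu(S_{\br}^x\mid \bS_{\Z^2\setminus\L})=0$ a.s., and thus $\E_\mu(S_{\br}^x)=0$; similarly $\E_\mu(S_{\br}^z)=0$.

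I do not expect any serious obstacle here: the proof is essentially a one-line symmetry argument plus a routine DLR/weak-limit bookkeeping step. The only point requiring mild care is verifying that the flip $\boldsymbol{\varphi}_{\br}$ genuinely leaves the \emph{local} Hamiltonian invariant for arbitrary boundary conditions and not merely the periodic one --- but this follows from the same elementary computation behind \eqref{E:sym}, since all edges touched by $\boldsymbol{\varphi}_{\br}$ belong to a single plaquette of uniform type, so each affected coupling term $S^{x}_{\br''}S^{x}_{\br'''}$ has both its spins flipped and is therefore unchanged. One should also note that the claim holds with no restriction on $\b$ or on the sign relation between $J_1$ and $J_2$, exactly because the symmetry $\boldsymbol{\varphi}_{\br}$ is present for all parameter values; this is why Theorem~\ref{T:no-magnet} is stated for all $J_1,J_2\ge 0$ and all $\b\ge0$.
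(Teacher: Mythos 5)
Your proposal is correct and takes essentially the same route as the paper: the paper's proof just observes that every Gibbs measure is invariant under each local plaquette flip $\boldsymbol{\varphi}_{\br}$ (your DLR verification is precisely the ``easy to check'' step) and that every site belongs to one $x$- and one $z$-plaquette, so each spin component has a distribution symmetric about zero. The only nit is your claim that every infinite-volume Gibbs state arises as a weak limit of torus measures (the text only asserts the converse direction, that such limits are Gibbs states), but this is harmless since your direct DLR argument already covers the general case.
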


The absence of magnetic order is of course a consequence of the symmetry $\bS\mapsto\boldsymbol{\varphi}_{\br}(\bS)$. This result is prototypical for all models of the kind \eqref{E:Ham}.
It does not, however, exclude the existence of an orientational long-range order.

\begin{theorem}
\label{T:LRO}
Suppose $J_1=J_2>0$. Then for each $\delta>0$, there is $\b_0=\b_0(\delta)$ and a sequence $({\e}_N)$ with $\e_N\to 0$ such that for each $\b\ge\b_0(\delta)$,
\begin{equation}
\label{E:musym}
\mu_{N,\b}\biggl(\,\sum_{\br\in\T_N}\bigl[S_{\br}^{\alpha}\bigr]^2\ge|\T_N|(1-\delta)\biggr)\ge \frac12-\e_N,\quad \a=x,z.
\end{equation}
In particular, for each $\b\ge\b_0(\delta)$, there exist two translation-invariant, ergodic infinite-volume Gibbs measures, $\mu_\b^{x}$ and $\mu_\b^{z}$,
such that
\begin{equation}
\label{E:ES^2x}
\E_{\mu_{\b}^{\alpha}}\bigl(\bigl[S_{\br}^{\alpha}\bigr]^2\bigr)\ge 1-\delta,\quad \a=x,z. 
\end{equation}
In addition, for any $\beta\ge\beta_0(\delta)$ and any translation-invariant, ergodic infinite-volume Gibbs state $\mu$ at inverse temperature $\beta$,
\begin{equation}
\label{E:ES^2}
\E_\mu\bigl(\bigl[S_{\br}^{\alpha}\bigr]^2\bigr)\in[0,\delta]\cup [1-\delta,1],\quad \a=x,z.
\end{equation}
\end{theorem}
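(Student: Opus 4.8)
The plan is to derive the finite-volume estimate \eqref{E:musym} by the reflection-positivity/chessboard method together with a spin-wave (``order-by-disorder'') free-energy computation, following the scheme developed for the 120-degree and orbital-compass models in \cite{Biskup:2005p701,Biskup:2004p697}; the two infinite-volume statements \eqref{E:ES^2x}--\eqref{E:ES^2} then follow from \eqref{E:musym} by compactness and ergodic decomposition. The first step is to record that $\mu_{N,\b}$ is reflection positive with respect to reflections of $\T_N$ through lines passing between neighboring sites (``edge-midpoint'' planes), in both coordinate directions. Since $\Hcal_N$ is a sum of single-bond terms $-J_\alpha S_\br^\alpha S_{\br'}^\alpha$ with positive couplings, a bond crossing such a plane contributes exactly the ferromagnetic (hence reflection-admissible) term $-J_\alpha S_\br^\alpha S_{\theta\br}^\alpha$, and one checks directly that every edge-midpoint reflection maps the $x/z$ edge pattern onto itself; the standard chessboard estimate then applies and bounds the $\mu_{N,\b}$-probability of any collection of events occurring on disjoint blocks by the geometric mean of the ``block partition functions'' obtained by tiling $\T_N$ periodically with the corresponding block event.

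Next I would tile $\T_N$ into congruent blocks of a fixed side $\ell=\ell(\delta)$, chosen large, each containing equally many $x$- and $z$-plaquettes (so that the ground-state energy of a block is the same for \emph{all} ground states, by Theorem~\ref{T:ground}), and discretize the circle of spin directions into finitely many arcs, singling out the arc $\Acal_x$ around angle $0$ and the arc $\Acal_z$ around $\tfrac\pi2$. A block is called \emph{bad} if its energy exceeds the above ground-state value by more than a fixed $\e_0>0$; since a configuration bad at every block of a tiling then has total energy at least $E_{\min}+c\e_0|\T_N|$, the relevant constrained partition function is at most $\texte^{-c\b|\T_N|}$ and bad blocks are rare for large $\b$. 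On the complementary event a rigidity lemma --- using that domain walls or any appreciable local spread of the spins cost block energy growing with $\ell$, hence are excluded once $\ell$ is fixed large --- forces all spins of the block to lie within $\delta'$ of a single direction, which, modulo the local flips $\boldsymbol{\varphi}_\br$ (these preserve $[S_\br^x]^2=1-[S_\br^z]^2$, i.e.\ the \emph{unoriented} spin direction), labels the block by one of the arcs. The decisive point is the \emph{order-by-disorder estimate}: for any arc other than $\Acal_x,\Acal_z$, the periodic tiling by a block of that type has free energy larger than that of the $\Acal_x$- or $\Acal_z$-tiling by a positive amount, uniformly in large $\b$. This follows from a Gaussian computation: writing $\bS_\br=(\cos(\theta+\phi_\br),\sin(\theta+\phi_\br))$ and expanding $\Hcal_N$ to second order in $\phi$ produces a positive semidefinite quadratic form $Q_\theta$ whose determinant on the complement of its kernel --- and hence the fluctuation free energy $\tfrac1{2\b}\log\det(\b Q_\theta)$ --- is strictly \emph{larger} for $\theta$ away from the coordinate directions, because at $\theta\in\{0,\tfrac\pi2\}$ the form $Q_\theta$ acquires extra nearly flat modes; quantitatively, the per-site gap is of order $1/\b$, so tiling by an intermediate-arc block costs a factor bounded by $\texte^{-c\ell^2}$, which is as small as desired by taking $\ell$ large. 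Making this rigorous --- matching upper and lower bounds on the block partition functions with uniform control of the anharmonic remainder --- is the technical heart and proceeds as in \cite{Biskup:2005p701,Biskup:2004p697}.

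Granted the two estimates, the chessboard bound shows that, with $\mu_{N,\b}$-probability at least $1-\e_N$ (with $\e_N\to0$ after $\b\to\infty$), every block is good and of type $\Acal_x$ or $\Acal_z$, and a Peierls-type contour argument then forces one of these two types to occupy a fraction at least $1-\delta$ of $\T_N$: any boundary between an $\Acal_x$-region and an $\Acal_z$-region must run through bad or intermediate-arc blocks, and a contour of length $L$ of such blocks carries weight at most $\texte^{-cL}$, so long contours are summably rare once $\b$ and $\ell$ are large. On that event $\sum_\br[S_\br^\alpha]^2\ge|\T_N|(1-\delta)$ for the appropriate $\alpha$. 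Since $\sum_\br[S_\br^x]^2+\sum_\br[S_\br^z]^2=|\T_N|$ identically, the events for $\alpha=x$ and $\alpha=z$ are disjoint for $\delta<\tfrac12$; and since (here $J_1=J_2$ is used) the diagonal shift $\br\mapsto\br+\hate_1+\hate_2$ composed with the exchange $S^x\leftrightarrow S^z$ is a symmetry of $\mu_{N,\b}$ interchanging these two events, each has probability $\tfrac12-\e_N$, which is \eqref{E:musym}. For \eqref{E:ES^2x} one takes a weak subsequential limit $\mu_\b^\alpha$ of the conditioned measures $\mu_{N,\b}\bigl(\,\cdot\mid\sum_\br[S_\br^\alpha]^2\ge|\T_N|(1-\delta)\bigr)$ --- it is translation-invariant, a Gibbs state, and satisfies $\E_{\mu_\b^\alpha}([S_\br^\alpha]^2)\ge1-\delta$ --- and passes to an ergodic component (enlarging $\delta$ slightly if needed), which still satisfies the bound by \eqref{E:ES^2}. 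Finally \eqref{E:ES^2} holds because in any translation-invariant ergodic $\mu$ the averages $|\L|^{-1}\sum_{\br\in\L}[S_\br^\alpha]^2$ converge $\mu$-a.s.\ to $\E_\mu([S_\br^\alpha]^2)$, while the uniform density bounds on bad and intermediate-arc blocks --- which extend from the torus to an arbitrary infinite-volume Gibbs state by the DLR equations and the robustness of the chessboard bounds --- confine this average to $[0,\delta]\cup[1-\delta,1]$ with probability arbitrarily close to $1$.

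The main obstacle is the order-by-disorder estimate. Because the ground state is continuously degenerate at $J_1=J_2$, each well $Q_\theta$ is parabolic only transverse to a soft subspace and the naive Gaussian partition function diverges; extracting the positive free-energy gap requires a careful choice of the block scale $\ell$ relative to $\b$, a clean separation of the long-wavelength (spin-wave) modes from the short ones, and uniform-in-$\b$ control of the non-Gaussian corrections --- this is the delicate step, with the remaining ingredients comparatively routine, the one bookkeeping point deserving care being to express the block labels and contours entirely in terms of the flip-invariant (unoriented) spin directions, so that the combinatorics is unaffected by the abundant local symmetries $\boldsymbol{\varphi}_\br$.
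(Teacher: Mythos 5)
Your overall strategy coincides with the paper's: reflection positivity through bond-bisecting planes, chessboard estimates, a good/bad block decomposition with the bad event split into an energetic part and a spin-wave part, a Gaussian computation showing that blocks aligned with intermediate directions carry a free-energy penalty of order $B^2$ per block (the paper's $F(\theta)$, Propositions~\ref{P:propertiesofM}--\ref{P:Gaussian}), a Peierls-type argument against coexistence of the two good types, the lattice symmetry exchanging $x$ and $z$ to split the probability evenly, and a BK2-type forbidden-gap argument for the infinite-volume statements. One step, however, would fail as you state it: with a disjoint tiling into $B$-blocks and per-block notions of ``bad'', it is not true that any boundary between an $\Acal_x$-region and an $\Acal_z$-region must run through bad or intermediate-arc blocks. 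A configuration that is (modulo flips) close to $\pm\hate_1$ on one half of the torus and to $\pm\hate_2$ on the other, with the interface lying exactly along block boundaries, contains no bad block at all: the energetically charged interface bonds have their two endpoints in different blocks and are invisible to every block event (both in your definition of block energy and in the paper's $\BBE$, which only uses bonds with both endpoints in $\Lambda_B$). The paper's proof addresses exactly this: it introduces the events $R_\Lambda$ (a positive fraction of blocks on $\p\Lambda$ bad) and observes that if a $\Gcal_x$-block is adjacent to a $\Gcal_z$-block, then the configuration \emph{shifted} by two lattice units is bad on a block straddling the interface; translation invariance of $\mu_{N,\beta}$ then allows the chessboard/Peierls bound to be applied to the shifted events $\sigma_j^{\pm1}(R_\Lambda)$. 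Your contour estimate needs this device (or overlapping/straddling blocks) to be correct.

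A second, smaller point: defining $\mu_\beta^{x}$ as a weak limit of $\mu_{N,\beta}$ conditioned on the global density event $\{\sum_{\br}[S_{\br}^x]^2\ge(1-\delta)|\T_N|\}$ does not obviously produce a Gibbs state, since the conditioning event involves the spins in every finite window and the DLR property is not preserved under such conditioning without an extra argument. The standard route, and the one the paper implicitly takes by invoking \cite{BK2}, is to take the weak limit of the unconditioned torus measures, pass to the ergodic decomposition, and combine \eqref{E:musym} with the forbidden-gap statement \eqref{E:ES^2} to exhibit ergodic components of both types. With these two (standard) patches your outline matches the paper's proof, including the correct identification of the spin-wave estimate as the technical heart.
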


Notice that \eqref{E:musym} states that typical configurations sampled from $\mu_{N,\b}$ on a large torus have a majority of all spins 
aligned either along direction $\pm\hate_1$ or direction $\pm\hate_2$
(with both orientations equally present thanks to the symmetries \eqref{E:sym}).
The inequality \eqref{E:ES^2x} is a version of this fact 
in
infinite volume and it manifestly demonstrates the occurrence of a phase transition.
The statement \eqref{E:ES^2} in turn implies that only the ground states oriented along the principal axes are stable under thermal perturbations. The infinite degeneracy at the symmetry point for $\beta=\infty$ is thus reduced to a two-fold degeneracy%%%%%
\footnote[1]{A word of caution: Strictly speaking, the rigorous statement does not rule out the existence of additional extremal translation-invariant Gibbs states apart from those above.
However, an overwhelming majority of the spins in a typical configuration in any such state will be close either to $\pm\hate_1$ or to $\pm\hate_2$.} once $\beta<\infty$.
Away from the symmetry point we have the following:

\begin{theorem}
\label{T:away-from-symmetry}
If $J_1>J_2>0$ and  $\b\ge\b_0(\delta)$, then in all translation-invariant, ergodic infinite-volume Gibbs states $\mu$,
\begin{equation}
\E_\mu\bigl(\bigl[S_{\br}^{x}\bigr]^2\bigr)\ge 1-\delta, \quad\mbox{ for all } \br\in\Z^2.
\end{equation}
Similarly for $\E_\mu\bigl(\bigl[S_{\br}^{z}\bigr]^2\bigr)$ when $J_2>J_1>0$ and  $\b\ge\b_0(\delta)$.
\end{theorem}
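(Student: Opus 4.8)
\noindent\emph{Proof proposal.}
Because the ground-state degeneracy collapses away from the symmetry point (Theorem~\ref{T:ground}), I would not invoke the reflection-positivity/chessboard machinery needed for Theorem~\ref{T:LRO}, but argue instead through an elementary energy estimate. The crux is the pointwise lower bound: for every configuration $\bS$ on $\T_N$,
\begin{equation}
\label{E:energy-bound}
\Hcal_N(\bS) \ge -J_1|\T_N| + (J_1-J_2)\sum_{\br\in\T_N}\bigl[S^z_\br\bigr]^2 ,
\end{equation}
with equality for $\bS\equiv\be_1$, so that $-J_1|\T_N|$ is the ground-state energy. To obtain \eqref{E:energy-bound}, use that on $\T_N$ the $x$-edges and the $z$-edges each number $|\T_N|$ and every site meets exactly two of each; write each $x$-edge term as $-J_1S^x_\br S^x_{\br'}=-J_1+J_1(1-S^x_\br S^x_{\br'})$ and bound $1-S^x_\br S^x_{\br'}\ge\tfrac12\bigl((S^z_\br)^2+(S^z_{\br'})^2\bigr)$, bound each $z$-edge term by $-J_2S^z_\br S^z_{\br'}\ge-\tfrac{J_2}{2}\bigl((S^z_\br)^2+(S^z_{\br'})^2\bigr)$, and sum.

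From \eqref{E:energy-bound} the rest is routine. For the torus measures, put $g_N(\b):=\log\int e^{-\b(\Hcal_N+J_1|\T_N|)}\prod_\br\nu(\d\bS_\br)$; this is convex and decreasing with $-g_N'(\b)=\E_{\mu_{N,\b}}(\Hcal_N+J_1|\T_N|)$, and a Gaussian computation around the $\be_1$-type ground states --- whose Hessian $2J_1\,\id-J_2A_z$, with $A_z$ the ($2$-regular, hence $\|A_z\|\le2$) adjacency matrix of the $z$-edge subgraph, lies between $2(J_1-J_2)\,\id>0$ and $2(J_1+J_2)\,\id$ uniformly in $N$ --- yields $g_N(\b)=-\tfrac12|\T_N|\log\b+O(|\T_N|)$ with $N$-independent implied constant, the upper bound following via \eqref{E:energy-bound} from the product bound $\int e^{-\b(J_1-J_2)(S^z_\br)^2}\nu(\d\bS_\br)=O(\b^{-1/2})$. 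A secant-slope bound for the convex $g_N$ (say between $\b/2$ and $\b$) then gives $\E_{\mu_{N,\b}}(\Hcal_N+J_1|\T_N|)\le C(J_1,J_2)|\T_N|/\b$; dividing by $|\T_N|$, inserting \eqref{E:energy-bound}, and using that $\mu_{N,\b}$ is invariant under the period-$2$ translations (so $\E_{\mu_{N,\b}}\bigl[S^z_\br\bigr]^2$ is at most four times its spatial average) gives $\E_{\mu_{N,\b}}\bigl[S^z_\br\bigr]^2\le 4C/((J_1-J_2)\b)$, which is $\le\delta$ once $\b\ge\b_0(\delta)$; since $[S^x_\br]^2+[S^z_\br]^2=1$ this is the assertion for $\mu_{N,\b}$ and, on passing to the limit, for its infinite-volume limits. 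For an arbitrary translation-invariant Gibbs state $\mu$ one argues identically through the Gibbs variational principle~\cite{Georgii}: it gives $s(\mu)-\b\,e(\mu)=P(\b)$ for the pressure $P$, whence $e(\mu)\le(\sigma_0-P(\b))/\b$ with $\sigma_0:=\sup_\nu s(\nu)<\infty$, and the same elementary two-sided bounds on $P$ force $e(\mu)+J_1\le C'(J_1,J_2)\log\b/\b$; feeding this into \eqref{E:energy-bound} read off per site yields $\E_\mu\bigl[S^z_\br\bigr]^2\le\delta$ for every $\br$ once $\b\ge\b_0(\delta)$. The case $J_2>J_1$ is the mirror image with $x$ and $z$ interchanged.

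I do not anticipate a genuine obstacle: this is the non-degenerate, ``easy'' counterpart of Theorem~\ref{T:LRO}, and what remains is bookkeeping. Two points want care. First, extending to \emph{all} translation-invariant Gibbs states (rather than just infinite-volume limits of torus measures) goes through the variational principle --- legitimate since the interaction is finite-range and continuous --- together with the elementary but correctly normalized two-sided Gaussian bounds on the pressure near the now-isolated $\be_1$-type ground states. Second, the sharp constant $J_1-J_2$ in \eqref{E:energy-bound} is half the spectral gap $2(J_1-J_2)$ of the ground-state Hessian --- exactly the quantity that vanishes at $J_1=J_2$, where one is thrown back onto the heavier reflection-positivity argument; its strict positivity here is precisely what keeps all the above entropy estimates uniform in $N$ and renders $\b_0(\delta)$ explicit.
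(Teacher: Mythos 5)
Your proposal is correct in substance, but it takes a genuinely different route from the paper. The paper does not re-do any estimate away from the symmetric point: it deduces Theorem~\ref{T:away-from-symmetry} from Theorem~\ref{T:LRO} by a convexity argument in the coupling --- convexity of $J_1\mapsto\log Z_N$ makes the expectation of the $x$-plaquette energy $\Ecal^x$ monotone in $J_1$, so the near-saturation of the four $x$-bonds produced by the chessboard argument at the symmetric point $(J_2,J_2)$ is transferred to every translation-invariant, ergodic Gibbs state at $(J_1,J_2)$ with $J_1>J_2$, and $\E_\mu([S^x_{\br}]^2)\ge1-2\delta$ follows. You instead bypass the reflection-positivity machinery altogether: your pointwise bound $\Hcal_N(\bS)\ge -J_1|\T_N|+(J_1-J_2)\sum_{\br}[S^z_{\br}]^2$ is correct (the inequality $1-S^x_{\br}S^x_{\br'}\ge\frac12([S^z_{\br}]^2+[S^z_{\br'}]^2)$ reduces to $\frac12(S^x_{\br}-S^x_{\br'})^2\ge0$, and the edge count works because each site meets exactly two $x$- and two $z$-edges), and together with the two-sided pressure asymptotics $P(\b)=\b J_1-\frac12\log\b+O(1)$ and the Gibbs variational principle it gives $\E_\mu([S^z_{\br}]^2)=O(\log\b/\b)$ in every periodic translation-invariant Gibbs state; the variational-principle step is legitimate for this finite-range interaction after the standard regrouping that turns period-2 invariance into genuine translation invariance, and your Gaussian lower bound on the constrained free energy can even be replaced by the cruder restriction $|\vartheta_{\br}|\le\b^{-1/2}$, which already yields the $-\frac12\log\b$ per site without any Hessian input. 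What your approach buys is an elementary, self-contained and quantitative proof (explicit $O(\log\b/\b)$ decay of the transverse component) that does not lean on Theorem~\ref{T:LRO} at all.

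What it gives up is uniformity near the symmetric point, and this is the one place where your proof does not literally deliver the statement as written. All your constants carry a factor $(J_1-J_2)^{-1}$ (and the single-site integral an extra $\log\frac1{J_1-J_2}$), so your threshold is a $\b_0(\delta,J_1-J_2)$ that diverges as $J_1\downarrow J_2$, exactly as you note at the end. In the paper, $\b_0(\delta)$ in Theorem~\ref{T:away-from-symmetry} is the threshold of Theorem~\ref{T:LRO} (effectively at coupling $J_2$), and the convexity step makes the conclusion hold uniformly in $J_1>J_2$ at that fixed threshold. So either state your version with the coupling-dependent threshold, or graft your energy bound onto the paper's monotonicity-in-$J_1$ argument if the uniform threshold is wanted; as a proof of the qualitative claim for fixed $J_1>J_2>0$, your argument is sound.
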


The above results imply the existence of ORLO throughout the part of the quadrant in $(J_1,J_2)$-plane bounded away from zero. 

\subsection{Quantum systems}
While our methods currently seem unable to treat the quantum spin-$\ffrac12$ version of POM that 
was studied numerically in ~\cite{Wenzel:2009p4934}, the control of the classical system can be extended to the quantum problem with large spin. This is achieved essentially by plugging into the main result of \cite{Biskup-Chayes-Starr}. We proceed to introduce the technical aspects of the quantum POM that are necessary to state the relevant theorem.

In the quantum POM, the spins ${\hbS}_{\br}$ are\ three-component operators 
${\hbS}_{\br}=({\hS}_{\br}^x, {\hS}_{\br}^y,{\hS}_{\br}^z)$ satisfying the commutation rules of the Lie algebra $\su(2)$,
\begin{equation}
\bigl[\hS_{\br}^j,\hS_{\br^\prime}^k\bigr]=2\texti \hS_{\br}^\ell \delta_{\br,\br^\prime}
\end{equation}
for any cyclic permutation $(j,k,\ell)$ of $(x,y,z)$. We will work with the $(2\Scal+1)$-dimensional irreducible
representation of $\su(2)$, where $\Scal\in\{0,\frac12,1,\frac32,\dots\}$.
This representation is best discussed by means of the spin raising and lowering operators 
\begin{equation}
\hS_{\br}^{\pm}:= \hS_{\br}^x\pm\texti\hS_{\br}^y.
\end{equation}
The Hilbert space is the linear span of vectors 
\begin{equation}
|\dots M_{\br}\dots \rangle:= \bigotimes_{\br} |M_{\br}\rangle, \quad  M_r=-\Scal,-\Scal+1,\dots, \Scal-1,\Scal,
\end{equation}
on which the operators $\hS_{\br}^z, \hS_{\br}^+,\hS_{\br}^-$ act as follows
\begin{equation}
\label{E:Spm}
\eqalign{
\hS_{\br}^z|\dots M_{\br}\dots \rangle=M_{\br}|\dots M_{\br}\dots \rangle,\\
\hS_{\br}^+|\dots M_{\br}\dots \rangle=\sqrt{\Scal(\Scal+1)-M_{\br}(M_{\br}+1)}\,|\dots M_{\br}+1\dots \rangle,\\
\hS_{\br}^-|\dots M_{\br}\dots \rangle=\sqrt{\Scal(\Scal+1)-M_{\br}(M_{\br}-1)}\,|\dots M_{\br}-1\dots \rangle.}
\end{equation}
The Hamiltonian $\hHcal_N$ on the torus $\T_N$ is then the operator
\begin{equation}
\label{E:quantum-Ham}
\hHcal_N:=-J_1\Scal^{-2}\sum_{\langle \br,\br'\rangle_x} \, \hS_{\br}^{x}\hS_{\br'}^{x}-J_2\Scal^{-2}\sum_{\langle \br,\br'\rangle_z} \, \hS_{\br}^{z}\hS_{\br'}^{z}.
\end{equation}
The $y$-component of the spin does not enter the interaction. The scaling by $\Scal^{-2}$ ensures that the Hamiltonian is, for each~$N$, bounded uniformly in~$\Scal\ge\ffrac12$.

The thermodynamical equilibrium is described by means of a linear functional $\langle \boldsymbol{\cdot}\rangle_{N,\b}$ on the algebra $\Ascr_N$ of all bounded operators generated 
(via the spectral theorem) by the operators $\hS_{\br}^z, \hS_{\br}^+,\hS_{\br}^-$, $\br\in\T_N$.
Explicitly,
\begin{equation}
\langle \hat A\rangle_{N,\b}=\frac{\Tr(\hat A \texte^{-\b\hHcal_N})}{\Tr ( \texte^{-\b\hHcal_N})},\quad  \hat A\in\Ascr_N.
\end{equation}
Our main result for the quantum system is now as follows:

\begin{theorem}
\label{T:QLRO}
Suppose $J_1=J_2>0$. Then for each $\delta>0$, there are constants 
$c=c(\delta)>0$ and $\b_0=\b_0(\delta)<\infty$  such that for each $\b\ge\b_0(\delta)$ and $\Scal\ge c\b^2$, we have
\begin{equation}
\label{E:QLRO}
\Scal^{-4}\Bigl\langle\bigl[\hS_{\br}^{\alpha}\bigr]^2\bigl[\hS_{\br^\prime}^{\alpha}\bigr]^2\Bigr\rangle_{N,\b}\ge \frac12-\delta,\quad \a=x,z,
\end{equation}
while
\begin{equation}
\label{E:QLRO22}
\Scal^{-4}\Bigl\langle\bigl[\hS_{\br}^{y}\bigr]^2\bigl[\hS_{\br^\prime}^{y}\bigr]^2\Bigr\rangle_{N,\b}<\delta,
\end{equation}
and 
\begin{equation}
\label{E:QLRO13}
\Scal^{-4}\biggl|\Bigl\langle\bigl[\hS_{\br}^{x}\bigr]^2\bigl[\hS_{\br^\prime}^{z}\bigr]^2\Bigr\rangle_{N,\b}\biggr|<\delta,
\end{equation}
uniformly in $\br,\br^\prime\in\T_N$ provided $N$ is sufficiently large.
\end{theorem}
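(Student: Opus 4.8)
The plan is to reduce Theorem~\ref{T:QLRO} to the classical results of Sect.~\ref{sec-II} via the Berezin--Lieb/coherent-state comparison of Biskup, Chayes and Starr~\cite{Biskup-Chayes-Starr}. The first task is to identify the classical reference model. Writing the spin-$\Scal$ $\su(2)$ coherent states as unit vectors $\boldsymbol{\Omega}_{\br}\in S^2\subset\R^3$, one has $\langle\boldsymbol{\Omega}|\hS_{\br}^{\alpha}|\boldsymbol{\Omega}\rangle=\Scal\,\Omega_{\br}^{\alpha}$ and $\langle\boldsymbol{\Omega}|[\hS_{\br}^{\alpha}]^2|\boldsymbol{\Omega}\rangle=\Scal^2(\Omega_{\br}^{\alpha})^2+O(\Scal)$, so that both the upper and the lower symbol of $\Scal^{-2}\hS_{\br}^{\alpha}\hS_{\br'}^{\alpha}$ equal $\Omega_{\br}^{\alpha}\Omega_{\br'}^{\alpha}$ up to an additive error $O(\Scal^{-1})$ per bond. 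Hence the classical model attached to $\hHcal_N$ is
\[
\mathcal H_N^{\rm cl}(\boldsymbol{\Omega}):=-J_1\sum_{\langle\br,\br'\rangle_x}\Omega_{\br}^x\Omega_{\br'}^x-J_2\sum_{\langle\br,\br'\rangle_z}\Omega_{\br}^z\Omega_{\br'}^z,
\]
i.e., the \emph{$O(3)$} plaquette orbital model in which the $y$-component is entirely decoupled; and, after the stated $\Scal^{-4}$ normalization and up to $O(\Scal^{-1})$, the left-hand sides of~\eqref{E:QLRO}--\eqref{E:QLRO13} are classical correlation functions of the observables $(\Omega^{\alpha})^2$ in the Gibbs measure of $\mathcal H_N^{\rm cl}$.

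\emph{Step 1 --- classical long-range order for $\mathcal H_N^{\rm cl}$.} A single-site optimization shows that every ground state of $\mathcal H_N^{\rm cl}$ has all $y$-components equal to zero (the energy depends on $\boldsymbol{\Omega}_{\br}$ only through $\Omega_{\br}^x,\Omega_{\br}^z$, so it is lowered by rotating any nonzero $y$-component into the $xz$-plane), whence its ground states are exactly those described in Theorem~\ref{T:ground}. Reflection positivity with respect to the lattice reflections used in the proof of Theorem~\ref{T:LRO} is insensitive to the free $y$-component, so the infrared bound, the spin-wave estimates and the chessboard/Peierls argument all carry over to $\mathcal H_N^{\rm cl}$ with essentially no change; the only new ingredient is a spin-wave (mass) estimate $\E((\Omega_{\br}^y)^2)\le\delta$ at low temperature, of the same nature as the bounds already used. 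Together with $(\Omega_{\br}^x)^2+(\Omega_{\br}^y)^2+(\Omega_{\br}^z)^2=1$ this forces a majority of spins to lie within $\delta$ of $\pm\hate_1$ or within $\delta$ of $\pm\hate_3$, and, reading off the chessboard estimate, one obtains the classical counterparts of~\eqref{E:QLRO}--\eqref{E:QLRO13}, namely $\E\bigl((\Omega_{\br}^{\alpha})^2(\Omega_{\br'}^{\alpha})^2\bigr)\ge\ffrac12-\delta$, $\E\bigl((\Omega_{\br}^y)^2(\Omega_{\br'}^y)^2\bigr)<\delta$ and $\bigl|\E\bigl((\Omega_{\br}^x)^2(\Omega_{\br'}^z)^2\bigr)\bigr|<\delta$, uniformly in $\br,\br'$ once $N$ is large.

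\emph{Step 2 --- quantum-to-classical transfer.} The Berezin--Lieb inequalities sandwich $\Tr(\texte^{-\b\hHcal_N})$ between the classical partition functions of the upper- and lower-symbol Hamiltonians, each of which differs from $\mathcal H_N^{\rm cl}$ by a perturbation of norm $O(|\T_N|\,\Scal^{-1})$; applying the same device to the local observables appearing in~\eqref{E:QLRO}--\eqref{E:QLRO13} compares their quantum expectations with the corresponding classical ones. The main theorem of~\cite{Biskup-Chayes-Starr} turns this into the quantitative statement that the per-site free energies and --- crucially --- the densities of ``bad'' plaquettes entering the chessboard estimate of the quantum and of the classical model differ by $O(\b^2/\Scal)$, uniformly in $N$; demanding this error to be $\le\delta$ is exactly the hypothesis $\Scal\ge c(\delta)\b^2$. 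Feeding the bounds of Step~1 through this comparison and re-running the chessboard/Peierls argument then yields~\eqref{E:QLRO}--\eqref{E:QLRO13}.

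\emph{Main obstacle.} Everything else being bookkeeping, the substantive point is to match the particular ``good/bad event'' decomposition underlying the classical proof of Theorem~\ref{T:LRO} (and of its $O(3)$ extension) to the hypotheses under which~\cite{Biskup-Chayes-Starr} certifies the transfer, and then to propagate the $O(\b^2/\Scal)$ discrepancy through the Peierls estimate precisely enough to extract the explicit inequalities~\eqref{E:QLRO}--\eqref{E:QLRO13} with definite constants $c(\delta),\b_0(\delta)$ --- rather than a bare assertion that a phase transition occurs. A minor further subtlety is that the upper- and lower-symbol Hamiltonians need not be exactly reflection positive; following~\cite{Biskup-Chayes-Starr} one absorbs the $O(\Scal^{-1})$ non-RP remainder into the error budget, which is harmless since it is dominated by the $\b^2/\Scal$ term already present.
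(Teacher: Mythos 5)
Your proposal follows essentially the same route as the paper: reduction to the three-component classical model via $\su(2)$ coherent states and Theorem~3.7 of~\cite{Biskup-Chayes-Starr}, separate control of the decoupled $y$-component, and the hypothesis $\Scal\ge c\beta^2$ to dominate the quantum errors. The two steps you defer as ``bookkeeping'' are precisely where the paper's work lies: Lemmas~\ref{lemma-5.2}--\ref{lemma-5.3} compare the three-component disseminated partition functions with the two-component ones (the $\Delta^{-2}$-per-site entropy cost of integrating out $S^y$ is beaten by the $\texte^{-c_5\beta J\Delta^2}$ suppression only thanks to the choices \eqref{E:Delta-B}), and the correlation bounds are not obtained by transferring classical correlation functions through Berezin--Lieb (which by itself only compares partition functions) but by representing each observable through its upper symbol, splitting the coherent-state integral over good and bad $B$-block events as in \eqref{E:3e+g}--\eqref{E:O<gamma}, and finally deducing \eqref{E:QLRO} from \eqref{E:QLRO22}--\eqref{E:QLRO13} together with the sum rule $\Scal^{-2}\sum_{\a}[\hS_{\br}^{\a}]^2=1+O(1/\Scal)$, so that no full classical $O(3)$ analogue of Theorem~\ref{T:LRO} is actually needed.
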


Note that the bound \eqref{E:QLRO13} yields the same bound on $\langle\bigl[\hS_{\br}^{z}\bigr]^2\bigl[\hS_{\br^\prime}^{x}\bigr]^2\rangle_{N,\b}$ because the identity $|\langle \hat A\hat B\rangle_{N,\beta}|=|\langle \hat B\hat A\rangle_{N,\beta}|$ holds for all self-adjoint operators $\hat A,\hat B$. It is standard that validity of such bounds implies nonanalyticity of the free energy in the appropriate conjugate variables. In our case, this will be the function
\begin{equation}
%\label{}
\!\!\!\!\!\!\!\!\!\!\!
f(h):=\lim_{N\to\infty}\frac1{N^d}\log\Tr\Biggl(\exp\biggl\{-\beta\hat{\mathcal H}_N+h\Scal^{-2}\sum_{\br\in\T_N}([S_{\br}^{x}]^2-[S_{\br}^{z}]^2)\biggr\}\Biggr),
\end{equation}
where the ``external field'' $h$ couples to the natural order parameter $[S_{\br}^{x}]^2-[S_{\br}^{z}]^2$. Based on this fact, we extract the corresponding result for the asymmetric situations as well:

\begin{theorem}
\label{T:QLRO2}
Suppose $J_1>J_2>0$ and let $c=c(\delta)>0$ and $\b_0=\b_0(\delta)<\infty$ be as above. Then for each $\delta$, each $\b\ge\b_0(\delta)$ and $\Scal\ge c\beta^2$, we have
\begin{equation}
\label{E:QLRO2}
\Scal^{-4}\Bigl\langle\bigl[\hS_{\br}^{x}\bigr]^2\bigl[\hS_{\br^\prime}^{x}\bigr]^2\Bigr\rangle_{N,\b}\ge 1-2\delta,
\end{equation}
while
\begin{equation}
\Scal^{-4}\Bigl\langle\bigl[\hS_{\br}^{\alpha}\bigr]^2\bigl[\hS_{\br^\prime}^{\alpha}\bigr]^2\Bigr\rangle_{N,\b}<\delta,\qquad\alpha=y,z,
\end{equation}
uniformly in $\br,\br^\prime\in\T_N$  for $N$ is sufficiently large. A similar result holds for $J_2>J_1>0$ with indices $x$ and $z$ interchanged.
\end{theorem}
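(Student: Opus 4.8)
\emph{Proof proposal.}
The plan is to deduce Theorem~\ref{T:QLRO2} from Theorem~\ref{T:QLRO} together with the nonanalyticity of the field free energy $f(h)$ displayed in the text, exploiting the fact that \emph{at low temperature} the coupling asymmetry acts as an external field conjugate to the order parameter $\Mcal_N:=\Scal^{-2}\sum_{\br\in\T_N}\bigl([\hS_{\br}^x]^2-[\hS_{\br}^z]^2\bigr)$ controlled by Theorem~\ref{T:QLRO}. Write $J_1=J_2+2h$ with $h>0$ and let $\hHcal_N^{\mathrm{sym}}$ denote \eqref{E:quantum-Ham} with both couplings equal to $J_2$, so that $\hHcal_N=\hHcal_N^{\mathrm{sym}}-2h\,\Scal^{-2}\sum_{\langle\br,\br'\rangle_x}\hS_{\br}^x\hS_{\br'}^x$. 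The first ingredient is a low-temperature localization estimate of the kind already used for Theorem~\ref{T:QLRO}: for $\b\ge\b_0(\delta)$ and $\Scal\ge c\b^2$ the Gibbs measure is, up to an $o(1)$ error as $N\to\infty$, supported on configurations that are blockwise within $O(\delta)$ of a ground state of $\hHcal_N^{\mathrm{sym}}$; on such configurations the $x$-bond term $\Scal^{-2}\sum_{\langle\br,\br'\rangle_x}\hS_{\br}^x\hS_{\br'}^x$ coincides with $\Scal^{-2}\sum_{\br}[\hS_{\br}^x]^2$ up to $O(\delta|\T_N|)$, and the Casimir identity $[\hS_{\br}^x]^2+[\hS_{\br}^y]^2+[\hS_{\br}^z]^2=\Scal(\Scal+1)$ rewrites the latter as $\tfrac12(|\T_N|+\Mcal_N)$ up to a further $O(\delta|\T_N|)$. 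Hence, up to a c-number constant and an error $O(h\delta|\T_N|)$ in the exponent, the asymmetric model equals the symmetric model in a positive field $\propto h$ coupled to $\Mcal_N$.

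Second, I would use the nonanalyticity of $f$ as the text declares it is standard: \eqref{E:QLRO} and \eqref{E:QLRO13}, applied to the diagonal two-point function of $\Mcal_N$ and combined with $\langle\Mcal_N\rangle_{N,\b}=0$ (forced by the $x\leftrightarrow z$ symmetry at $J_1=J_2$, which also matches the two probabilities below), yield $|\T_N|^{-2}\langle\Mcal_N^2\rangle_{N,\b}\ge1-O(\delta)$, while $|\Mcal_N|\le|\T_N|$; a variance argument then shows that $|\T_N|^{-1}\Mcal_N$ lies within $O(\sqrt\delta)$ of $+1$, resp.\ of $-1$, each with probability $\tfrac12-o(1)$. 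Thus $f$ has a first-order kink at $h=0$ with $f'(0^+)=-f'(0^-)\ge1-O(\delta)$, and convexity of $h\mapsto f(h)$ forces $f'(h)\ge1-O(\delta)$ for \emph{every} $h>0$. Reading this off (at the all-but-countably-many continuity points of $f'$, monotonicity covering the rest) gives $|\T_N|^{-1}\langle\Mcal_N\rangle_{N,\b}\to$ a value $\ge1-O(\delta)$ for the symmetric-plus-field model, hence --- by the localization estimate --- for the asymmetric model; since $\Mcal_N=\Scal^{-2}\sum_{\br}([\hS_{\br}^x]^2-[\hS_{\br}^z]^2)$ and $\Scal^{-2}\sum_{\br}[\hS_{\br}^z]^2\ge0$, this yields $|\T_N|^{-1}\Scal^{-2}\sum_{\br}\langle[\hS_{\br}^x]^2\rangle_{N,\b}\ge1-O(\delta)$.

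Third, I would pass to the pointwise two-point bounds. Invariance of the torus state under the period-$2$ shifts upgrades the last display to $\langle[\hS_{\br}^x]^2\rangle_{N,\b}\ge(1-O(\delta))\Scal^2$ for each $\br$ (bounded loss in the constant), and the Casimir identity then forces $\langle[\hS_{\br}^y]^2\rangle_{N,\b},\langle[\hS_{\br}^z]^2\rangle_{N,\b}=O(\delta)\Scal^2$. Writing $\hat R_{\br}^\a:=\Scal(\Scal+1)-[\hS_{\br}^\a]^2\ge0$ and using $(\hat R_{\br}^\a)^2\le\Scal(\Scal+1)\hat R_{\br}^\a$ with Cauchy--Schwarz, $\langle\hat R_{\br}^\a\hat R_{\br'}^\a\rangle_{N,\b}\le\Scal(\Scal+1)\sqrt{\langle\hat R_{\br}^\a\rangle_{N,\b}\langle\hat R_{\br'}^\a\rangle_{N,\b}}$, converts the one-point bounds into $\Scal^{-4}\langle[\hS_{\br}^x]^2[\hS_{\br'}^x]^2\rangle_{N,\b}\ge1-2\delta$ and $\Scal^{-4}\langle[\hS_{\br}^\a]^2[\hS_{\br'}^\a]^2\rangle_{N,\b}<\delta$ for $\a=y,z$, uniformly in $\br,\br'$ once $N$ is large. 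The constants $c(\delta),\b_0(\delta)$ are those of Theorem~\ref{T:QLRO}, since that theorem (and the localization bound behind it) is the only large-$\b$, large-$\Scal$ input. The case $J_2>J_1>0$ follows by interchanging $x$ and $z$ throughout.

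The step I expect to be the main obstacle is the first one: making rigorous the claim that the coupling asymmetry is, to leading order at low temperature, an external field conjugate to $\Mcal_N$. This is a quantitative low-temperature localization statement --- with overwhelming probability the configuration is, on most blocks, a small deformation of one of the (at $J_1=J_2$, many) ground states --- whose error terms must be uniform in $N$, survive the thermodynamic limit, and be stable under differentiating in $h$. Such an estimate is essentially what already drives the proof of Theorem~\ref{T:QLRO} --- it underlies both the lower bound \eqref{E:QLRO} and the complementary bounds \eqref{E:QLRO22} and \eqref{E:QLRO13} --- but one must check it is robust enough to absorb the extra term $-2h\,\Scal^{-2}\sum_{\langle\br,\br'\rangle_x}\hS_{\br}^x\hS_{\br'}^x$. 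Everything else --- convexity of $f$, the variance and Casimir computations, Cauchy--Schwarz, and the $x\leftrightarrow z$ symmetry --- is routine. An alternative that avoids the soft argument is to re-run the chessboard estimate directly for $\hHcal_N$ and observe that for $J_1>J_2$ every macroscopically $z$-ordered block carries an extra free-energy cost proportional to $(J_1-J_2)$ times its volume and is therefore exponentially suppressed, which forces $x$-order at once; the soft route is preferred here only because it recycles Theorem~\ref{T:QLRO} without repeating its proof.
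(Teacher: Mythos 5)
There is a genuine gap, and it sits exactly where you suspect. Your step 1 asks to replace the actual perturbation $-2h\,\Scal^{-2}\sum_{\langle\br,\br'\rangle_x}\hS_{\br}^x\hS_{\br'}^x$ inside the exponential by a field coupled to $\Mcal_N=\Scal^{-2}\sum_{\br}([\hS_{\br}^x]^2-[\hS_{\br}^z]^2)$, justified by "localization" of the Gibbs state near ground states. Quantum mechanically this is not a legitimate manipulation: two non-commuting observables agreeing up to $O(\delta|\T_N|)$ on a subspace of large Gibbs weight cannot simply be exchanged inside $\texte^{-\b\hHcal_N+hV}$, and no estimate of this kind is produced by (or hidden in) the proof of Theorem~\ref{T:QLRO}, which only controls expectations of specific observables through the operators $\hat Q_{\Acal}$. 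The paper's proof of Theorem~\ref{T:QLRO2} avoids the issue altogether: it couples the field to the $x$-bond energy operator $\hat{\mathcal E}_N^x$ itself, so that the asymmetric Hamiltonian \emph{is exactly} the symmetric one plus $(J-J_1)\hat{\mathcal E}_N^x$ and no substitution is needed; it then proves $\frac{\textd}{\textd h^+}f(h)|_{h=0}\ge 2(1-\delta)$ by Jensen's inequality over coherent states restricted to the torus event $\Acal_x$, the Berezin--Lieb-type bound $\texte^{-\b\langle\hHcal_N\rangle_\Omega}\ge\langle\Omega|\texte^{-\b\hHcal_N}|\Omega\rangle\,\texte^{-c\b|\T_N|/\sqrt\Scal}$ from \cite{Biskup-Chayes-Starr}, and the chessboard bound $\langle\hat Q_{\Acal_x}\rangle_{N,\b}\ge\frac12-\epsilon_N$; convexity then propagates the derivative bound to all $h>0$, giving near-maximal plaquette energy $\Scal^{-2}\langle\hat{\mathcal E}_{\br}\rangle_{N,\b}\ge2(1-\delta)$, after which the conclusion follows as in the classical Theorem~\ref{T:away-from-symmetry}.

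Your step 2 has a second, related gap: deducing a first-order kink of $f$ at $h=0$ from the two-point bounds of Theorem~\ref{T:QLRO} is not "routine" for quantum systems. One cannot restrict the trace to a spectral subspace of $\Mcal_N$ without disturbing $\texte^{-\b\hHcal_N}$, and the second derivative of the quantum free energy is the Duhamel (Bogoliubov) variance rather than the thermal variance $\langle\Mcal_N^2\rangle$, so the classical Griffiths-type argument does not transfer verbatim; extra input (Falk--Bruch-type inequalities, or the coherent-state route the paper actually takes) is required. In other words, the statement of Theorem~\ref{T:QLRO} alone does not suffice as the "only large-$\b$, large-$\Scal$ input": the paper reuses the chessboard/coherent-state machinery behind it, in the form $\langle\hat Q_{\Acal_x}\rangle_{N,\b}\ge\frac12-\epsilon_N$, to get the derivative bound directly. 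Your closing alternative (re-running chessboard estimates for the asymmetric Hamiltonian) would also require redoing the Gaussian spin-wave estimates away from the symmetric point, which the paper deliberately avoids. The remaining bookkeeping in your step 3 (period-2 translation invariance, positivity of $\Scal(\Scal+1)-[\hS_{\br}^\a]^2$, and the Casimir identity) is fine.
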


We remark that the reason for assuming $\Scal\ge c\beta^2$ is that, in the underlying proof, we use thermal fluctuations to dominate the quantum fluctuations (namely, the effects resulting from the non-commutative nature of the relevant variables).

\section{Ground states}
\label{sec-III}
The goal of this section is to prove our characterization of the grounds states on $\T_N$. As a starting point we note the following rewrite of the energy function:

\begin{lemma}
\label{lemma-H-rewrite}
Let $\mathcal H_N$ denote the torus Hamiltonian. Then for all $\bS$,
\begin{equation}
\label{E:H-rewrite}
\hbox{\hglue-1cm}\mathcal H_N(\bS)=\frac{J_1}2\sum_{\langle \br,\br'\rangle_x} \, (S_{\br}^{x}-S_{\br'}^{x})^2
+\frac{J_2}2\sum_{\langle \br,\br'\rangle_z}  (S_{\br}^{z}-S_{\br'}^{z})^2-\sum_{\br}\bigl(J_1[S_{\br}^x]^2+J_2[S_{\br}^z]^2\bigr).
\end{equation}
\end{lemma}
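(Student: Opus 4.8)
The plan is to prove \eqref{E:H-rewrite} by expanding the squared differences on the right-hand side and matching the outcome against the definition \eqref{E:1.4} of $\mathcal H_N$. Applying $(a-b)^2=a^2+b^2-2ab$ bond by bond gives
\[
\frac{J_1}2\sum_{\langle \br,\br'\rangle_x}(S_{\br}^{x}-S_{\br'}^{x})^2=\frac{J_1}2\sum_{\langle \br,\br'\rangle_x}\bigl([S_{\br}^x]^2+[S_{\br'}^x]^2\bigr)-J_1\sum_{\langle \br,\br'\rangle_x}S_{\br}^xS_{\br'}^x,
\]
and likewise for the $z$-bonds with $J_1$ replaced by $J_2$ and $x$ by $z$. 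The two cross-terms produced this way are exactly $\mathcal H_N(\bS)$, so the lemma reduces to showing that the on-site contributions collapse, i.e., that
\[
\sum_{\langle \br,\br'\rangle_x}\bigl([S_{\br}^x]^2+[S_{\br'}^x]^2\bigr)=2\sum_{\br\in\T_N}[S_{\br}^x]^2
\]
together with the analogous identity with $x$ replaced by $z$.

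Each of these identities is the bookkeeping statement that, on $\T_N$, every site is an endpoint of exactly two $x$-edges (and exactly two $z$-edges): under such a count each $[S_{\br}^x]^2$ is counted twice on the left. To establish the count I would argue separately along rows and along columns. Within any row of $\T_N$ the horizontal edges are of types $x,z,x,z,\dots$ cyclically, and since $N$ is even this alternation is a consistent two-colouring of the cycle; hence the two horizontal edges incident to a given site have opposite types, so every site has exactly one incident horizontal $x$-edge and exactly one incident horizontal $z$-edge. The same argument along columns yields exactly one incident vertical $x$-edge and one vertical $z$-edge per site. Summing the horizontal and vertical contributions, every site of $\T_N$ is incident to precisely two $x$-edges and precisely two $z$-edges. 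Substituting the two displayed identities back into the expansion and rearranging terms produces \eqref{E:H-rewrite}.

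There is no serious obstacle here: the only step that needs care is the incidence bookkeeping just described --- verifying that the checkerboard/alternating structure of the POM really does give every vertex of the torus edge-type degree $(2,2)$, and noting that this is precisely where the hypothesis that $N$ be even is used. The remainder is a one-line algebraic expansion followed by a rearrangement of terms.
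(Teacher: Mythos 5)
Your proof is correct and takes essentially the same approach as the paper: expand the squares bond by bond and cancel the resulting ``diagonal'' terms against the on-site sum, using the fact that every vertex of $\T_N$ is incident to exactly two $x$-edges and two $z$-edges. Your explicit row/column bookkeeping (and the remark about where evenness of $N$ enters) is simply a more detailed spelling-out of the one-line incidence count the paper asserts.
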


\begin{proof}{Proof}
Every vertex in $\T_N$ has two $x$-edges and two $z$-edges coming out of it. Opening up the squares in the first two sums, the ``diagonal'' terms there are easily checked to exactly cancel the terms in the third sum.
\end{proof}

\begin{proof}{Proof of Theorem~\ref{T:ground}}
First, we claim that
\begin{equation}
\label{E:minH_N}
\min_{\bS}\mathcal H_N(\bS)=-\max\{J_1,J_2\}|\T_N|.
\end{equation}
The inequality $\le$ is seen by taking $\bS_{\br}\equiv\hate_1$ or $\bS_{\br}\equiv\hate_2$, depending on whether $J_1\ge J_2$ or not, and evaluating~$\mathcal H_N(\bS)$ for this choice. For the opposite bound we use   $J_1,J_2>0$ to drop  the first two terms in \eqref{E:H-rewrite} and conclude
\begin{equation}
%\label{}
\min_{\bS}\mathcal H_N(\bS)\ge -|\T_N|
\max_{S\in O(2)}\bigl(J_1[S^x]^2+J_2[S^z]^2\bigr)\ge-\max\{J_1,J_2\}|\T_N|.
\end{equation}
This also shows that the minimum is attained only by configurations for which
\begin{equation}
%\label{}
S_{\br}^{\alpha}=S_{\br'}^{\alpha} \quad\mbox{for the endpoints $\br,\br'$ of all $\alpha$-bonds $\langle\br,\br'\rangle_\alpha$},\,\alpha=x,z,
\end{equation}
and
\begin{equation}
\label{E:cond1}
J_1[S_{\br}^x]^2+J_2[S_{\br}^z]^2 = \max\{J_1,J_2\}\quad \mbox{for all }\br\in\T_N
\end{equation}
hold true.

Let now $\bS$ be a ground state of $\mathcal H_N$. The above findings guarantee that~$S_{\br}^x$ is constant on any $x$-plaquette and~$S_{\br}^z$ is constant on any $z$-plaquette. Applications of $\bvarphi_{\br}$ to the plaquettes where the corresponding value is negative defines a configuration $\tilde{\bS}$ where $\tilde S_{\br}^{\alpha}\ge0$ for $\alpha=x,z$ and all $\br\in\T_N$. (This is the desired modification of $\bS$ by applications of the maps $(\bvarphi_{\br})$.) Let now $\br$ be a lower-left vertex of a $z$-plaquette and let $\br':=\br-\hate_1$ be its neighbor to the left. Since $\tilde{\bS}$ is also a ground state and the edge $\langle\br,\br'\rangle$ is an $x$-edge, we have $\tilde S_{\br}^x=\tilde S_{\br'}^x$. However, the components $\tilde S_{\br}^z,\tilde S_{\br'}^z$ are both non-negative and since one component of a spin determines the other up to a sign, we also have $\tilde S_{\br}^z=\tilde S_{\br}^z$. It follows that $\tilde{\bS}_{\br}=\tilde{\bS}_{\br'}$. Proceeding similarly for all pairs of neighbors in~$\T_N$ we conclude that $\tilde{\bS}_{\br}=\hate$ for some unit vector $\hate\in\{ \bv \in\mathbb R^2\colon  \bv\cdot\be_i\ge 0, i=1,2\}$ and all $\br\in\T_N$.

It remains to determine the set of vectors $\hate$ that are admissible at given values of the parameters of the model. We have
\begin{equation}
%\label{}
\mathcal H_N(\tilde{\bS})=-\bigl(J_1(\hate\cdot\hate_1)^2+J_2(\hate\cdot\hate_2)^2\bigr)|\T_N|.
\end{equation}
Thus, when $J_1>J_2$, we must have $\hate=\hate_1$ while $\hate=\hate_2$ when $J_2>J_1$. At the symmetry point, $J_1=J_2$, any $\hate$ will give the same value. We have thus shown that $\bS$ is a modification of a constant configuration (namely~$\tilde{\bS}$) of the desired type in all cases of interest.
\end{proof}

\section{Technical ingredients}
\label{sec-IV}

In this section we assemble the technical ingredients needed for the proof of the main theorems concerning the phase transition in the model of interest. The proofs will come in Section~\ref{sec-V}.

\subsection{Chessboard estimates}
The proof of the positive-temperature part of the results will be based on the technique of \emph{chessboard estimates}, based on \emph{reflection positivity}, whose origins go to the seminal work of Dyson, Fr\"ohlich, Israel, Lieb, Simon and Spencer from the late 1970s. This technique, along with a related \emph{infrared-bound} technology, has proved extremely useful in establishing symmetry-breaking phase transitions in various classical and quantum 
systems with a continuous symmetry~\cite{Frohlich:1976p1809,Dyson:1976p5123,Frohlich:1978p1239,Frohlich:1980p5126}, 
order-disorder transitions~in the Potts and related models~\cite{Kotecky:1982p124,Chayes:1995p107,Biskup:2000p113}, 
low-temperature ordering in liquid-crystal models~\cite{Heilmann:1979p5131,Zagrebnov:1996p5133}. More recently, this technique has also been used to prove phase transitions in systems with highly degenerate ground states 
without an underlying symmetry~\cite{Biskup:2005p701,Biskup:2004p697} including gradient fields with a non-convex interaction~\cite{Biskup:2007p109}. The theoretical foundations of this technique are well developed already  
in the original papers~\cite{Frohlich:1978p1239,Frohlich:1980p5126}; the more recent developments are summarized in the lecture notes~\cite{Biskup-book}.

Consider the model with the Hamiltonian $\mathcal H_N$ on the torus $\T_N$ with $N$ even. Consider an even integer $B$ that divides $N$ and let $\Lambda_B:=\{0,1,\dots,B-1\}\times\{0,1,\dots,B-1\}$ denote the block of $B\times B$ vertices with the lower-left corner at the origin. Consider a regular partitioning of $\T_N$ into pairwise disjoint translates of $\Lambda_B$ by vectors from $\{B\br\colon\br\in\T_{N/B}\}$. For $\br\in\T_{N/B}$, let $\vartheta_{\br}$ denote the translation by $B\br$. On the configuration space,
\begin{equation}
%\label{}
\bigl[\vartheta_{\br}(\bS)\bigr]_{\br'}=\bS_{\br'-B\br},\qquad \br'\in\T_N,\,\br\in\T_{N/B}.
\end{equation}
We call an event $\mathcal A$ a \emph{$B$-block event} if~$\mathcal A$ depends only on $\{S_{\br}\colon\br\in\Lambda_B\}$ and we use $\mathscr F_B$ to denote the collection of all $B$-block events.

For each $\mathcal A\in\mathscr F_B$ we now define a family $\{\theta_{\br}(\mathcal A)\colon\br\in\T_{N/B}\}$ of translations-reflections as follows. First, let $\mathcal A_1$ denote the reflection of $\mathcal A$ through the horizontal mid-line $\{(\frac{B-1}2,y)\colon y\in\R/(N\R)\}$ halving the box~$\Lambda_B$. Similarly, we use $\mathcal A_2$ to denote the reflection of $\mathcal A$ through the vertical mid-line $\{(x,\frac{B-1}2)\colon x\in\R/(N\R)\}$ and $\mathcal A_{12}$ to denote the reflection of $\mathcal A$ through both lines (the two reflections commute and so the order in which they are taken is immaterial). Note that $\mathcal A_1,\mathcal A_2,\mathcal A_{12}\in\mathscr F_B$. For $\br = (r_1,r_2)\in\T_{N/B}$, we then set
\begin{equation}
%\label{}
\theta_{\br}(\mathcal A):=\cases{
\vartheta_{\br}^{-1}(\mathcal A),&\qquad for $r_1,r_2$ even,\\
\vartheta_{\br}^{-1}(\mathcal A_1),&\qquad for $r_1$ odd and $r_2$ even,\\
\vartheta_{\br}^{-1}(\mathcal A_2),&\qquad for $r_1$ even and $r_2$ odd,\\
\vartheta_{\br}^{-1}(\mathcal A_{12}),&\qquad for $r_1,r_2$ odd.\\}
\end{equation}
Here $\vartheta^{-1}(\mathcal A):=\{\vartheta_{\br}(\bS)\colon\bS\in\mathcal A\}$. Notice that $\vartheta_{\br}(\mathcal A)$ thus depends only on the part of the spin configuration in the block $B\br+\Lambda_B$. We then have:

\begin{lemma}[Chessboard estimate]
\label{T:chessboard}
Suppose $J_1,J_2\ge0$ 
and $\beta\ge0$.
Then for any events $\mathcal A_1,\dots,\mathcal A_m\in\mathscr F_B$ and any distinct $\br_1,\dots,\br_m\in\T_{N/B}$,
\begin{equation}
%\label{}
\mu_{N,\beta}\biggl(\,\bigcap_{i=1}^m\theta_{\br_i}(\mathcal A_i)\biggr)
\le\prod_{i=1}^m\biggl[\,\mu_{N,\beta}\biggl(\,\bigcap_{\br\in\T_{N/B}}\theta_{\br}(\mathcal A_i)\biggr)\biggr]^{(B/N)^2}.
\end{equation}
\end{lemma}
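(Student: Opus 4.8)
The plan is to derive the chessboard estimate from reflection positivity of the torus Gibbs measure $\mu_{N,\beta}$ with respect to the reflections through the lattice hyperplanes compatible with the $B$-block partitioning. First I would verify that $\mathcal H_N$ is reflection positive (RP) across the mid-planes between adjacent columns (and, by the same token, rows) of $B$-blocks. This is where the period-$2$ nature of the interaction and the evenness of $B$ and $N$ enter crucially: a reflection through a plane lying between two $x$-edges maps $x$-edges to $x$-edges and $z$-edges to $z$-edges, and the couplings $-J_1 S^x_{\br}S^x_{\br'}$, $-J_2 S^z_{\br}S^z_{\br'}$ are of the form $-\sum f_i(\bS)\,\Theta(f_i(\bS))$ with $f_i$ depending on one side of the plane only (the terms straddling the reflection plane are products of a function of the left configuration with its mirror image). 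Since the single-site a priori measure $\nu$ is a product over sites and invariant under the reflection, the standard criterion (see \cite{Frohlich:1978p1239,Frohlich:1980p5126} or \cite{Biskup-book}) gives RP: $\langle \bar F\,\Theta(F)\rangle_{N,\beta}\ge 0$ for $F$ depending only on spins on one side.

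Next I would recall the abstract consequence of RP, the \emph{chessboard bound} in the form due to Fr\"ohlich--Lieb: if a measure is RP with respect to a lattice of reflection planes tiling the torus into copies of $\Lambda_B$, then for any family of $B$-block events placed at distinct blocks $\br_1,\dots,\br_m$,
\begin{equation}
\mu_{N,\beta}\biggl(\,\bigcap_{i=1}^m\theta_{\br_i}(\mathcal A_i)\biggr)\le\prod_{i=1}^m\biggl[\mu_{N,\beta}\biggl(\,\bigcap_{\br\in\T_{N/B}}\theta_{\br}(\mathcal A_i)\biggr)\biggr]^{(B/N)^2}.
\end{equation}
The operators $\theta_{\br}$ are exactly the reflected-translated copies of the base block event: reflecting an even number of times returns the original event, reflecting oddly in the first or second coordinate yields $\mathcal A_1$, $\mathcal A_2$, or $\mathcal A_{12}$, which is precisely the case distinction in the definition of $\theta_{\br}$. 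I would prove this bound by the usual two-step scheme: (i) a \emph{conditioning/Cauchy--Schwarz} step showing that for indicator functions $F,G$ of unions of reflected block events, $\langle FG\rangle\le\langle F\mathcal R F\rangle^{1/2}\langle G\mathcal R G\rangle^{1/2}$ where $\mathcal R$ reflects across one plane; and (ii) iterating this over all $2\log_2(N/B)$ planes in the two coordinate directions, which successively replaces each $\mathcal A_i$ by its "chessboard extension" to the whole torus and produces the exponent $(B/N)^2 = |\Lambda_B|/|\T_N|$. Care is needed that the reflections used to prove RP (between neighboring blocks) generate, by composition, all the translations $\vartheta_{\br}$ and reflections $\mathcal A_j$ appearing in the statement; this is automatic because reflections in two parallel planes compose to a translation by twice their separation, and the planes sit at spacing $B$.

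The main obstacle is really step (i)—establishing reflection positivity itself for this particular Hamiltonian—because one must check that after collecting all interaction terms, those crossing a given reflection plane assemble into a sum of the form $-\sum_a c_a\,\overline{F_a}\,\Theta(F_a)$ with $c_a\ge0$ and $F_a$ supported strictly on one side. For nearest-neighbor couplings this is classical, but here one must confirm that the reflection planes can be chosen so that no plane cuts through an $x$-plaquette or a $z$-plaquette in a way that mixes the two coupling types; the constraint that $B$ be even (hence the block boundaries fall on the "period-$2$" grid) is exactly what guarantees this. Once RP across one such plane is in hand, the rest is the standard machinery, for which I would simply cite \cite{Frohlich:1978p1239,Biskup-book}; indeed the cleanest exposition is to invoke the general chessboard-estimate theorem there, having only verified its hypothesis (RP plus translation invariance of $\mu_{N,\beta}$ under the block-translation group) for the model at hand.
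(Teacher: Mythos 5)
Your proposal is correct and follows essentially the same route as the paper: verify reflection positivity of $\mathcal H_N$ with respect to reflections through planes bisecting lines of edges (the paper does this via the explicit decomposition $-\mathcal H_N=g+\theta_P(g)+\sum_{\br\in P^+}h_{\br}\,\theta_P(h_{\br})$ with $h_{\br}=\sqrt{J_\alpha}\,S_{\br}^\alpha$, using $J_1,J_2\ge0$ and the fact that such reflections preserve the $x$/$z$ edge types), and then invoke the standard chessboard-estimate theorem of Fr\"ohlich--Israel--Lieb--Simon, resp.\ the lecture-notes version in \cite{Biskup-book}. The only cosmetic difference is your attribution of the type-preservation to the evenness of $B$ (it holds for every edge-bisecting plane; $B$ even matters for the block/dissemination structure), which does not affect the argument.
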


The punchline of this result is that the probability of a simultaneous occurrence of several (often undesirable) events on the torus is bounded by the product of the probabilities of events where the individual $B$-block events have been \emph{disseminated}  ---  using the maps $\theta_{\br}$  ---  throughout the entire torus. The latter quantities are often rather explicitly computable.

\begin{proof}{Proof of Lemma~\ref{T:chessboard}}
As already alluded to, the key input for the chessboard estimates is reflection positivity of the interaction. We will now define the necessary concept and check the validity of this property. Let $P$ be a plane bisecting a horizontal or vertical line of edges of $\T_N$, i.e., $P$ is of the form either
\begin{equation}
%\label{}
\bigl\{(n+\ffrac12,y)\colon y\in\R/(N\R)\bigr\}\cup\bigl\{(n+\ffrac N2+\ffrac12,y)\colon y\in\R/(N\R)\bigr\},
\end{equation}
with $n=0,1,\dots,\frac N2-1$, or
\begin{equation}
%\label{}
\bigl\{(x,n+\ffrac12)\colon x\in\R/(N\R)\bigr\}\cup\bigl\{(x,n+\ffrac N2+\ffrac12)\colon x\in\R/(N\R)\bigr\},
\end{equation}
with $n=0,1,\dots,\frac N2-1$. The plane has two components and it splits the torus into a left half $\T_N^-$ and the right half $\T_N^+$. Abusing the notation slightly, let~$\theta_P$ denote the map on the configuration space representing the reflection $\T_N^+\leftrightarrow\T_N^-$.

A sufficient condition for the interaction to be reflection positive is that for each such a plane~$P$, there are functions $g$ and $h=(h_i)$ depending only on $\{\bS_{\br}\colon\br\in\T_N^+\}$ so that
\begin{equation}
%\label{}
-\mathcal H_N=g+\theta_P(g)+\sum_i h_i\theta_P(h_i).
\end{equation}
We will now demonstrate that $\mathcal H_N$ is indeed of this form. Let
\begin{equation}
%\label{}
g(\bS):=-J_1\sum_{\vbox{
\hbox{\,\,$\scriptscriptstyle\langle \br,\br'\rangle_x$}
\vglue-6pt
\hbox{$\scriptscriptstyle\br,\br'\in\T_N^+$}}}
\, S_{\br}^{x}S_{\br'}^{x}-J_2\sum_{\vbox{
\hbox{\,\,$\scriptscriptstyle\langle \br,\br'\rangle_z$}
\vglue-6pt
\hbox{$\scriptscriptstyle\br,\br'\in\T_N^+$}}}
 \, S_{\br}^{z}S_{\br'}^{z} 
\hfill
\end{equation}
and note that $g$ depends only on the spins in $\T_N^+$. The collection of functions $h$ will be parametrized by the vertices $\br\in\T_N^+$ that have an edge to a vertex $\br'\in\T_N^-$. We will use~$P^+$ to denote the set of such vertices $\br$. We set
\begin{equation}
%\label{}
h_{\br}:=\cases{\sqrt{J_1}\,S_{\br}^x,&if $\langle\br,\br'\rangle$ is an $x$-edge,\\
\sqrt{J_2}\,S_{\br}^z,&if $\langle\br,\br'\rangle$ is a $z$-edge,}
\qquad \br\in P^+,
\end{equation}
where~$\br'$ stands for the reflection of~$\br$ through plane~$P$.
Then $h_{\br}\theta_P(h_{\br})=J_\alpha S_{\br}^{\alpha}S_{\br'}^{\alpha}$, with $\alpha$ depending on the type of the edge $\langle\br,\br'\rangle$. A moment's thought then shows that
\begin{equation}
%\label{}
-\mathcal H_N=g+\theta_P(g)+\sum_{\br\in P^+} h_{\br}\,\theta_P(h_{\br})
\end{equation}
and so the interaction is of the desired form. As the \emph{a priori} measure on the spins has a product structure, standard theory (cf~\cite[Theorem~4.1]{Frohlich:1978p1239} or \cite[Theorem~5.8]{Biskup-book}) readily implies the desired claim.
\end{proof}

\subsection{Gaussian calculations}
Through the use of chessboard estimates, the proof of the phase transition will be reduced to some tedious but explicit computations of multivariable Gaussian integrals. Informally, these can be understood as calculations of spin-wave free energies corresponding to the spin system at hand. The goal of this section is to carry out these calculations and derive the necessary estimates between actual partition functions and their Gaussian approximations. Throughout we will assume that
\begin{equation}
%\label{}
J_1=J_2=:J
\end{equation}
with $J>0$.

For a given unit vector $\hate(\theta):=(\cos\theta,\sin\theta)\in\R^2$ and $\Delta>0$, we define the quantity
\begin{equation}
\label{E:Z-integral}
\mathcal Z_N(\theta,\Delta):=\texte^{-\beta J|\T_N|}\int\texte^{-\beta\mathcal H_N(\bS)}\prod_{\br\in\T_N}1_{\{|\bS_{\br}-\hate(\theta)|<\Delta\}}\prod_{\br\in\T_N}
\nu(\textd\bS_{\br}).
\end{equation}
This is the partition function restricted to configurations within $\Delta$ of a constant configuration pointing in direction of the unit vector $\hate(\theta)$. Next, we define a function $\theta\mapsto F(\theta)$ as follows. For each $\bk:=(k_1,k_2)\in[-\pi,\pi]^2$, let us introduce the quantities
\begin{equation}
\label{E:3.17}
a_\pm:=1\pm\texte^{-\texti k_1},\quad b_\pm:=1\pm\texte^{-\texti k_2},\quad\mbox{and}\quad\rho:=-\cos(2\theta).
\end{equation}
Consider the matrix 
\begin{equation}
\label{E:Pi}
\!\!\!\!\!\!\!\!\!\!\!\!\!\!\!\!\!\!\!\!\!\!\!\!\!\!
M(\bk,\theta):=\frac12\left(\,
\matrix{
|a_-|^2+|b_-|^2 & \rho a_-a_+^\ast & \rho b_-b_+^\ast & 0
\cr 
\rho a_-^\ast a_+ & |a_+|^2+|b_-|^2 & 0 & \rho b_-b_+^\ast 
\cr
\rho b_-^\ast b_+  & 0 & |a_-|^2+|b_+|^2 & \rho a_-a_+^\ast
\cr
0 & \rho b_-^\ast b_+ & \rho a_-^\ast a_+  & |a_+|^2+|b_+|^2
}
\,\right).
\end{equation} 
We will  see in a moment that $\det M(\bk,\theta)\ge0$ and so we may define
\begin{equation}
\label{E:3.19}
F(\theta):=\frac12\log(\beta J)+\frac18\int_{[-\pi,\pi]^2}\frac{\textd\bk}{(2\pi)^2}\log\det M(\bk,\theta).
\end{equation}
The key facts about the quantity $F(\theta)$ and its relation to $\mathcal Z_N(\theta,\Delta)$ are the subject of the following two claims:

\begin{prop}
\label{P:propertiesofM}
For any $\bk:=(k_1,k_2)\in[-\pi,\pi]^2$ and any $\theta\in[0,2\pi]$ we have
\begin{equation}
\label{E:det-M-bd}
\sin^2(2\theta)\sin^2(k_1) \sin^2(k_2)\le\det M(\bk,\theta)\le16\sin^2(2\theta).
\end{equation}
In particular, $F(\theta)$ is finite for all $\theta$ with $\sin(2\theta)\ne0$. 
The function $\theta\mapsto F(\theta)$ is periodic with period $\ffrac\pi2$, symmetric and continuous on the interval $(0,\ffrac\pi2)$, increasing on $(0,\ffrac\pi4)$ and decreasing on $(\ffrac\pi4,\ffrac\pi2)$.
The infimum of $F$ is $-\infty$ and it is achieved exactly at $\theta\in\{0,\ffrac\pi2,\pi,\ffrac{3\pi}2\}$. See Fig.~2.
\end{prop}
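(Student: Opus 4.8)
The plan is to reduce the whole statement to one explicit identity for $\det M(\bk,\theta)$ and then read off every claimed property. \emph{Step 1: a closed form for the determinant.} First I would record the elementary identities $a_-a_+^\ast=2\texti\sin k_1$ and $b_-b_+^\ast=2\texti\sin k_2$ (both purely imaginary) together with $|a_-|^2+|a_+|^2=|b_-|^2+|b_+|^2=4$. Writing $A_\pm:=|a_\pm|^2$, $B_\pm:=|b_\pm|^2$, $p:=2\rho\sin k_1$ and $q:=2\rho\sin k_2$, and viewing $2M(\bk,\theta)$ as a $2\times2$ array of $2\times2$ blocks (first two vs.\ last two indices), it takes the block form with diagonal blocks $C_\pm$ (the Hermitian $2\times2$ matrices with diagonal entries $A_-+B_\pm,\ A_++B_\pm$ and off-diagonal entries $\pm\texti p$) and off-diagonal blocks $\pm\texti q\,\id$. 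Since $\texti q\,\id$ commutes with everything, $\det(2M)=\det(C_-C_+-q^2\,\id)=q^4-q^2\opn{tr}(C_-C_+)+\det C_-\det C_+$; inserting the explicit (small) $2\times2$ quantities and simplifying — most transparently after the substitution $w:=\rho^2=\cos^2(2\theta)$ — I expect to arrive at
\[
\det M(\bk,\theta)=\sin^2(2\theta)\Bigl[\,8\bigl(\sin^2 k_1+\sin^2 k_2\bigr)+\sin^2(2\theta)\bigl(\sin^2 k_1-\sin^2 k_2\bigr)^2\,\Bigr].
\]
Conceptually, the factor $\sin^2(2\theta)=1-\rho^2$ is the footprint of the fact that at $\theta\in\{0,\ffrac\pi2\}$ the underlying spin-wave quadratic form couples spins only over edges of a single type, a graph that splits into the disjoint $4$-cycles bounding the $z$- resp.\ $x$-plaquettes and hence has a macroscopic kernel.

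\emph{Step 2: the two-sided bound and finiteness of $F$.} Set $u:=\sin^2 k_1$, $v:=\sin^2 k_2\in[0,1]$ and $t:=\sin^2(2\theta)\in[0,1]$, so that $\det M=t\bigl[8(u+v)+t(u-v)^2\bigr]$. Nonnegativity of $\det M$ is then immediate (settling also the remark made just before \eqref{E:3.19}). Since $uv\le u+v$ and $t\le1$, one has $t\,uv\le 8t(u+v)\le\det M$, which is the lower bound $\det M\ge\sin^2(2\theta)\sin^2 k_1\sin^2 k_2$. For the upper bound, $u^2+v^2\le u+v$ rearranges to $(u-v)^2\le(u+v)\bigl(2-(u+v)\bigr)$, whence $8(u+v)+t(u-v)^2\le(u+v)\bigl(10-(u+v)\bigr)\le16$ on $u+v\in[0,2]$, i.e.\ $\det M\le16\sin^2(2\theta)$. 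When $\sin(2\theta)\ne0$, the lower bound gives $\log\det M\ge\log\bigl(8\sin^2(2\theta)\bigr)+\log(\sin^2 k_1+\sin^2 k_2)$, and $\bk\mapsto\log(\sin^2 k_1+\sin^2 k_2)$ is integrable on $[-\pi,\pi]^2$ (near its zeros it behaves like $\log(k_1^2+k_2^2)$, integrable in two dimensions), while the upper bound gives $\log\det M\le\log16$; hence $F(\theta)$ is finite.

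\emph{Step 3: the properties of $F$.} As $\det M$ depends on $\theta$ only through $t=\sin^2(2\theta)$, so does $F$; and since $\theta\mapsto\sin^2(2\theta)$ has period $\ffrac\pi2$, is symmetric about every multiple of $\ffrac\pi4$, increases from $0$ to $1$ on $(0,\ffrac\pi4)$ and decreases on $(\ffrac\pi4,\ffrac\pi2)$, the periodicity and symmetry of $F$ are inherited and the monotonicity claims reduce to $F$ being increasing in $t$. This I would get from $\partial_t\det M=8(u+v)+2t(u-v)^2\ge0$, strictly positive for a.e.\ $\bk$, so that $t\mapsto\log\det M$ is strictly increasing pointwise a.e.\ and hence $t\mapsto F$ is strictly increasing. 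Continuity of $F$ on $(0,\ffrac\pi2)$ then follows by dominated convergence, a dominating function on any $[\theta_0,\ffrac\pi2-\theta_0]$ being $\bigl|\log16\bigr|+\bigl|\log(8t_0)\bigr|+\bigl|\log(\sin^2 k_1+\sin^2 k_2)\bigr|$ with $t_0:=\sin^2(2\theta_0)$. Finally, $\sin(2\theta)=0$ exactly at $\theta\in\{0,\ffrac\pi2,\pi,\ffrac{3\pi}2\}$ (inside $[0,2\pi)$), where the closed form gives $\det M\equiv0$ and thus $F=-\infty$, while $F$ is finite elsewhere by Step 2; therefore $\inf F=-\infty$, attained precisely at those four points.

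The crux is the determinant identity of Step 1 — the one place where an algebra slip would be fatal. The block reduction keeps that computation short, and I would guard against errors by checking the final formula at $k_1=k_2=\ffrac\pi2$ (where it must reduce to $16\sin^2(2\theta)$) and at $k_i\in\{0,\pi\}$ (where it must vanish, consistently with the lower bound). Everything after Step 1 is elementary single-variable calculus together with standard dominated and monotone convergence.
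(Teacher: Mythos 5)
Your proposal is correct, and while it lands on the same structural fact as the paper --- that $\det M$ factors as $\sin^2(2\theta)$ times a polynomial in $\sin^2(2\theta)$ with explicitly controlled $\bk$-dependence --- the key computation is carried out differently. The paper avoids expanding the determinant: it argues that $\det M$ vanishes at $\rho=\pm1$, uses evenness of the quartic in $\rho$ to write $\det M=(1-\rho^2)(A-\rho^2C)$ with $A$ and $C$ left as unsimplified expressions in $|a_\pm|^2,|b_\pm|^2$, bounds $A-\rho^2C$ between $A-C$ and $A$, and gets monotonicity from the sign of $g'$ for $g(x)=(1-x)(A-Cx)$. You instead compute $\det M$ exactly through the block identity $\det(2M)=\det(C_-C_+-q^2\,\id)$ (valid because the off-diagonal blocks are scalar multiples of the identity), obtaining $\det M=t\bigl[8(u+v)+t(u-v)^2\bigr]$ with $t=\sin^2(2\theta)$, $u=\sin^2 k_1$, $v=\sin^2 k_2$; I checked this formula independently, and it is equivalent to the paper's decomposition, since the paper's coefficients satisfy $C=(u-v)^2$ and $A-C=8(u+v)$ --- a simplification the paper never records. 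What your closed form buys: the two-sided bound \eqref{E:det-M-bd} and strict monotonicity in $t$ follow by inspection from your elementary inequalities (all of which check out), and you supply the steps the paper compresses into ``the desired claims follow,'' namely integrability of $\log\det M$ near the zeros of $\sin^2k_1+\sin^2k_2$, continuity of $F$ by dominated convergence on compact subintervals of $(0,\ffrac\pi2)$, and inheritance of periodicity, symmetry and monotonicity from $\theta\mapsto\sin^2(2\theta)$. It also makes the vanishing at $\rho=\pm1$ manifest, whereas the paper's quick check with the real vectors $(1,\mp1,\mp1,1)$ needs $\bk$-dependent phases to be literally correct, since the off-diagonal entries $\rho\,a_-a_+^\ast=2\texti\rho\sin k_1$ are purely imaginary. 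What the paper's route buys in exchange is that one never has to expand the $4\times4$ determinant at all.
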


\begin{figure}[ht]
\label{fig2}
\begin{center}
\epsfxsize=4.9in
\epsfbox{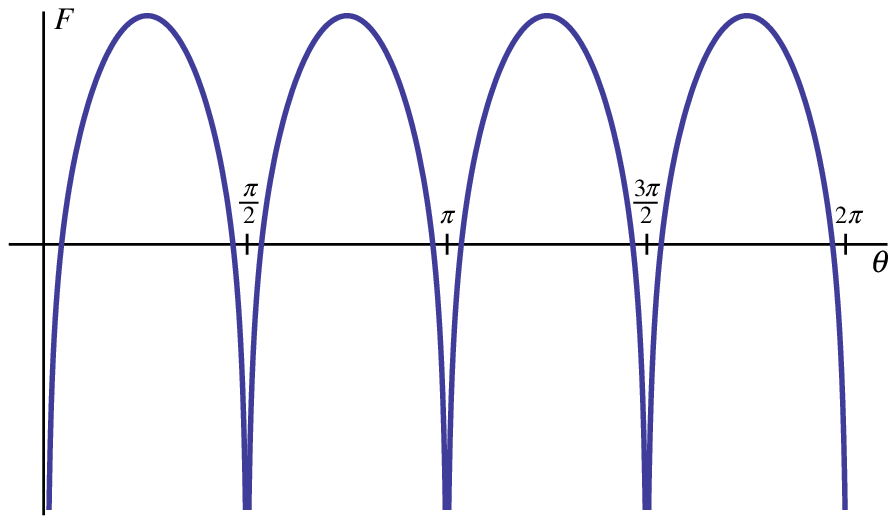} 
\end{center}
\noindent
%\hglue0.1\textwidth
\begin{minipage}{0.85\textwidth}
\small
\caption{The plot of function $\theta\mapsto F(\theta)$ for $\theta$ ranging from 0 to $2\pi$. The key fact is that~$F$ tends to minus infinity as $\theta$ tends to multiples of $\ffrac\pi2$. However, note that~$F$ is relevant for the approximation of $N^{-2}\log\mathcal Z_N(\theta,\Delta)$ to within~$\tau$ only when~$\theta$ avoids the region where~$|\sin(2\theta)|\le\tau$ --- i.e., exactly the vicinity of its singularity points.}
\end{minipage}
\end{figure}

\begin{prop}
\label{P:Gaussian}
Suppose that $J_1=J_2=:J>0$. For each $\tau>0$ there are numbers $\delta>0$ and $N_0<\infty$  such that if 
\begin{equation}
\label{Delta-cond}
\beta J\Delta^2>\frac1\delta,\quad\beta J\Delta^3<\delta,
\end{equation}
then 
\begin{equation}
\label{E:freeenapprox}
\Bigl|\frac1{N^2}\log\mathcal Z_N(\theta,\Delta)+F(\theta)\Bigr|<\tau
\end{equation}
holds true whenever $N\ge N_0$ and  $|\sin(2\theta)|>\tau$.
\end{prop}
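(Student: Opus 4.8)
The plan is a classical spin-wave computation: parametrize the configurations contributing to $\mathcal Z_N(\theta,\Delta)$ by their angular deviations from the constant state $\hate(\theta)$, expand the energy to second order, and recognize the resulting constrained Gaussian integral as $\ex^{-N^2F(\theta)}$ up to errors controlled by \eqref{Delta-cond}. Concretely, write $\bS_\br=\hate(\theta+\phi_\br)$; since $|\hate(\theta+\phi)-\hate(\theta)|=2|\sin(\phi/2)|$, the single-site constraint $|\bS_\br-\hate(\theta)|<\Delta$ becomes $|\phi_\br|<\Delta'$ with $\Delta'=\Delta+O(\Delta^3)$, and since $\nu$ has total mass $\sqrt{2\pi}$ and arc length is $\d\phi$, one has $\nu(\d\bS_\br)=(2\pi)^{-1/2}\d\phi_\br$ --- exactly the normalization that makes the $2\pi$'s cancel below. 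By Lemma~\ref{lemma-H-rewrite} with $J_1=J_2=J$, the prefactor $\ex^{-\beta J|\T_N|}$ precisely cancels the diagonal term, so $\mathcal Z_N(\theta,\Delta)$ equals $(2\pi)^{-N^2/2}$ times $\int_{\{|\phi_\br|<\Delta'\}}\ex^{-\frac{\beta J}2[Q_\theta(\phi)+\mathcal R(\phi)]}\prod_\br\d\phi_\br$, where $Q_\theta(\phi):=\sin^2\theta\sum_{\langle\br,\br'\rangle_x}(\phi_\br-\phi_{\br'})^2+\cos^2\theta\sum_{\langle\br,\br'\rangle_z}(\phi_\br-\phi_{\br'})^2$ is the second-order part and $\mathcal R$ the Taylor remainder of $\sum_{\langle\br,\br'\rangle_x}(S^x_\br-S^x_{\br'})^2+\sum_{\langle\br,\br'\rangle_z}(S^z_\br-S^z_{\br'})^2$ in the $\phi$'s. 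Since $\mathcal R$ satisfies $|\mathcal R(\phi)|\le C\Delta'\sum_{\langle\br,\br'\rangle}(\phi_\br-\phi_{\br'})^2$, on the constraint region $\frac{\beta J}2|\mathcal R|\le C'N^2\beta J(\Delta')^3<C'N^2\delta$ by the second bound in \eqref{Delta-cond}, producing only a harmless factor $\ex^{\pm C'N^2\delta}$.

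\emph{Evaluating the Gaussian integral.} The form $Q_\theta$ has a one-dimensional kernel --- the constant mode --- by a connectivity argument like the one in the proof of Theorem~\ref{T:ground}; separating it off costs only a factor polynomial in $N$ and $\Delta'$ coming from integrating the constant mode over the constraint. On the orthogonal complement the Fourier transform adapted to the period-2 structure block-diagonalizes $Q_\theta$ so that its determinant factors, up to an explicit constant, as $\prod_\bk\det M(\bk,\theta)$ with $M$ as in \eqref{E:Pi}; hence the unconstrained Gaussian integral over this complement is an explicit product of $(\det M(\bk,\theta))^{-1/2}$, and a Riemann-sum argument identifies $\frac1{N^2}\log$ of the whole thing with $-F(\theta)$ up to $o_N(1)$. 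Here Proposition~\ref{P:propertiesofM} enters twice: the bounds \eqref{E:det-M-bd} make $\log\det M(\cdot,\theta)$ integrable and provide a $\theta$-independent integrable dominant for its negative part on $\{|\sin2\theta|>\tau\}$, giving uniformity of the Riemann-sum convergence there; and $|\sin2\theta|>\tau$ forces $\min(\sin^2\theta,\cos^2\theta)\ge c\tau^2$, so that $Q_\theta$ is comparable, uniformly in such $\theta$, to the standard nearest-neighbor gradient form on $\T_N$ (the $x$- and $z$-edges together being all edges of $\T_N$).

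\emph{Upper and lower bounds.} The upper bound $\frac1{N^2}\log\mathcal Z_N(\theta,\Delta)\le-F(\theta)+C'\delta+o_N(1)$ is immediate: enlarge $\{|\phi_\br|<\Delta'\}$ to a slab on which the constant mode is bounded but the fluctuation field is unconstrained, drop the fluctuation constraint, and insert the computation above. For the lower bound, restrict instead to a sub-box of the constraint region and rescale $\phi=(\beta J)^{-1/2}\eta$; then the Gaussian weight becomes $\ex^{-\frac12\langle\eta,\tilde L_\theta\eta\rangle}$ with $\tilde L_\theta$ the matrix of $Q_\theta$, the remainder contributes the factor $\ex^{-C'N^2\delta}$, and the surviving constraint is exactly that the Gaussian field $\eta$ (with covariance proportional to the pseudo-inverse of $Q_\theta$) obey $\max_\br|\eta_\br|<R'$ with $R'$ of order $\sqrt{\beta J}\,\Delta$, which is large by the first bound in \eqref{Delta-cond}. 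Thus $\frac1{N^2}\log\mathcal Z_N(\theta,\Delta)\ge-F(\theta)-C'\delta+\frac1{N^2}\log\P\bigl(\max_\br|\eta_\br|<R'\bigr)+o_N(1)$.

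\emph{The main obstacle.} Everything now rests on the confinement estimate $\frac1{N^2}\log\P(\max_\br|\eta_\br|<R')\ge-\kappa(R')$ with $\kappa(R')\to0$ as $R'\to\infty$, uniformly in $N$ and in $\theta$ with $|\sin2\theta|>\tau$. Since by the comparison above $\eta$ is, up to $\tau$-dependent constants, a two-dimensional discrete Gaussian free field on $\T_N$, this is a genuinely multiscale statement and I expect it to be the hard step. I would prove it by a dyadic decomposition of $\eta$ into scale contributions: at the finest scales $\eta$ splits into independent pieces living on blocks of side $L$, each of variance $O(\log L)$, and each such piece is confined to $[-R'/2,R'/2]$ with probability at least $\tfrac12$ once $R'$ exceeds a constant times $\log L$ (union bound over the $\sim L^2$ sites plus a Gaussian tail bound), losing a factor $\ex^{-\eps(N/L^j)^2}$ at scale $j$; summing the geometric series $\sum_j\eps(N/L^j)^2\le C\eps N^2$ and noting $\eps\to0$ as $R'\to\infty$ gives the claim. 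Assembling everything: given $\tau$, one fixes $\delta$ small enough that $C'\delta<\tfrac\tau3$ and $\kappa(R')<\tfrac\tau3$ for every $R'$ produced by \eqref{Delta-cond}, and then takes $N$ large enough to absorb the $o_N(1)$ terms; this yields $|\frac1{N^2}\log\mathcal Z_N(\theta,\Delta)+F(\theta)|<\tau$ on $\{|\sin2\theta|>\tau\}$, which is \eqref{E:freeenapprox}.
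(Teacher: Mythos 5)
Your reduction to a constrained Gaussian integral --- the angular parametrization, the cubic Taylor remainder absorbed via $\beta J\Delta^3<\delta$, the period-2 Fourier block structure producing $\prod_{\bk}\det M(\bk,\theta)$, and the use of Proposition~\ref{P:propertiesofM} both for integrability of $\log\det M$ and for the uniformity on $\{|\sin 2\theta|>\tau\}$ --- is exactly the paper's spin-wave scheme, and your upper bound (separate the constant mode, which the constraint confines to an interval of length $O(\Delta')$, drop the constraint on its orthogonal complement, and pay only $o_N(1)$ for the special $\bar{\bk}=0$ block) is a legitimate variant of the paper's. The problem is the lower bound. By keeping the Gaussian massless you are forced to prove the confinement estimate $\frac1{N^2}\log\P\bigl(\max_{\br}|\eta_{\br}|<R'\bigr)\ge-\kappa(R')$ with $\kappa(R')\to0$ for a field comparable to the two-dimensional discrete free field. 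That statement is true, but your sketch does not establish it: (i) the block/scale pieces of the free field are not independent, and producing an exact or approximate independence structure (Markov property, explicit white-noise or conditional decomposition) is where the real work sits; and (ii) the bookkeeping across scales is off --- you confine the piece at \emph{every} scale to $[-R'/2,R'/2]$, but there are of order $\log N$ scales, so the reassembled field is only controlled up to $R'\log N$, not $R'$. One must allocate geometrically shrinking windows to the coarser scales (affordable because they contain fewer blocks; a careful count then gives a per-site cost of order $R'^{-2}\log R'$), and that allocation is the actual content of the estimate. As written, the decisive step of your lower bound is a gap.

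The paper avoids this hard step altogether, and you could too: instead of the massless field, insert a small mass by multiplying the integrand by $\exp\{-\frac\lambda2\beta J\sum_{\br}\vartheta_{\br}^2\}\le1$. This replaces $F(\theta)$ by $F_N(\theta,\lambda)$, which converges to $F(\theta)$ as $N\to\infty$, $\lambda\downarrow0$, uniformly on $\{|\sin2\theta|>\tau\}$, and turns the constraint probability into $\E_\lambda(\chi_{\Delta',N})$ under a reflection-positive \emph{massive} Gaussian measure. A chessboard estimate with $B=1$ (Lemma~\ref{T:chessboard} applied to $\P_\lambda$) bounds this below by a product of single-site probabilities, and Chebyshev with the uniform-in-$N$ variance bound $1/(\lambda\beta J)$ gives $\P_\lambda(|\vartheta_{\br}|<\Delta')\ge1-1/(\lambda\beta J\Delta^2)$, cf.~\eqref{E:3.57}--\eqref{GLB}; choosing $\lambda=\sqrt\delta$ and invoking the first condition in \eqref{Delta-cond} makes the per-site loss $O(\sqrt\delta)$. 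So either adopt this mass-regularization route, or supply a complete multiscale proof (with the independence structure and the scale-dependent window allocation made precise) for the massless confinement bound you rely on.
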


We will first focus on Proposition~\ref{P:Gaussian} because its proof explains the reasons underlying the introduction of the quantity $F$.
The proof consists of a sequence of approximations followed by a standard diagonalization of a multivariate Gaussian integral. Let us write each $\bS_{\br}$ as
\begin{equation}
\label{bS-btheta}
\bS_{\br}=\bigl(\cos(\theta+\vartheta_{\br}),\sin(\theta+\vartheta_{\br})\bigr).
\end{equation}
On the event entering the integral \eqref{E:Z-integral}, $\vartheta_{\br}=O(\Delta)$, so if $\Delta$ is small (which we may assume since \eqref{Delta-cond} forces $\Delta<\delta^2$), then $\bvartheta:=(\vartheta_{\br})$ may be regarded as deviation variables. The conversion to a Gaussian integral is performed as follows:

\begin{lemma}
%\label{lemma}
Suppose~$\Delta<1$ and define the function
\begin{equation}
%\label{}
G_{N,\theta}(\bvartheta):=\frac 12\sum_{\langle\br,\br'\rangle_x}(\vartheta_{\br}-\vartheta_{\br'})^2\sin^2(\theta)
+\frac 12\sum_{\langle\br,\br'\rangle_z}(\vartheta_{\br}-\vartheta_{\br'})^2\cos^2(\theta).
\end{equation}
There exists a constant $c\in(0,\infty)$ such that for every $\bS$ that is related to $\bvartheta$ via \eqref{bS-btheta}, with $\vartheta_{\br}\in(-\pi,\pi]$, and obeys $|\bS_{\br}-\hate(\theta)|<\Delta$ at all $\br\in\T_N$,
\begin{equation}
%\label{}
\bigl|\mathcal H_N(\bS)+J|\T_N|-JG_{N,\theta}(\bvartheta)\bigr|\le cJ|\T_N|\Delta^3.
\end{equation}
\end{lemma}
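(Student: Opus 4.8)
The plan is to reduce the claim to an edge-by-edge one-variable estimate by first invoking the rewrite of the energy in Lemma~\ref{lemma-H-rewrite}. Since here $J_1=J_2=J$ and every $O(2)$-spin obeys $[S_{\br}^x]^2+[S_{\br}^z]^2=1$, the diagonal sum in \eqref{E:H-rewrite} equals exactly $-J|\T_N|$, so that
\begin{equation}
\mathcal H_N(\bS)+J|\T_N|=\frac J2\sum_{\langle\br,\br'\rangle_x}(S_{\br}^x-S_{\br'}^x)^2+\frac J2\sum_{\langle\br,\br'\rangle_z}(S_{\br}^z-S_{\br'}^z)^2.
\end{equation}
Comparing with the definition of $G_{N,\theta}(\bvartheta)$, it then suffices to bound, for each edge, the discrepancy between $(S_{\br}^\alpha-S_{\br'}^\alpha)^2$ and $(\vartheta_{\br}-\vartheta_{\br'})^2$ times $\sin^2\theta$ (for $\alpha=x$) or $\cos^2\theta$ (for $\alpha=z$), and then to sum over the $|\T_N|$ edges of each type (each vertex carries two $x$- and two $z$-edges, as noted in the proof of Lemma~\ref{lemma-H-rewrite}).

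First I would turn the Euclidean constraint into an angular one: from $|\bS_{\br}-\hate(\theta)|^2=2-2\cos\vartheta_{\br}=4\sin^2(\vartheta_{\br}/2)$ and the bound $|\sin t|\ge\frac2\pi|t|$ valid on $(-\frac\pi2,\frac\pi2]$, the hypotheses $|\bS_{\br}-\hate(\theta)|<\Delta$ and $\vartheta_{\br}\in(-\pi,\pi]$ jointly force $|\vartheta_{\br}|\le\frac\pi2\Delta$, hence $|\vartheta_{\br}-\vartheta_{\br'}|\le\pi\Delta$ for neighboring sites. Next, for an $x$-edge the mean value theorem applied to $\cos$ gives $S_{\br}^x-S_{\br'}^x=-\sin(\theta+\xi)(\vartheta_{\br}-\vartheta_{\br'})$ with $\xi$ between $\vartheta_{\br}$ and $\vartheta_{\br'}$, so $|\xi|\le\frac\pi2\Delta$; squaring and using that $t\mapsto\sin^2 t$ is $1$-Lipschitz yields
\begin{equation}
\bigl|(S_{\br}^x-S_{\br'}^x)^2-(\vartheta_{\br}-\vartheta_{\br'})^2\sin^2\theta\bigr|=(\vartheta_{\br}-\vartheta_{\br'})^2\bigl|\sin^2(\theta+\xi)-\sin^2\theta\bigr|\le\pi^2\Delta^2\cdot\frac\pi2\Delta.
\end{equation}
Applying the mean value theorem to $\sin$ instead gives the analogous bound with $\cos$ in place of $\sin$ on each $z$-edge. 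Summing over the $|\T_N|$ $x$-edges and the $|\T_N|$ $z$-edges and multiplying by $\frac J2$ produces the asserted inequality with, e.g., $c:=\pi^3$.

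I do not expect a genuine obstacle here: the estimate is a routine first-order Taylor comparison carried out uniformly over the at most $2|\T_N|$ edges, so the resulting constant is visibly independent of $N$, of $\theta$, and (for $\Delta<1$) of $\Delta$. The only point requiring a little care is the quantitative passage from $|\bS_{\br}-\hate(\theta)|<\Delta$ to $|\vartheta_{\br}|=O(\Delta)$ --- this is exactly where the hypothesis $\vartheta_{\br}\in(-\pi,\pi]$ is used, as it excludes the spurious large-angle representations of a spin close to $\hate(\theta)$ that would otherwise destroy the bound on $(\vartheta_{\br}-\vartheta_{\br'})^2$.
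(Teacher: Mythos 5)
Your argument is correct and follows essentially the same route as the paper's proof: it starts from the rewrite of $\mathcal H_N$ in Lemma~\ref{lemma-H-rewrite} (with $J_1=J_2=J$ the diagonal term is exactly $-J|\T_N|$), uses $|\bS_{\br}-\hate(\theta)|<\Delta$ together with $\vartheta_{\br}\in(-\pi,\pi]$ to get $\vartheta_{\br}=O(\Delta)$, and then compares $(S_{\br}^\alpha-S_{\br'}^\alpha)^2$ with $(\vartheta_{\br}-\vartheta_{\br'})^2\sin^2\theta$ (resp.\ $\cos^2\theta$) edge by edge via a first-order Taylor/mean-value estimate, summing over the $O(|\T_N|)$ edges. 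The only difference is that you make the $O(\Delta^2)$ step of the paper fully quantitative (Jordan's inequality and an explicit constant $c=\pi^3$), which is a harmless refinement, not a different method.
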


\begin{proof}{Proof}
With the restriction $\vartheta_{\br}\in(-\pi,\pi]$ and~$\Delta<1$, the correspondence between~$\bS$ and~$\bvartheta$ is one-to-one. The restriction $|\bS_{\br}-\hate(\theta)|<\Delta$ implies~$\vartheta_{\br}=O(\Delta)$ uniformly in~$\br\in\T_N$.
The claim now follows by writing $\mathcal H_N$ in the form \eqref{E:H-rewrite}, and noting that
\begin{equation}
%\label{}
S_{\br}^x-S_{\br'}^x=(\vartheta_{\br}-\vartheta_{\br'})\sin\theta+O(\Delta^2)
\end{equation}
and similarly for $S_{\br}^z-S_{\br'}^z$.
\end{proof}

If we set $\Delta':=2\arcsin(\ffrac\Delta2)$ for the maximal angle between $\bS_{\br}$ and $\hate(\theta)$ allowed by the constraints, we thus have
\begin{equation}
\label{E:3.51}
\mathcal Z_N(\theta,\Delta)=\texte^{O(\Delta^3)\beta J|\T_N|}\int_{\R^{\T_N}}\texte^{-\beta J G_{N,\theta}(\vartheta)}\chi_{\Delta',N}(\bvartheta)\prod_{\br\in\T_N}\frac{\textd\vartheta_{\br}}{\sqrt{2\pi}},
\end{equation}
where $\sqrt{2\pi}$ appears due to our normalization convention for $\nu$ and where
\begin{equation}
\chi_{\Delta',N}(\bvartheta):=\prod_{\br\in\T_N}1_{(-\Delta',\Delta')}(\vartheta_{\br}).
\end{equation}
In order to evaluate the expression in \eqref{E:3.51}, we will notice the following relation between the function $G_{N,\theta}$ and the matrices $M(\bk,\theta)$. Consider the Fourier-reciprocal torus, 
\begin{equation}
%\label{}
\T_N^\star:=\bigl\{{\textstyle\frac{2\pi}N}(n_1,n_2)\colon n_1,n_2=0,\dots,N-1\bigr\},
\end{equation}
and let $(\T_N^\star)_+$ be its first quadrant, i.e., the collection of $\bk=(k_1,k_2)\in\T_N^\star$ satisfying $0\le k_1,k_2<\pi$. Let $M(\theta)_{\bk,\bk'}$ be the $N^2\times N^2$ matrix indexed by $\bk,\bk'\in\T_N^\star$ that is block diagonal and whose entry is zero unless $\bk,\bk'\in\{\bar{\bk},\bar{\bk}+\pi\hate_1,\bar{\bk}+\pi\hate_2,\bar{\bk}+\pi\hate_1+\pi\hate_2)$ for some $\bar{\bk}\in(\T_N^\star)_+$, and whose entries for these four values are collected (in the given order) in the $4\times4$-matrix $M(\bk,\theta)$ defined above.

\begin{lemma}
\label{lemma3.6}
We have
\begin{equation}
\label{E:GN-M}
G_{N,\theta}(\bvartheta)=\sum_{\br,\br'\in\T_N}\widehat M(\theta)_{\br,\br'}\vartheta_{\br}\vartheta_{\br'},
\end{equation}
where
\begin{equation}
\label{E:hatM}
\widehat M(\theta)_{\br,\br'}:=\frac1{|\T_N|}\sum_{\bk,\bk'\in\T_N^\star}M(\theta)_{\bk,\bk'}\texte^{\texti(\bk-\bk')\cdot\br}.
\end{equation}
\end{lemma}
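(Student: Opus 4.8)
The plan is to prove \eqref{E:GN-M} by a direct Fourier computation, matching the quadratic form $G_{N,\theta}$ against the block-diagonal matrix $M(\theta)$ mode by mode. First I would introduce the Fourier transform $\hat\vartheta_{\bk}:=|\T_N|^{-1/2}\sum_{\br\in\T_N}\vartheta_{\br}\texte^{-\texti\bk\cdot\br}$ for $\bk\in\T_N^\star$, so that Parseval gives $\sum_{\br}\vartheta_{\br}^2=\sum_{\bk}|\hat\vartheta_{\bk}|^2$ and, more importantly, the discrete gradients transform diagonally up to the period-2 structure. Concretely, for an $x$-edge pointing in the $\hate_1$-direction one has $\sum_{\br\colon\langle\br,\br+\hate_1\rangle_x}(\vartheta_{\br}-\vartheta_{\br+\hate_1})^2$; because $x$-edges occur only on the even sublattice in the appropriate checkerboard sense, summing over such edges introduces the factor $a_-=1-\texte^{-\texti k_1}$ together with a coupling between $\bk$ and $\bk+\pi\hate_1$ (the $\pi$-shift being exactly the signature of the 2-periodicity). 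I would write out the four contributions — the two $x$-type edge sums weighted by $\sin^2\theta$ and the two $z$-type edge sums weighted by $\cos^2\theta$ — expressing each in terms of $a_\pm,b_\pm$ and $\rho=-\cos(2\theta)=\sin^2\theta-\cos^2\theta$ appropriately; note that $\sin^2\theta=\tfrac12(1+\rho)$ (up to sign convention) so the off-diagonal $\rho$-entries in $M(\bk,\theta)$ arise precisely from the cross terms between the $\sin^2\theta$ and $\cos^2\theta$ pieces acting on the same Fourier quadruplet.

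The key bookkeeping step is to group the modes of $\T_N^\star$ into quadruplets $\{\bar\bk,\bar\bk+\pi\hate_1,\bar\bk+\pi\hate_2,\bar\bk+\pi\hate_1+\pi\hate_2\}$ indexed by $\bar\bk\in(\T_N^\star)_+$, and to check that $G_{N,\theta}$ decouples across distinct quadruplets while within each quadruplet the $4\times4$ Hermitian form is exactly $\tfrac12$ times the matrix $M(\bk,\theta)$ of \eqref{E:Pi}. Here the even/odd sublattice decomposition of the edges is what produces the $\pi$-shifts: writing $\1_{\br\text{ even}}=\tfrac12(1+(-1)^{r_1+r_2})$ or the appropriate coordinate-wise version, the $(-1)^{r_i}$ factor becomes multiplication by $\texte^{\texti\pi k_i}$-shift in Fourier space, which is exactly the map sending $\bk\mapsto\bk+\pi\hate_i$. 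Once this identification is made, \eqref{E:GN-M}–\eqref{E:hatM} follows by transforming back: $\widehat M(\theta)_{\br,\br'}$ as defined in \eqref{E:hatM} is literally the spatial representation of the block-diagonal $M(\theta)$, and $\sum_{\br,\br'}\widehat M(\theta)_{\br,\br'}\vartheta_{\br}\vartheta_{\br'}=\sum_{\bk,\bk'}M(\theta)_{\bk,\bk'}\hat\vartheta_{\bk}^\ast\hat\vartheta_{\bk'}$ by Parseval, which we have just shown equals $G_{N,\theta}(\bvartheta)$.

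I expect the main obstacle to be the sign and indexing conventions: getting the assignment of $a_\pm$ versus $b_\pm$ to the correct matrix entries right, tracking which plaquettes are $x$-type versus $z$-type relative to the chosen origin of $\Lambda_B$, and making sure the factor $\tfrac12$ in \eqref{E:Pi} and the $\sin^2\theta$/$\cos^2\theta$ weights combine consistently with the definition $\rho=-\cos(2\theta)$. A clean way to avoid errors is to verify the claim first in the two limiting cases $\theta=0$ (so $\rho=1$, only $z$-edges contribute, $M$ should reduce to the discrete Laplacian on the $z$-sublattice graph) and $\theta=\pi/4$ (so $\rho=0$, $M$ becomes block-diagonal in the $2\times2$ sense with no coupling between $\sin$ and $\cos$ sectors), and then argue the general case by the bilinearity in $\sin^2\theta$ and $\cos^2\theta$. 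The remaining computation is routine expansion of squares in Fourier variables, so I would state the mode-decoupling as the crux and relegate the entry-by-entry check to a short verification.
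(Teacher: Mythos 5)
Your proposal follows essentially the same route as the paper's proof: Fourier-transform the quadratic form, observe that the 2-periodic checkerboard couplings give nonzero matrix elements only between $\bk$ and $\bk+\pi\hate_i$, and identify the resulting $4\times4$ block on each quadruplet $\{\bar{\bk},\bar{\bk}+\pi\hate_1,\bar{\bk}+\pi\hate_2,\bar{\bk}+\pi\hate_1+\pi\hate_2\}$ with $M(\bk,\theta)$, exactly as in the paper (where the coupling Fourier coefficients come out as $\hat J_{\bk,\bk}(\hate)=\ffrac12$ and $\hat J_{\bk,\bk+\pi\hate}(\hate)=-\ffrac12\cos(2\theta)=\ffrac\rho2$). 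Only fix two bookkeeping slips when you carry it out: the per-quadruplet block is $M(\bk,\theta)$ itself, not $\ffrac12 M(\bk,\theta)$, since the factor $\ffrac12$ is already built into \eqref{E:Pi}; and in your sanity check $\theta=0$ gives $\rho=-\cos 0=-1$ (it is $\theta=\ffrac\pi2$ that gives $\rho=1$), consistent with your own relation $\sin^2\theta=\ffrac12(1+\rho)$.
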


\begin{proof}{Proof}
We may write $G_{N,\theta}$ as
\begin{equation}
%\label{}
G_{N,\theta}(\bvartheta)=\sum_{\br\in\T_N}\sum_{\hate=\hate_1,\hate_2} J_{\br,\hate}(\vartheta_{\br}-\vartheta_{\br+\hate})^2,
\end{equation}
where $J_{\br,\hate_1}:=\sin^2(\theta)$ for vertices $\br$ with even first coordinate and $J_{\br,\hate_1}:=\cos^2(\theta)$ for vertices~$\br$ with odd first coordinate, and same for $J_{\br,\hate_2}$ and the second coordinate of $\br$. Invoking the (discrete) Fourier representation $\vartheta_{\br}=|\T_N|^{-1/2}\sum_{\bk\in\T_N^\star}\widehat\vartheta_{\bk}\,\texte^{-\texti\bk\cdot\br}$ we now get\begin{equation}
\label{E:G_Nrepr}
G_{N,\theta}(\bvartheta)=\sum_{\bk,\bk'\in\T_N}\sum_{\hate=\hate_1,\hate_2} \widehat\vartheta_{\bk}\widehat\vartheta_{\bk'}^\ast(1-\texte^{-\texti\bk\cdot\hate})(1-\texte^{\texti\bk'\cdot\hate})\hat J_{\bk,\bk'}(\hate),
\end{equation}
where
\begin{equation}
%\label{}
\hat J_{\bk,\bk'}(\hate):=\frac1{|\T_N|}\sum_{\br\in\T_N}J_{\br,\hate}\,\texte^{\texti(\bk'-\bk)\cdot\br}.
\end{equation}
Now, since the couplings $J_{\br,\hate}$ are 2-periodic in direction $\hate$ and translation invariant in the complementary direction, $\hat J_{\bk,\bk'}(\hate)$ will be non-zero only when $\bk'=\bk$ or $\bk'=\bk+\pi\hate$. In these two cases we get
\begin{equation}
%\label{}
\hat J_{\bk,\bk}(\hate)=\frac12\quad\mbox{and}\quad\hat J_{\bk,\bk+\pi\hate}(\hate)=-\frac12\cos(2\theta).
\end{equation}
Consider a fixed vector $\bar{\bk}\in(\T_N^\star)_+$ and consider the restriction of the sum in \eqref{E:G_Nrepr} to $\bk,\bk'\in\{\bar{\bk},\bar{\bk}+\pi\hate_1,\bar{\bk}+\pi\hate_2,\bar{\bk}+\pi\hate_1+\pi\hate_2\}$. Let $a_\pm,b_\pm$ denote the quantities in \eqref{E:3.17} for this $\bar{\bk}$. The quadratic form corresponding to $\hate:=\hate_1$ is then described by the matrix
\begin{equation}
\label{E:Pi1}
%\!\!\!\!\!\!\!\!\!\!\!\!\!\!\!\!\!\!\!\!\!\!\!\!\!\!
\frac12\left(\,
\matrix{
|a_-|^2 & \rho a_-a_+^\ast & 0 & 0
\cr 
\rho a_-^\ast a_+ & |a_+|^2 & 0 & 0 
\cr
0  & 0 & |a_-|^2 & \rho a_-a_+^\ast
\cr
0 & 0 & \rho a_-^\ast a_+  & |a_+|^2
}
\,\right),
\end{equation} 
while the contribution corresponding to $\hate:=\hate_2$ is described by the matrix
\begin{equation}
\label{E:Pi2}
%\!\!\!\!\!\!\!\!\!\!\!\!\!\!\!\!\!\!\!\!\!\!\!\!\!\!
\frac12\left(\,
\matrix{
|b_-|^2 & 0 & \rho b_-b_+^\ast & 0
\cr 
0 & |b_-|^2 & 0 & \rho b_-b_+^\ast 
\cr
\rho b_-^\ast b_+  & 0 & |b_+|^2 & 0
\cr
0 & \rho b_-^\ast b_+ & 0  & |b_+|^2
}
\,\right).
\end{equation} 
Adding these contributions together, we obtain \eqref{E:GN-M}--\eqref{E:hatM} with \eqref{E:Pi}.
\end{proof}

\begin{proof}{Proof of Proposition~\ref{P:Gaussian}}
Let $I(\theta,\Delta)$ denote the integral  in \eqref{E:3.51}.
Our goal is to evaluate $I(\theta,\Delta)$ to within  multiplicative correction of the order $\texte^{O({\tau})|\T_N|}$. If it were not for the indicator~$\chi_{\Delta',N}(\bvartheta)$, the integral would be Gaussian; unfortunately, as $G_{N,\theta}(\bvartheta)$ depends only on the differences $\vartheta_{\br}-\vartheta_{\br'}$, it would also diverge. We will therefore have to treat the indicator with some extra care by deriving suitable upper and lower bounds. 

First, for any $\lambda>0$,
\begin{equation}
%\label{}
\chi_{\Delta',N}(\bvartheta)\le\exp\Bigl\{\frac\lambda2\beta Jc'\Delta^2|\T_N|-\frac\lambda2\beta J\sum_{\br\in\T_N}\vartheta_{\br}^2\Bigr\},
\end{equation}
where $c':=\sup_{|\Delta|<2}(\Delta'/\Delta)^2\in[1,\infty)$. Substituting this bound into $I(\theta,\Delta)$, we now scale $\vartheta_{\br}$ by~$\sqrt{\beta J}$ and use Lemma~\ref{lemma3.6} to diagonalize the quadratic form in the exponent to get
\begin{equation}
%\label{}
I(\theta,\Delta)\le(\beta J)^{-\frac12|\T_N|}\,\texte^{\frac12\lambda\beta Jc'\Delta^2|\T_N|}\prod_{\bk\in(\T_N^\star)_+}\frac1{\sqrt{\det(\lambda+M(\bk,\theta))}}.
\end{equation}
The $\lambda$ in the denominator now regularizes the contribution of the $\bk=0$ mode. We thus get the bound
\begin{equation}
\label{E:3.54}
\frac{\log\mathcal Z_N(\theta,\Delta)}{|\T_N|}\le c\beta J\Delta^3+\frac\lambda2\beta Jc'\Delta^2-F_N(\theta,\lambda),
\end{equation}
where
\begin{equation}
%\label{}
F_N(\theta,\lambda):=\frac12\log(\beta J)+\frac12\frac1{|\T_N|}\sum_{\bk\in(\T_N^\star)_+}\log\det\bigl[\lambda+ M(\bk,\theta)\bigr].
\end{equation}
Since $F_N(\theta,\lambda)$ is, in the limit $N\to\infty$, finite and continuous in $\theta$ and also obviously larger than $F(\theta)$ for any $\lambda>0$, it suffices to choose~$\lambda$ so that the first two terms on the right of \eqref{E:3.54} are strictly less than~$\tau$.
In particular, taking $\lambda:=\Delta$, it suffices to choose $\delta$ so that $(c+\frac12c')\delta\le \tau$
and pick $N_0$ so that $F_N(\theta,\Delta)>F(\theta)$ for $N\ge N_0$ and all $\theta$, to get $N^{-2}\log\mathcal Z_N(\theta,\Delta) + F(\theta)\le \tau$ whenever $\beta J\Delta^3\le \delta$.

The requisite lower bound is derived by a change-of-measure argument. Consider the Gaussian measure underlying the upper bound above:
\begin{equation}
%\label{}
\P_\lambda(\textd\bvartheta):=\texte^{|\T_N|F_N(\theta,\lambda)-\beta J G_{N,\theta}(\bvartheta)}\exp\Bigl\{-\frac\lambda2\beta J\sum_{\br\in\T_N}\vartheta_{\br}^2\Bigr\}\prod_{\br\in\T_N}\frac{\textd\vartheta_{\br}}{\sqrt{2\pi}}.
\end{equation}
Using $\E_\lambda$ to denote the corresponding expectation, for any $\lambda>0$ we clearly have
\begin{equation}
\label{E:3.57}
\mathcal Z_N(\theta,\Delta)\ge\texte^{-c\beta J\Delta^3|\T_N|-|\T_N|F_N(\theta,\lambda)}\E_\lambda(\chi_{\Delta',N}).
\end{equation}
Since $F_N(\theta,\lambda)$ will tend to $F(\theta)$ when $N\to\infty$ and $\lambda\downarrow0$, uniformly in $N$ with $|\sin(\theta)|>\tau$, it will suffice to show that, for any $\lambda>0$ small enough, $\E_\lambda(\chi_{\Delta',N})^{1/|\T_N|}$ is near one uniformly as~$N\to\infty$. To this end, we first use chessboard estimates  ---  exactly in the setting described above with $B:=1$  ---  for the Gaussian measure $\mathbb P_\lambda$ to derive
\begin{equation}
%\label{}
\E_\lambda(\chi_{\Delta',N})=\E_\lambda\Bigl(\,\prod_{\br\in\T_N}1_{\{|\vartheta_{\br}|<\Delta'\}}\Bigr)
\ge\prod_{\br\in\T_N}\P_\lambda\bigl(|\vartheta_{\br}|<\Delta'\bigr).
\end{equation}
Next the Chebyshev inequality yields
\begin{equation}
%\label{}
\P_\lambda\bigl(|\vartheta_{\br}|\ge\Delta'\bigr)\le\P_\lambda\bigl(|\vartheta_{\br}|\ge\Delta\bigr)\le
\frac{\E_\lambda(\vartheta_{\br}^2)}{\Delta^2}=\frac{\mbox{Var}_\lambda(\vartheta_{\br})}{\Delta^2},
\end{equation}
where we used that~$\Delta'\ge\Delta$.
The variance of a Gaussian variable increases when we make the corresponding quadratic form in the exponent smaller (as a matrix). As this quadratic form is bounded below by $\lambda\beta J$ times identity, the thus get
\begin{equation}
\label{GLB}
\P_\lambda\bigl(|\vartheta_{\br}-\theta|<\Delta\bigr)\ge1-
\frac1{\lambda\beta J\Delta^2}.
\end{equation}
To derive the matching lower bound from \eqref{E:3.57}, we thus need to choose $\lambda$ small so that $F_N(\theta,\lambda)$ is already close to $F(\theta)$ but such that $\lambda\beta J\Delta^2$ is still large.
A specific choice that will work is as follows: Pick $\delta>0$ so small that, for some $N_0$, we have $\abs{F_N(\theta,\sqrt\delta)-F(\theta)}<\frac{\tau}2$
for all $N\ge N_0$ and all $\theta$ with $|\sin(2\theta)|>\tau$. In addition, assume that also $c\delta-\log(1-\sqrt\delta)<\frac\tau2$. The choice $\lambda:=\sqrt\delta$ and the bounds \eqref{E:3.57} and \eqref{GLB} then yield the desired lower bound on $\log\mathcal Z_N(\theta,\Delta)$ once both conditions \eqref{Delta-cond} are satisfied.
\end{proof}

\begin{proof}{Proof of Proposition~\ref{P:propertiesofM}}
Assume that $\bk$ is such that $a_\pm,b_\pm\ne0$ and recall the definition of~$\rho$.
First, notice that $\det M(\bk,\theta)=0$ when $\rho=\pm1$. The case $\rho=1$ is easily checked by multiplying the matrix $M(\bk,\theta)$ by the vector $(1,-1,-1,1)$; the case $\rho=-1$ is checked using the vector $(1,1,1,1)$. Since  $\det M(\bk,\theta)$ is an even quartic polynomial in $\rho$ that is divisible by $(1-\rho^2)$, it can be written in the form
\begin{equation}
\label{E:det-tilde-M}
\det M(\bk,\theta)=(1-\rho^2)(A-\rho^2 C).
\end{equation}
An explicit computation yields
\begin{equation}
%\label{}
A:=\frac1{16}\bigl(|a_-|^2+|b_-|^2\bigr)\bigl(|a_+|^2+|b_-|^2\bigr)\bigl(|a_-|^2+|b_+|^2\bigr)\bigl(|a_+|^2+|b_+|^2\bigr)
\end{equation}
and
\begin{equation}
%\label{}
C:=\frac1{16}\bigl(|a_-|^2|a_+|^2-|b_-|^2|b_+|^2\bigr)^2.
\end{equation}
Obviously, under the conditions $a_\pm,b_\pm\ne0$ we have $A>0$ and $C\ge0$. Moreover, since $|a_\pm|,|b_\pm|\le2$, we have $A\le 16$. Therefore
\begin{equation}
%\label{}
\frac{|a_-|^2|a_+|^2|b_-|^2|b_+|^2}{16}\le A-C\le A-\rho^2C\le A\le16.
\end{equation}
The left-hand side equals $\sin^2(k_1)\sin^2(k_2)$ and so \eqref{E:det-M-bd} follows by plugging the above inequalities  into \eqref{E:det-tilde-M}.

To get the second part of the claim, let $g(x):=(1-x)(A-Cx)$ be the function appearing on the right-hand side of \eqref{E:det-tilde-M}. Then $g'(0)=-A-C<0$ and $g'(1)=-A+C<0$. But $g$ is quadratic and so it is strictly decreasing throughout $[0,1)$. In particular, $g(1)<g(x)$ for all $x\in[0,1)$. Using this in \eqref{E:3.19}, the desired claims follow.
\end{proof}

\subsection{Good and bad events}
For the proof of our key estimates, we will follow, as in the previous sections, the general scheme  developed in \cite{Biskup:2005p701,Biskup:2004p697} and further discussed in~\cite[Sect.~6.4]{Biskup-book}. For  a positive constant $\eta$  and an even integer~$B$ that divides~$N$, define the \emph{good} $B$-block events $\Gcal_x$ and $\Gcal_z$ as follows. First set
\begin{equation}
%\label{}
\Gcal_x^0:=\bigcap_{\br\in\Lambda_B}\bigl\{\bS\colon\,\,|\bS_{\br}-\be_1|<\eta\bigr\},
\end{equation}
and similarly for~$\Gcal_z^0$ with~$\be_1$ replaced by~$\be_2$.
Now let $\Lambda_B^0$ denote the even-sublattice vertices in $\Lambda_B$; these are the lower-left corners of either $x$ or~$z$-plaquettes.  We use $\Lambda_B^1$ to denote the set of those vertices in $\Lambda_B^0$
that are lower-left corners of $x$-plaquettes, $\Lambda_B^2=\Lambda_B^0\setminus \Lambda_B^1$.
For any set $\Lambda\subset\Lambda_B^0$, let $\bvarphi_\Lambda(\bS)$ be the composition of $\bvarphi_{\br}$ for all $\br\in\Lambda$. As these maps commute, the order of the composition is immaterial. With the help of these notations, we now set
\begin{equation}
%\label{}
\Gcal_\alpha:=\bigcup_{\Lambda\subset\Lambda_B^0}\bvarphi_\Lambda(\Gcal_\alpha^0),\qquad\alpha=1,2.
\end{equation}
These are the \emph{good} $B$-block events; the requisite \emph{bad} event is defined by
\begin{equation}
%\label{}
\Bcal:=(\Gcal_x\cup\Gcal_z)^{\rm c}.
\end{equation}
For a $B$-block event $\Acal$ define the quantity
\begin{equation}
\label{E:pbeta}
\mathfrak p_\beta(\Acal):=\lim_{N\to\infty}\biggl[\,\mu_{N,\beta}\biggl(\,\,\bigcap_{\br\in\T_{N/B}}\theta_{\br}(\mathcal A)\biggr)\biggr]^{(B/N)^2},
\end{equation}
where $N$ is taken to infinity along the even multiples of~$B$. The limit exists by standard subadditivity arguments. Note that this quantity is a limiting version of the objects on the right-hand side of the chessboard estimates. A key input for our proofs is the observation that $\mathfrak p_\beta(\Bcal)$ is small:

\begin{prop}
\label{P:smallBcal}
For each $\eta>0$ and ${\e}>0$ there exists $\beta_0<\infty$ such that for any $\beta\ge \beta_0$
there exists $B$ for which the bad event $\Bcal$ defined using these $\eta$ and $B$ satisfies
\begin{equation}
%\label{}
\mathfrak p_\beta(\Bcal)<{\e}.
\end{equation}
\end{prop}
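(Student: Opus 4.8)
The plan is to estimate $\mathfrak p_\beta(\Bcal)$ via the chessboard estimate (Lemma~\ref{T:chessboard}) combined with the Gaussian free-energy computation of Propositions~\ref{P:propertiesofM} and~\ref{P:Gaussian}. The central quantity is $\mathfrak p_\beta(\Bcal)$, which by definition is the $N\to\infty$ limit of $[\mu_{N,\beta}(\bigcap_{\br}\theta_{\br}(\Bcal))]^{(B/N)^2}$, i.e., the exponential rate of the probability that the bad event is \emph{disseminated} across the whole torus. The strategy is to bound this probability from above by a ratio of partition functions: the numerator is the partition function restricted to configurations in which every translate-reflect of $\Bcal$ occurs, and the denominator is the full partition function $Z_{N,\beta}$. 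For the denominator we have the easy lower bound coming from configurations near a constant ground state pointing along a principal axis; by Proposition~\ref{P:Gaussian} (applied at, say, $\theta=0$, where $\sin(2\theta)=0$ and $F$ is $-\infty$, so one should rather use the explicit ground-state energy together with a crude spin-wave lower bound, or equivalently compare to $\mathcal Z_N(0,\Delta)$ estimated directly) one gets $\frac1{N^2}\log Z_{N,\beta}\ge -\beta J + o(1)$ up to an error that can be made small.

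The heart of the matter is the numerator. First one covers the complement of $\Gcal_x\cup\Gcal_z$ by finitely many $B$-block events: a configuration lies in $\Bcal$ if either (a) some bond inside $\Lambda_B$ carries a large gradient of the relevant component, or (b) all gradients are small but the common direction $\hate(\theta)$ that the (gauge-transformed) configuration is close to has $|\sin(2\theta)|$ bounded away from $0$, i.e., $\hate$ is far from all four coordinate directions. Because the maps $\bvarphi_{\br}$ are symmetries of $\Hcal_N$ and act locally, one may, after disseminating, gauge the configuration on each block so that the constraint becomes: either there is a block with an "energetic" defect (a bond gradient exceeding $\eta$, say, which costs an energy of order $\eta^2$ per such bond, hence an exponentially small factor $\texte^{-c\beta\eta^2|\T_N|/B^2}$ after dissemination), or the configuration is everywhere within $\Delta\sim\eta$ of a \emph{global} constant direction $\hate(\theta)$ with $|\sin(2\theta)|>\tau$ for a suitable $\tau=\tau(\eta)$. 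The disseminated probability of the latter is, after summing over a net of admissible angles $\theta$ (finitely many, with spacing $\sim\Delta$), bounded by
\begin{equation}
\sup_{|\sin(2\theta)|>\tau}\frac{\mathcal Z_N(\theta,\Delta)}{Z_{N,\beta}}\times(\text{polynomial in }1/\Delta),
\end{equation}
and Proposition~\ref{P:Gaussian} gives $\frac1{N^2}\log\mathcal Z_N(\theta,\Delta)\le -F(\theta)+\tau'$, while Proposition~\ref{P:propertiesofM} tells us $F(\theta)$ is bounded \emph{below} (away from $-\infty$) on $\{|\sin(2\theta)|>\tau\}$, say $F(\theta)\ge m(\tau)>-\infty$. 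Hence the disseminated probability of this branch is at most $\texte^{(-m(\tau)+\tau'+\beta J-|\text{denominator rate}|+o(1))|\T_N|}$; since the ground-state energy $-\beta J$ exactly matches $-\inf F$ (both equal $-\infty$ only at the coordinate directions, and $-\beta J$ in the normalized sense the relevant comparison constant), choosing $\tau$ small makes $m(\tau)$ \emph{large and positive} relative to the ground-state value, so this rate is strictly negative and the contribution is $\le\texte^{-c(\tau)|\T_N|}$.

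Combining the two branches, $\mathfrak p_\beta(\Bcal)\le\texte^{-c\beta\eta^2}+\texte^{-c'(\eta)\beta}$ for $\beta$ large (the first term from energetic defects, uniformly in $B$; the second from the wrong-direction branch, after fixing $B=B(\beta)$ large enough that the $o(1)$ errors and the $\tau'$ slack in Proposition~\ref{P:Gaussian} are absorbed), and both terms are $<\e/2$ once $\beta\ge\beta_0(\eta,\e)$. More precisely: given $\eta$ and $\e$, first pick $\tau$ so small that $m(\tau)$ dominates; then Proposition~\ref{P:Gaussian} fixes $\delta$ and $N_0$; then pick $\Delta\sim\eta$ (shrinking $\eta$ if necessary so that $\Delta$ meets the constraint $\beta J\Delta^2>1/\delta$, $\beta J\Delta^3<\delta$ for the relevant range of $\beta$) and $B$ large; finally take $\beta_0$ large enough that both exponential bounds are below $\e/2$. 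The main obstacle I anticipate is bookkeeping the order of quantifiers — $\eta$ is given first and fixed, but the Gaussian estimate needs $\beta\Delta^2$ large and $\beta\Delta^3$ small simultaneously, which forces $\Delta$ (hence the geometry of the good event) to be tuned to $\beta$; one must check that the good event defined with the \emph{given} $\eta$ is implied by the good event with the tuned $\Delta$, i.e., that shrinking the angular window only helps. A secondary technical point is that $\theta=0$ itself is a singularity of $F$, so the denominator must be lower-bounded by hand (via the Lemma in Section~\ref{sec-IV} giving $\Hcal_N\approx JG_{N,0}(\bvartheta)-J|\T_N|$ and a one-dimensional spin-wave integral along the "soft" direction), rather than by plugging $\theta=0$ into Proposition~\ref{P:Gaussian}.
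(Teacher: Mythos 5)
Your overall architecture is the same as the paper's (split the bad blocks into an energetic part and a spin-wave part, disseminate via chessboard estimates, and compare spin-wave free energies through $F(\theta)$), but the step that carries the entire content of the proposition is asserted rather than proved. You claim that the full partition function satisfies $\frac1{N^2}\log Z_{N,\beta}\ge \beta J+o(1)$ ``up to an error that can be made small''. This is false: restricting to any neighbourhood of a constant configuration costs entropy of order $\log\beta$ per site, and the true correction is of the same size as $F$ itself (which contains the term $\frac12\log(\beta J)$). Since at $J_1=J_2$ all constant directions are exactly degenerate in energy, the suppression of the wrong-direction blocks comes \emph{only} from the difference between these $\log\beta$-sized spin-wave corrections; if your $o(1)$ bound were available, the proposition would be nearly trivial, which is a sign that the key point has been assumed away. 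The paper's resolution is Lemma~\ref{L:LB-PF}: one bounds $Z_{N,\beta}$ below by $\texte^{\beta J|\T_N|}\mathcal Z_N(\tau,\Delta)$ at a \emph{small nonzero} angle $\tau$ where Proposition~\ref{P:Gaussian} applies, and then chooses $\tau$ so that $F(\ffrac\eta2)>F(\tau)+\tau$, which is possible precisely because $F(\tau)\to-\infty$ as $\tau\downarrow0$ (Proposition~\ref{P:propertiesofM}); the bound \eqref{E:BSWbound} then gives exponential suppression in $B^2$. Your alternative --- a by-hand soft-mode computation of $\mathcal Z_N(0,\Delta)$ --- could in principle replace this, but you have not carried it out, and its outcome is a per-site correction of order $\frac38\log(\beta J)+\frac14\log\frac1\Delta$ (three stiff modes and one soft mode per $z$-plaquette, since at $\theta=0$ only the $z$-edges, which form disjoint plaquettes, enter the quadratic form), not $o(1)$; one must then still verify that this is beaten by $F(\theta)\ge\frac12\log(\beta J)+\frac14\log|\sin2\theta|-O(1)$, which is exactly where the hypothesis $\beta J\Delta^2\gg1$ of \eqref{Delta-cond} is consumed. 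Also your sentence comparing $m(\tau)$ with ``$-\inf F$'' is not meaningful as written, since $\inf F=-\infty$.

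The second gap is in the dichotomy itself and in the tuning of parameters. If every bond gradient in a block is below a threshold $t$, deviations accumulate along paths of length of order $B$, so one only concludes closeness $O(Bt)$ to a constant direction; with your threshold $t\sim\eta$ this gives $O(B\eta)$, which is useless. The paper therefore defines $\BBE$ with the threshold $\Delta/(16B)$, and even then the passage to ``within $\Delta/2$ of a constant'' requires the local gauge-fixing by the maps $\boldsymbol{\varphi}_\Lambda$ and the condition $\Delta<\eta^2$ (Lemma~\ref{L:BBSW-structura}). Your choice $\Delta\sim\eta$ is moreover impossible for large $\beta$, since $\beta J\Delta^3<\delta$ forces $\Delta\to0$, and ``shrinking $\eta$ if necessary'' is not permitted --- $\eta$ is given in the statement --- and would in any case violate $\Delta<\eta^2$. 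The correct bookkeeping, which you flag as the main obstacle but do not supply, is that $\eta$ stays fixed in the definition of the good blocks while only the internal parameters vary with $\beta$, e.g.\ $\Delta=\beta^{-5/12}$ and $B=2\lfloor\log\beta\rfloor$ as in \eqref{E:Delta-B}; with these choices the energetic bound \eqref{E:BEbound} wins because $\beta J\Delta^2/B^2\gg B^2$, and the spin-wave bound \eqref{E:BSWbound} wins because $F(\eta-\Delta)\ge F(\ffrac\eta2)>F(\tau)+\tau$, with the number $n\sim\ffrac{4\pi}\Delta$ of directions in the angular net absorbed into the exponential decay in $B^2$.
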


%\begin{proof}{Proof}
In order to show that the bad event is unlikely to occur, we will need to further partition it into several subevents. First, consider a number $\Delta>0$  ---  to be evetually chosen in dependence of $\beta$ so that 
the conditions \eqref{Delta-cond} hold true  ---  and let us isolate the configurations where the corresponding component of two neighbouring spins somewhere in~$\Lambda_B$ differ by more than a quantity proportional to~$\Delta$:
\begin{equation}
%\label{}
\BBE:=\bigcup_{\alpha=x,z}\bigcup_{\begin{array}{c}\scriptstyle\,\langle\br,\br'\rangle_\alpha\\*[-6pt]\scriptstyle\br,\br'\in\Lambda_B\end{array}} 
\Bigl\{\bS\in\Bcal\colon\,|S_{\br}^{\alpha}-S_{\br'}^{\alpha}|\ge \frac{\Delta}{16B}\Bigr\}.
\end{equation}
These will be the configurations with too much energy  ---  hence the subscript ``E''. 
The remaining ``bad'' configurations will be collected in the set
\begin{equation}
\BBSW:=\Bcal\setminus\BBE,
\end{equation}
where ``SW'' designates the method  ---  a \emph{spin-wave} calculation  ---  that will be used to estimate the contributions to this event.
The following claim provides a key structural information on the configurations contained in $\BBSW$:

\begin{lemma}
\label{L:BBSW-structura}
Suppose that $\eta$ and $\Delta$ satisfy the inequalities
\begin{equation}
\label{E:eta-Delta-bds}
%\Delta\le\frac14\quad\mbox{and}\quad
\Delta<\eta^2<1.
\end{equation}
Then for each $\bS\in \BBSW$, there is a unit vector $\bv\in\R^d$ and a set $\Lambda\subset\Lambda_B^0$ such that
\begin{equation}
%\label{}
\bigl|\bigl(\bvarphi_\Lambda(\bS)\bigr)_{\br}-\bv\bigr|\le \frac\Delta2,\qquad \br\in\Lambda_B .
\end{equation}
\end{lemma}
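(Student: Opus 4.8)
The goal is to show that any $\bS \in \BBSW$ is, after applying a suitable plaquette-flip composition $\bvarphi_\Lambda$, within $\Delta/2$ of a single constant unit vector $\bv$ throughout $\Lambda_B$.

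My plan is as follows. First I would use the defining constraint of $\BBSW = \Bcal \setminus \BBE$: since $\bS \notin \BBE$, every nearest-neighbor pair $\langle\br,\br'\rangle_\alpha$ inside $\Lambda_B$ has $|S_{\br}^\alpha - S_{\br'}^\alpha| < \Delta/(16B)$ for the relevant component $\alpha$. Iterating this bound along a path of at most $2B$ steps connecting any two vertices of $\Lambda_B$ shows that the $x$-component varies by at most (roughly) $\Delta/8$ across $x$-edges and the $z$-component varies by at most $\Delta/8$ across $z$-edges — but this only directly controls one component per edge type, so the real content is to propagate control of \emph{both} components.

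\smallskip
The key step is the flipping procedure. On each $x$-plaquette the variation of the $x$-component is tiny, so $S^x$ is nearly constant there (and similarly $S^z$ on each $z$-plaquette). For each plaquette $P$ in $\Lambda_B^0$, examine the common near-value of the ``own'' component on that plaquette; if it is negative, include the corresponding lower-left corner in $\Lambda$, so that after applying $\bvarphi_\Lambda$ we have arranged $(\bvarphi_\Lambda(\bS))^x \ge -O(\Delta/B)$ on all $x$-plaquettes and $(\bvarphi_\Lambda(\bS))^z \ge -O(\Delta/B)$ on all $z$-plaquettes. Now I would propagate: moving from an $x$-plaquette to an adjacent $z$-plaquette across a shared $x$-edge, the $x$-component barely changes; combined with the fact that a spin's $z$-component is determined by its $x$-component up to sign, and that the $z$-component on the $z$-plaquette is now nonnegative, the sign ambiguity is resolved and the full spin vector is nearly constant across the boundary — this is exactly the mechanism used in the proof of Theorem~\ref{T:ground}, now made quantitative. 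Chaining this across all plaquettes in $\Lambda_B$ yields a single unit vector $\bv$ with $|(\bvarphi_\Lambda(\bS))_{\br} - \bv| \le C(\Delta/B)\cdot B + (\text{spin-determination error})$ for all $\br \in \Lambda_B$, and the constants work out to be below $\Delta/2$.

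\smallskip
The condition $\Delta < \eta^2 < 1$ enters at the point where one passes from ``$S^x$ is small on an $x$-plaquette'' to ``$S^z$ is close to $\pm 1$ there'': if $|S^x| \le \epsilon$ then $|S^z| \ge \sqrt{1-\epsilon^2} \ge 1 - \epsilon^2$, so the distance from the spin to the nearest of $\pm\be_2$ is $O(\epsilon)$ — but to conclude we are in the ``aligned'' regime and not some intermediate configuration we need that being in $\Bcal$ (not good) together with the smallness of all edge-differences forces the near-value on each plaquette to be close to an axis direction rather than, say, a diagonal direction; here $\eta^2 > \Delta$ guarantees the error $O(\Delta/B)$ is genuinely smaller than the scale $\eta$ that separates ``good'' from ``bad,'' so the argument does not collapse. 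The main obstacle I anticipate is bookkeeping the accumulation of errors: each of the $O(B^2)$ plaquettes contributes an error of order $\Delta/B$ relative to its neighbor, and one must choose the path structure (e.g. propagating along rows then columns, using that $\Lambda_B$ is connected through alternating $x$- and $z$-plaquettes) so that the total accumulated error stays $O(\Delta)$ with a small enough constant to beat $\Delta/2$ — this is where the factor $16B$ in the definition of $\BBE$ was chosen, and verifying that this specific constant suffices is the delicate part, though it is ultimately a routine (if tedious) triangle-inequality computation.
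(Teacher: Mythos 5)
Your overall skeleton (flip the plaquettes whose own component is negative, use smallness of edge differences plus the fact that one component of a unit spin determines the other up to sign, then chain along paths of length at most $2B$) is the same as the paper's, but the proposal omits the one step that makes this propagation quantitative, and it is not a routine bookkeeping matter. The paper first proves that for every $\bS\in\BBSW$ one has $[S_{\br}^{\alpha}]^2\ge\zeta:=\frac{\eta^2}{2}-\frac{\Delta}{4}$ for \emph{both} components $\alpha=x,z$ and \emph{all} $\br\in\Lambda_B$, i.e.\ every spin is bounded away from the coordinate axes. This is proved by contradiction using $\bS\in\Bcal$: if, say, $[S_{\br}^{x}]^2<\zeta$ at one site, then because all relevant edge differences are below $\Delta/(16B)$ the smallness propagates along paths of at most $2B$ steps to all of $\Lambda_B$ with accumulated error $\Delta/4$, and since $\Delta<\eta^2$ the flipped configuration is then within $\eta$ of $\hate_2$ everywhere, i.e.\ $\bS\in\Gcal_z$ --- contradicting $\bS\in\Bcal$. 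Note this is the exact opposite of the role you assign to $\Delta<\eta^2$: configurations in $\BBSW$ are forced \emph{away} from the axis directions, not ``close to an axis direction rather than a diagonal direction.''

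Without that lower bound your propagation step fails. The map $S^x\mapsto\sqrt{1-[S^x]^2}$ is not Lipschitz near $S^x=\pm1$, so across an $x$-edge a difference of the $x$-components below $\Delta/(16B)$ is compatible with a difference of order $\sqrt{\Delta/B}\gg\Delta$ in the $z$-components, and even with opposite signs of $S^z$ on the two endpoints of an $x$-edge inside an $x$-plaquette, where the flips give you no control of $S^z$ at all; ``the sign ambiguity is resolved'' is therefore not available from nonnegativity on $z$-plaquettes alone. It is only after the bound $|S_{\br}^{\alpha}|\ge\sqrt\zeta$ is in place that (i) $\sqrt\zeta>2c\Delta$ (with $c=(16B)^{-1}$) guarantees the flipped configuration has all components nonnegative throughout $\Lambda_B$, and (ii) the conversion from control of one component to control of the full spin vector comes with a bounded constant, $|\bS_{\br}-\bS_{\br'}|\le 4c\Delta$ per edge, which is what makes the $2B$-step chaining close below $\Delta/2$. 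As written, your argument would break precisely on configurations with some spins at intermediate distance from an axis, and the missing lemma-within-the-lemma (the uniform lower bound on both components, derived from membership in the bad event) is the actual content of the proof.
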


\begin{proof}{Proof}
Fix $\bS\in\BBSW$ and define $\Lambda\subset \Lambda_B^0$  by taking $\Lambda\cap\Lambda_B^{\a}=\{\br\in\Lambda_B\colon S_{\br}^{\a}<0\}$, $\a=1,2$. Consider the configuration $\tilde{\bS}:=\bvarphi_\Lambda(\bS)$. Clearly, $\tilde{S}_{\br}^{\a}\ge 0$ for all $\br\in \Lambda_B^{\a}$, $\alpha=x,z$. 
The proof of the main claim now comes in two steps. First we will show that both components of all $\bS_{\br}$, $\br\in\Lambda_B$, are at least $\sqrt\zeta$ in absolute value, where 
\begin{equation}
%\label{}
\zeta:=\frac{\eta^2}2-\frac\Delta4.
\end{equation}
This will be used to ensure that $\tilde{S}_{\br}^{\a}\ge 0$ everywhere. Then we will rerun part of the argument to show that $\tilde{\bS}$ is, actually, to within $O(\Delta)$ of a constant configuration.

Let $\zeta$ be as above and abbreviate $c:=(16B)^{-1}$. Notice that the bounds \eqref{E:eta-Delta-bds} imply
\begin{equation}
%\label{}
\zeta+\frac\Delta4\le\frac12,\,\,\,c\Delta<\frac14\quad\mbox{and}\quad\sqrt\zeta>2c\Delta.
\end{equation}
 First, we claim that
\begin{equation}
\label{E:LB-on-component}
\bS\in\BBSW\quad\Rightarrow\quad[S_{\br}^{\alpha}]^2\ge\zeta,\qquad\br\in\Lambda_B,\,\,\alpha=x,z.
\end{equation}
Focusing our attention on  on $\alpha=1$, assume that $[S_{\br}^{x}]^2<\zeta$ at some $\br\in\Lambda_B$ and derive a contradiction with $\bS\in\BBSW$.
Notice that $\bS\in\BBSW$ implies that 
\begin{equation}
\bigl|S_{\br}^{\alpha}-S_{\br'}^{\alpha}\bigr|<c\Delta
\end{equation}
for any $\br,\br'\in\Lambda_B$ connected by an $\alpha$-edge.
If $\br'$ is a neighbor of $\br$ over a $x$-edge, then
\begin{equation}
\label{E:3.63}
[S_{\br'}^{x}]^2<[\sqrt\zeta+c\Delta]^2\le\zeta+2c\Delta,
\end{equation}
where we used that $\zeta\le\frac12$ and $c\Delta\le\frac12$.
On the other hand, if $\br'$ is connected to $\br$ by a $z$-edge, we can proceed via the second components to get again
\begin{eqnarray}
[S_{\br'}^{x}]^2&=1-[S_{\br'}^{z}]^2
\\
&\le1-[S_{\br}^{z}]^2+2|S_{\br'}^{z}-S_{\br}^{z}|
\\
&<\zeta+2c\Delta.
\end{eqnarray}
Examining all pairs of nearest neighbors along a shortest path from $\br$ to any $\br'\in\Lambda_B$ and using that this path has at most $2B$ edges, we conclude
\begin{equation}
[S_{\br'}^{x}]^2<\zeta+(2B)2c\Delta=\zeta+\frac\Delta4,\qquad\br'\in\Lambda_B.
\end{equation}
Notice that to maintain the argument from  \eqref{E:3.63} for the proof of the inequality $[S_{\br''}^{x}]^2<[S_{\br'}^{x}]^2+2c\Delta$ for every step $(\br',\br'')$ of the iteration,
we actually need to invoke that  $\zeta+\frac\Delta4\le\frac12$.
At the same time, since $\sqrt{1-(\zeta+\frac\Delta4)}\ge\frac12\ge2c\Delta$, the triangle inequality implies $\tilde S_{\br'}^{z}\ge0$ on each $z$-plaquette, and thus everywhere in~$\Lambda_B$. In addition, this allows us to compute further:
\begin{eqnarray}
\bigl|\tilde{\bS}_{\br'}-\hate_2\bigr|^2&=[S_{\br'}^{x}]^2+\Bigl(1-\sqrt{1-[S_{\br'}^{x}]^2}\Bigr)^2
\\
&\le [S_{\br'}^{x}]^2+[S_{\br'}^{x}]^4\le 2[S_{\br'}^{x}]^2<2\Bigl(\zeta+\frac\Delta4\Bigr)=2\zeta+\frac\Delta2=\eta^2,
\end{eqnarray}
where we used that $1-\sqrt{1-x}\le x$ for $0<x<1$. This would imply that $\tilde{\bS}\in\Gcal_z^0$ and thus $\bS\in\Gcal_z$, in contradiction with the assumption that $\bS\in\BBSW\subset\Bcal$.

Having proven the bound \eqref{E:LB-on-component}, we can use the fact that $\sqrt\zeta>2c\Delta$ to imply that $\tilde S_{\br}^{\alpha}\ge0$ for all $\br\in\Lambda_B$. 
Combining now the positivity of components of  $\tilde S_{\br}^{\alpha}$ with \eqref{E:LB-on-component}, we can improve the bound on the difference of the 1-components of spins over a $z$-edge $\langle\br,\br'\rangle_z$ as follows: Suppose without loss of generality that $S_{\br'}^{z}\ge S_{\br}^{z}$ and write
\begin{eqnarray}
\bigl|S_{\br'}^{x}-S_{\br}^{x}\bigr|&=\Bigl|\sqrt{1-[S_{\br'}^{z}]^2}-\sqrt{1-[S_{\br}^{z}]^2}\Bigr|\\
&\le \frac{|S_{\br'}^{z}-S_{\br}^{z}|}{\sqrt{1-[S_{\br'}^{z}]^2}}\le\frac{c\Delta}{\sqrt{1-\zeta}}
\end{eqnarray}
In conjunction with $|S_{\br'}^{z}-S_{\br}^{z}|<c\Delta$, this implies
\begin{equation}
%\label{}
\bigl|\bS_{\br'}-\bS_{\br}\bigr|\le c\Delta\frac{2-\zeta}{1-\zeta}\le 4c\Delta
\end{equation}
where we used that $1-\zeta\ge\frac12$.
As $(2B)4c\Delta<\frac\Delta2$, the desired claim now follows with $\bv:=\tilde{\bS}_0$ and $\Lambda$ as above by examining a path 
of minimal length between the origin and any other site in $\Lambda_B$.
\end{proof}

The previous lemma allows us to further partition $\BBSW$ as follows. Let $\bv_1,\dots,\bv_n$ denote the unit vectors representing the complex $n$-th roots of unity, 
$\bv_\ell:= \be(\ell\frac{2\pi}{n})$, $\ell=1,\dots,n$,
where $n:=\lfloor\frac{2\pi}{\Delta}\rfloor+1$ is the smallest integer for which $n\Delta>2\pi$. Defining
\begin{equation}
%\label{}
\BBSW^{(\ell)}:=\bigcup_{\Lambda\subset\Lambda_B^0}\bvarphi_\Lambda\Bigl(\bigl\{\bS\in\BBSW\colon\forall \br\in\Lambda_B\,\,|\bS_{\br}-\bv_{\ell}|<\Delta\bigr\}\Bigr),
\end{equation}
it follows that
\begin{equation}
%\label{}
\BBSW=\bigcup_{\ell=1}^n\BBSW^{(\ell)}.
\end{equation}
Having in mind that $\Delta\ll \eta$ and that, by definition,
$\BBSW\cap (\Gcal_x\cup\Gcal_z)=\emptyset$, 
we notice that  $\BBSW^{(\ell)}=\emptyset$
whenever the distance of $\bv_\ell$ from the points $\pm\hate_1,\pm\hate_2$, is less than $\eta-\Delta$.
The set function $\Acal\mapsto\mathfrak p_\beta(\Acal)$ is subadditive (see~\cite[Lemma~5.9]{Biskup-book}) and so we have
\begin{equation}
\label{E:subadditive}
\mathfrak p_\beta(\Bcal)\le\mathfrak p_\beta(\BBE)+\sum_{\ell=1}^n\mathfrak p_\beta(\BBSW^{(\ell)}).
\end{equation}
It remains to derive suitable estimates on $\mathfrak p_\beta(\BBE)$ and $\mathfrak p_\beta(\BBSW^{(\ell)})$. 
An indispensable ingredient will be the following lower bound on the full partition function:

\begin{lemma}
\label{L:LB-PF}
Fix $\tau>0$ such that $\sin(2\tau)>\tau$ and let $\delta$ and $N_0$ be as in Proposition~\ref{P:Gaussian}. If $\Delta$ and $\beta$ satisfy
the conditions \eqref{Delta-cond} and $N\ge N_0$, then
\begin{equation}
%\label{}
\bigl( Z_{N,\b}\bigl)^{1/|\T_N|}\ge  \texte^{\,\b J - F(\tau) -\tau},
\end{equation}
where $F$ is the free energy introduced in \eqref{E:3.19}.
\end{lemma}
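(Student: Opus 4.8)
The plan is to bound $Z_{N,\b}$ from below by restricting the integral defining the partition function to a single well-chosen ``tube'' of configurations around a constant spin state, and then invoke Proposition~\ref{P:Gaussian}. Concretely, fix $\tau>0$ with $\sin(2\tau)>\tau$ and let $\delta,N_0$ be the constants supplied by that proposition. Since $Z_{N,\b}=\texte^{\b J|\T_N|}\cdot\texte^{-\b J|\T_N|}\int\texte^{-\b\Hcal_N(\bS)}\prod_\br\nu(\d\bS_\br)$, and the integrand is nonnegative, we may drop all of the configuration space except the event $\bigcap_{\br\in\T_N}\{|\bS_\br-\hate(\theta)|<\Delta\}$ appearing in the definition \eqref{E:Z-integral} of $\mathcal Z_N(\theta,\Delta)$. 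This gives the crude but exact bound
\begin{equation}
Z_{N,\b}\ge \texte^{\b J|\T_N|}\,\mathcal Z_N(\theta,\Delta)
\end{equation}
valid for \emph{every} angle $\theta$ and every $\Delta>0$.

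Next I would choose the angle $\theta:=\tau$. Because $\sin(2\tau)>\tau$ by hypothesis, this $\theta$ lies in the regime $|\sin(2\theta)|>\tau$ where Proposition~\ref{P:Gaussian} applies. Assuming $\Delta$ and $\beta$ satisfy the two conditions \eqref{Delta-cond} and $N\ge N_0$, the proposition yields
\begin{equation}
\frac1{N^2}\log\mathcal Z_N(\tau,\Delta)\ge -F(\tau)-\tau,
\end{equation}
i.e.\ $\bigl(\mathcal Z_N(\tau,\Delta)\bigr)^{1/|\T_N|}\ge\texte^{-F(\tau)-\tau}$ (here $|\T_N|=N^2$). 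Combining this with the display above and taking $|\T_N|$-th roots gives exactly $\bigl(Z_{N,\b}\bigr)^{1/|\T_N|}\ge\texte^{\,\b J-F(\tau)-\tau}$, which is the claim. Note $F(\tau)$ is finite precisely because $\sin(2\tau)\ne0$, by Proposition~\ref{P:propertiesofM}.

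There is really no serious obstacle here: the lemma is a packaging step that converts the Gaussian free-energy estimate of Proposition~\ref{P:Gaussian} into a usable lower bound on the \emph{full} partition function, which is what the chessboard-estimate denominators will later be compared against. The only points requiring a word of care are (i) making sure the normalization conventions match --- the factor $\texte^{-\b J|\T_N|}$ is already absorbed into the definition of $\mathcal Z_N$, so it reappears as $\texte^{+\b J|\T_N|}$ when we pass back to $Z_{N,\b}$ --- and (ii) checking that the hypotheses of Proposition~\ref{P:Gaussian} are genuinely in force, namely that the choice $\theta=\tau$ satisfies $|\sin(2\theta)|>\tau$, which is exactly the standing assumption $\sin(2\tau)>\tau$, and that $\Delta,\beta$ obey \eqref{Delta-cond}, which is assumed in the statement. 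Everything else is immediate.
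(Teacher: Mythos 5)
Your proof is correct and is essentially the paper's own argument: restrict the partition-function integral to the tube $\{|\bS_{\br}-\hate(\tau)|<\Delta\}$ to obtain $Z_{N,\b}\ge \texte^{\b J|\T_N|}\mathcal Z_N(\tau,\Delta)$, then apply the lower half of the estimate \eqref{E:freeenapprox} from Proposition~\ref{P:Gaussian}, whose hypothesis $|\sin(2\theta)|>\tau$ holds for $\theta=\tau$ by assumption. Nothing further is needed.
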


\begin{proof}{Proof}
By restricting the integrals over the spins to the set where $|\bS_{\br}-\hate(\tau)|<\Delta$, we get
\begin{equation}
%\label{}
Z_{N,\beta}\ge \texte^{\beta J|\T_N|}\mathcal Z_N(\tau,\Delta).
\end{equation}
From here the claim follows by invoking the bound \eqref{E:freeenapprox}.
\end{proof}

\begin{lemma}
%\label{thm}
Suppose that $\eta\in(0,\ffrac\pi4)$ and, given ${\tau}>0$ with $\sin(2\tau)>\tau$, let $\delta$ be as in Proposition~\ref{P:Gaussian} and assume that $\Delta$ and $\beta$ satisfy the conditions \eqref{Delta-cond} and that $\Delta< \eta^2<(\tfrac{\pi}4)^2$.
Then
\begin{equation}
\label{E:BEbound}
\mathfrak p_\beta(\BBE) \le\,(2B)^2\,\texte^{\,c_1B^2-c_2\beta J\Delta^2/B^2},
%(2\pi)^{\frac12 B^2} 4B^2 \texte^{F(\Delta)B^2+\tau B^2 -\frac{\eta}{\sqrt{17}}\beta J\Delta/(2B)}
\end{equation}
where  $c_1:=\frac12\log(2\pi)+F(\tau)+\tau$ and $c_2:=1/512$, and
\begin{equation}
\label{E:BSWbound}
\mathfrak p_\beta(\BBSW^{(\ell)}) \le   2^{\ffrac14}\,\texte^{-[F(\eta-\Delta)-F(\tau)-\tau]B^2},  \quad \ell=1,\dots,n.
\end{equation}
\end{lemma}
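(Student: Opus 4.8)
The plan is to run the standard chessboard--dissemination argument. Recall that $\mathfrak p_\beta(\Acal)=\lim_{N\to\infty}\bigl[\mu_{N,\beta}\bigl(\bigcap_{\br\in\T_{N/B}}\theta_{\br}(\Acal)\bigr)\bigr]^{(B/N)^2}$, that $\mu_{N,\beta}\bigl(\bigcap_{\br}\theta_{\br}(\Acal)\bigr)=Z_{N,\beta}(\Acal)/Z_{N,\beta}$ with $Z_{N,\beta}(\Acal)$ the partition function restricted to the ``disseminated'' event $\bigcap_\br\theta_\br(\Acal)$, and that $(B/N)^2=1/|\T_{N/B}|$ while $|\T_N|/|\T_{N/B}|=B^2$; hence $\mathfrak p_\beta(\Acal)=\bigl[\lim_N Z_{N,\beta}(\Acal)^{1/|\T_N|}\bigr]^{B^2}\big/\bigl[\lim_N Z_{N,\beta}^{1/|\T_N|}\bigr]^{B^2}$. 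By Lemma~\ref{L:LB-PF} the denominator is bounded below by $\bigl(\texte^{\,\beta J-F(\tau)-\tau}\bigr)^{B^2}$, so in both cases it remains only to bound $Z_{N,\beta}(\Acal)^{1/|\T_N|}$ from above.

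For $\mathfrak p_\beta(\BBE)$, use subadditivity of $\Acal\mapsto\mathfrak p_\beta(\Acal)$ to write $\BBE$ as a union over the at most $(2B)^2$ $\alpha$-edges $e$ with both endpoints in $\Lambda_B$ of the events ``$\bS\in\Bcal$, and the $\alpha$-component jumps by at least $\Delta/(16B)$ across $e$''; then $\mathfrak p_\beta(\BBE)$ is at most $(2B)^2$ times the largest such $\mathfrak p_\beta$. Fix $e$ and drop the harmless constraint $\bS\in\Bcal$. In the disseminated event each $B$-block carries a (possibly reflected) copy of $e$ across which the relevant component jumps by at least $\Delta/(16B)$, and these are distinct interior edges of the torus; by the rewrite $\mathcal H_N(\bS)+J|\T_N|=\tfrac J2\sum_{\langle\br,\br'\rangle_x}(S^x_\br-S^x_{\br'})^2+\tfrac J2\sum_{\langle\br,\br'\rangle_z}(S^z_\br-S^z_{\br'})^2\ge0$ of Lemma~\ref{lemma-H-rewrite} (with $J_1=J_2=J$, so the on-site term equals exactly $-J|\T_N|$), each such jump contributes at least $\tfrac J2(\Delta/16B)^2=J\Delta^2/(512B^2)$ and the contributions add. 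Dropping the remaining nonnegative energy and bounding the spin integral by the total $\nu$-mass $(2\pi)^{|\T_N|/2}$ gives $Z_{N,\beta}(\Acal)\le\texte^{\beta J|\T_N|}\texte^{-\beta|\T_{N/B}|J\Delta^2/(512B^2)}(2\pi)^{|\T_N|/2}$; taking $|\T_N|$-th roots, raising to the power $B^2$ and dividing by the lower bound on $Z_{N,\beta}$ yields $\mathfrak p_\beta\le\texte^{c_1B^2}\texte^{-c_2\beta J\Delta^2/B^2}$ with $c_1=\tfrac12\log(2\pi)+F(\tau)+\tau$, $c_2=1/512$, and multiplying by $(2B)^2$ gives \eqref{E:BEbound}.

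For $\mathfrak p_\beta(\BBSW^{(\ell)})$ the key input is Lemma~\ref{L:BBSW-structura} (available since $\Delta<\eta^2<1$): every $\bS\in\BBSW^{(\ell)}$, after a suitable composition $\boldsymbol\varphi_\Lambda$ of plaquette flips with $\Lambda\subset\Lambda_B^0$, lies within $\Delta/2$ of the constant configuration $\hate(\theta_\ell)$ throughout $\Lambda_B$; moreover $\BBSW^{(\ell)}=\emptyset$ unless $\bv_\ell$ is at distance $\ge\eta-\Delta$ from $\pm\hate_1,\pm\hate_2$, i.e.\ $\theta_\ell$ is bounded away from the multiples of $\ffrac\pi2$, so that $|\sin(2\theta_\ell)|>\tau$ provided $\tau$ was fixed small relative to $\eta$. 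In disseminating $\BBSW^{(\ell)}$ one uses that the reflections appearing in $\theta_\br$ act through edge-bisecting lines that preserve the partition into $x$- and $z$-edges --- the reflection-positivity structure displayed in the proof of Lemma~\ref{T:chessboard} --- hence are symmetries of $\mathcal H_N$ leaving $\prod_\br\nu$ invariant, and conjugate each $\boldsymbol\varphi_\Lambda$ into a plaquette flip of the corresponding block. Thus the disseminated event is a union, over at most $2^{|\Lambda_B^0|\,|\T_{N/B}|}$ global plaquette-flip patterns, of translates under global flips of the near-constant event underlying $\mathcal Z_N(\theta_\ell,\Delta)$, each translate having restricted partition function exactly $\texte^{\beta J|\T_N|}\mathcal Z_N(\theta_\ell,\Delta)$ by an energy- and measure-preserving change of variables. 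Proposition~\ref{P:Gaussian} (its hypotheses \eqref{Delta-cond} hold) then gives $N^{-2}\log\mathcal Z_N(\theta_\ell,\Delta)\le -F(\theta_\ell)+\tau$ for $N\ge N_0$, and Proposition~\ref{P:propertiesofM} (monotonicity of $F$: symmetric about $\ffrac\pi4$, increasing on $(0,\ffrac\pi4)$) together with $\theta_\ell$ being at angular distance $\ge\eta-\Delta$ from the axes gives $F(\theta_\ell)\ge F(\eta-\Delta)$. Assembling the upper bound on $Z_{N,\beta}(\BBSW^{(\ell)})^{1/|\T_N|}$, raising to the power $B^2$, dividing by the lower bound on $Z_{N,\beta}$ and letting $N\to\infty$ yields \eqref{E:BSWbound}, the prefactor arising as $2^{|\Lambda_B^0|}=(2^{\ffrac14})^{B^2}$ since $|\Lambda_B^0|=B^2/4$.

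The $\BBSW^{(\ell)}$ estimate is the main obstacle: one must disseminate $\BBSW^{(\ell)}$ while keeping exact track of the plaquette-flip structure, verify that --- modulo the flip-pattern entropy $2^{|\Lambda_B^0|}$ --- the disseminated event is precisely the near-constant event governing $\mathcal Z_N(\theta_\ell,\Delta)$ (which obliges one to check that the block reflections occurring in the chessboard maps are genuine symmetries of $\mathcal H_N$ and that Lemma~\ref{L:BBSW-structura} applies uniformly), and then ensure that the prefactor $2^{|\Lambda_B^0|}$ and the $O(\tau)$ errors are dominated by the divergence $F(\tau)\to-\infty$ as $\tau\downarrow0$ --- this is precisely what will later force $\beta$ and $B$ large in the proof of Proposition~\ref{P:smallBcal}. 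The $\BBE$ estimate, by contrast, is a routine energy-versus-entropy count once the rewrite of $\mathcal H_N$ is in hand.
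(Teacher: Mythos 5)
Your treatment of \eqref{E:BEbound} is correct and is essentially the paper's own argument: a union bound over the at most $(2B)^2$ edges of $\Lambda_B$, the energy gain $\frac J2(\Delta/16B)^2$ per block extracted from the rewrite in Lemma~\ref{lemma-H-rewrite}, the \emph{a priori} mass $(2\pi)^{1/2}$ per site, and the partition-function lower bound of Lemma~\ref{L:LB-PF}; this reproduces \eqref{E:BEbound} with the stated $c_1,c_2$. Your explicit remark that Proposition~\ref{P:Gaussian} can only be invoked at $\theta_\ell$ when $|\sin(2\theta_\ell)|>\tau$, i.e.\ that $\tau$ must be small relative to $\eta$, is apt --- the paper leaves this implicit (it holds in the application via \eqref{E:tau-choice}).

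The gap is in the prefactor of \eqref{E:BSWbound}. You cover the disseminated event by all $2^{|\Lambda_B^0|\,|\T_{N/B}|}$ global flip patterns; after raising to the power $(B/N)^2$ this leaves a factor $2^{|\Lambda_B^0|}$ per block, and $|\Lambda_B^0|=B^2/2$ (it contains the corners of both the $x$- and the $z$-plaquettes, each family of cardinality $B^2/4$, so your count $B^2/4$ is also off). Hence your covering yields $\mathfrak p_\beta(\BBSW^{(\ell)})\le \texte^{\,\frac12(\log 2)B^2}\,\texte^{-[F(\eta-\Delta)-F(\tau)-\tau]B^2}$, and in any case $(2^{\ffrac14})^{B^2}$ is not the constant $2^{\ffrac14}$ appearing in \eqref{E:BSWbound}; the final step of your computation therefore does not close, and what you prove is \eqref{E:BSWbound} with $2^{\ffrac14}$ replaced by a factor growing exponentially in $B^2$. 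The paper obtains the constant by different bookkeeping: because $\bv_\ell$ has both components bounded away from zero (it is at distance at least $\eta-\Delta$ from $\pm\hate_1,\pm\hate_2$, with $\Delta<\eta^2$), for a configuration in the disseminated event the flip set $\Lambda$ with $\boldsymbol{\varphi}_\Lambda(\bS)$ within $\Delta$ of $\bv_\ell$ is pinned down by the signs of the spin components, and the change of variables $\bS\mapsto\boldsymbol{\varphi}_\Lambda(\bS)$ is charged only with the multiplicity of this map, quoted there as $2^{(N/B)^2/4}$, which becomes $2^{\ffrac14}$ after taking roots. (The paper's count is admittedly terse; but the route you chose --- paying for every per-block flip pattern --- demonstrably cannot give better than $\texte^{cB^2}$.) Your weaker bound would still suffice for Proposition~\ref{P:smallBcal}, since one can simply strengthen the choice of $\tau$ in \eqref{E:tau-choice} so that $F(\ffrac\eta2)-F(\tau)-\tau$ exceeds $\frac12\log 2$, but as written your argument does not establish the inequality \eqref{E:BSWbound} in the form stated.
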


\begin{proof}{Proof}
Consider the disseminated event $\overline{\BBE}:=\bigcap_{\br\in\T_{N/B}}\theta_{\br}(\BBE)$. We need to derive an upper bound on the constrained partition function $Z_{N,\beta}(\overline{\BBE})$, which is given by the same integral as the full partition function but only over configurations in the event $\overline{\BBE}$. Since on $\BBE$, each configuration has an ``energetically charged'' edge in each block in $\{\Lambda_B+B\bt\colon\bt\in\T_{N/B}\}$, Lemma~\ref{lemma-H-rewrite} tells us
\begin{equation}
%\label{}
\mathcal H_N(\bS)\ge\frac12\Bigl(\frac{\Delta}{16B}\Bigr)^2 J (N/B)^2-J|\T_N|,\qquad \bS\in\overline{\BBE}.
\end{equation}
To account for the entropy, we note that there are altogether $2B(B-1)$ positions where the ``energetically charged'' edge can occur in each translate of $\Lambda_B$. Lemma~\ref{L:LB-PF} now gives
\begin{eqnarray}
%\label{}
\mu_{N,\beta}(\overline{\BBE})^{(B/N)^2}&=\left(\frac{Z_{N,\beta}(\overline{\BBE})}{Z_{N,\beta}}\right)^{(B/N)^2}
\\
&\le (2\pi)^{B^2/2}(2B)^2\texte^{-\beta J\Delta^2/512+[F(\tau)+\tau]B^2},
\end{eqnarray}
where the factor $(2\pi)^{B^2/2}$ comes from performing (unconstrained) integrals of the spins.
The limit $N\to\infty$ along multiples of~$B$ now yields \eqref{E:BEbound}.

Next we consider the event $\overline{\BBSW^{(\ell)}}:=\bigcap_{\br\in\T_{N/B}}\theta_{\br}(\BBSW^{(\ell)})$. We only need to focus on $\ell$ satisfying $|\frac{2\pi}n\ell-\frac{2\pi}nk|\ge\eta-\Delta$ for $k=0,1,2,3$ because otherwise the event is void. Here we notice that, for each $\bS\in\overline{\BBSW^{(\ell)}}$, there is a unique $\Lambda\subset \bigcup_{\bt\in\T_{N/B}}(B\bt+\Lambda_B^0)$ such that the configuration $\tilde{\bS}:=\bvarphi_\Lambda(\bS)$ is within $\Delta$ of the vector $\bv_\ell$. The number of distinct $\Lambda$ associated with a single $\tilde{\bS}$ is $2^{(N/B)^2/4}$. As both the \emph{a priori} measure and the Hamiltonian are invariant under $\bvarphi_\Lambda$, this yields
\begin{equation}
%\label{}
Z_{N,\beta}\bigl(\overline{\BBSW^{(\ell)}}\bigr)\le2^{(N/B)^2/4}\mathcal Z_N\bigl(\ell{\textstyle\frac{2\pi}n},\Delta\bigr).
\end{equation}
Hence,
\begin{eqnarray}
%\label{}
\mu_{N,\beta}\bigl(\overline{\BBSW^{(\ell)}}\bigr)^{(B/N)^2}&=\left(\frac{Z_{N,\beta}\bigl(\overline{\BBSW^{(\ell)}}\bigr)}{Z_{N,\beta}}\right)^{(B/N)^2}
\\
&\le 2^{\ffrac14}\exp\Bigl\{\bigl[-F\bigl(\ell{\textstyle\frac{2\pi}n}\bigr)+F(\tau)+\tau\bigr]B^2\Bigr\}.
\end{eqnarray}
Now we invoke Proposition~\ref{P:propertiesofM} and apply the monotonicity and periodicity properties of~$F$ to infer that
$ F(\ell \frac{2\pi}n)\ge F(\eta-\Delta)$ for any $\ell$ for which $\BBSW^{(\ell)}\ne\emptyset$. (This is where we need $0<\Delta<\eta<\ffrac\pi4$.) The limit $N\to\infty$ (along multiples of~$B$) finishes the claim.
\end{proof}

\begin{proof}{Proof of Proposition~\ref{P:smallBcal}}
Fix~$\epsilon>0$ and $\eta\in(0,\ffrac\pi4)$ and set $\tau$ to be a positive number such that $\sin(2\tau)>\tau$ and
\begin{equation}
\label{E:tau-choice}
F\bigl(\ffrac\eta2\bigr)>F(\tau)+\tau.
\end{equation}
This is possible because $\theta\mapsto F(\theta)$ tends to minus infinity as $\theta\downarrow0$, see Fig.~2.
Let $\delta$ be related to this $\tau$ as in Proposition~\ref{P:Gaussian}. We will henceforth link $\Delta$ and $B$ to $\beta>0$ via
\begin{equation}
\label{E:Delta-B}
\Delta:=\beta^{-\frac5{12}}\quad\mbox{and}\quad B:=2\lfloor\log\beta\rfloor.
\end{equation}
Notice that this choice of $\Delta$ will make the bounds in Proposition~\ref{P:Gaussian} true once $\beta$ is sufficiently large. Let $\beta_0$ be now a value such that for all $\beta\ge\beta_0$ these bounds hold, the condition \eqref{E:eta-Delta-bds} in Lemma~\ref{L:BBSW-structura} is satisfied, and the inequalities
\begin{equation}
\label{E:eta-D}
\eta-\Delta>\frac\eta2
\end{equation}
and
\begin{equation}
\label{E:econd}
(2B)^2\,\texte^{\,c_1B^2-c_2\beta J\Delta^2/B^2}+\frac{4\pi}{\Delta}\,
2^{\ffrac14}\,\texte^{-[F(\ffrac\eta2)-F(\tau)-\tau]B^2}<\e
 \end{equation}
hold true. This is possible by  \eqref{E:tau-choice} and our choices \eqref{E:Delta-B}  ---  and the fact that the quantities~$c_1$ and $c_2$ depend only on $\tau$.

We claim that \eqref{E:econd} implies the desired bound $\mathfrak p_\beta(\Bcal)<\e$. Indeed, using that $n\le\ffrac{4\pi}\Delta$ for sufficiently small  $\Delta$, it suffices to insert into \eqref{E:subadditive} the bound on $\mathfrak p_\beta(\BBE)$ and the uniform bound on $\mathfrak p_\beta(\BBSW^{(\ell)})$ from \twoeqref{E:BEbound}{E:BSWbound}.  As a result we get the estimate of $\mathfrak p_\beta(\Bcal)$ by the left-hand side of \eqref{E:econd}. (We also used that $F$ is strictly increasing on $(0,\ffrac\pi4)$ and applied \eqref{E:eta-D}.) By \eqref{E:econd}, $\mathfrak p_\beta(\Bcal)$ is thus less than $\e$ for any~$\beta$ in excess of $\beta_0$ defined above.
\end{proof}

\section{Proofs of main results}
\label{sec-V}
\subsection{Classical model}
\label{sec-V.1}
We are now ready to prove our main results for the classical system. The key inputs are chessboard estimates alongside with the bound in Proposition~\ref{P:smallBcal}.

\begin{proof}{Proof of Theorem~\ref{T:LRO}}
The standard line of reasoning leading to the proof of phase coexistence in the present context is based on the observation
that distinct types of good events are unlikely to occur in the same configuration. A formal statement is as follows:

\begin{quote}
Assume that $B\ge 4$ and $\eta <  2\sin(\pi/8)$. For any $\epsilon>0$ there exist $\delta>0$ and $N_0$ such that if  $\mathfrak p_\beta(\Bcal)<\delta$ and $N\ge N_0$,
then $\mu_{N,\beta}(\theta_{\br}(\Gcal_x)\cap\theta_{\br'}(\Gcal_z))\le \epsilon$
for all $\br,\br'\in  \T_{N/B}$.
\end{quote}

\noindent
The proof of this fact is an application of a Peierls-type contour argument. 
Fix $\br,\br'\in  \T_{N/B}$, $\br\neq\br'$, and let $\mathcal{Y}(\br)$ denote the set of all 
$\Lambda\subset \T_{N/B}$ such that both $\Lambda$ and $\T_{N/B}\setminus \Lambda$ are connected
and $\Lambda\ni \br$.  We use $\partial\Lambda$ to denote the set of all vertices outside, but adjacent to $\Lambda$.
For $\Lambda\in\mathcal{Y}(\br)$, consider the event
\begin{equation}
R_\Lambda:= \Bigl\{  \sum_{\br'\in\partial\Lambda}1_{\Bcal}\circ \theta_r > \tfrac15  \abs{\partial\Lambda} \Bigr\},
\end{equation}
where $\theta_{\br}(\Bcal)=\vartheta_{\br}(\Bcal)$ with $\theta_{\br}$ denoting the shift by $B\br$. Recalling also the notation $\sigma_1$, resp., $\sigma_2$
for the shift by $2\be_1$, resp., $2\be_2$, we now claim
\begin{equation}
\theta_{\br}(\Gcal_x)\cap\theta_{\br'}(\Gcal_z)\subset \bigcup_{\begin{array}{c}\scriptstyle\,\Lambda\in\mathcal{Y}(\br)\\*[-6pt]\scriptstyle\br'\not\in \Lambda\end{array}} \bigl(R_{\Lambda} \cup
\bigcup_{j=1,2} \bigl(\sigma_j(R_\Lambda)\cup \sigma^{-1}_j(R_\Lambda)\bigr)_{\br}\bigr).
\end{equation}
To prove this, consider a configuration $\bS\in\theta_{\br}(\Gcal_x)\cap\theta_{\br'}(\Gcal_z)$ and let $\Lambda'$ denote the set of all $\bs\in 
\T_{N/B}$ for which there is a nearest-neighbor path that starts at $\br$, ends at $\bs$, and visits only blocks $B\bs'+\Lambda_B$, where $\theta_{\bs'}(\Gcal_x)$ occurs.
Clearly, $\br'\not\in\Lambda'$ so we may use $\L$ to denote the set of all $\bs\in 
\T_{N/B}$ such that every nearest neighbor path from $\bs$ to $\br'$ contains at least one vertex of $\L'$ --- i.e.
$\L$ is $\L'$ with all of its ''holes'' filled.

Our task is to show that 
\begin{equation}
\label{E:RsR}
\bS\in R_{\Lambda} \cup
\bigcup_{j=1,2} \bigl(\sigma_j(R_\Lambda)\cup \sigma^{-1}_j(R_\Lambda)\bigr)_{\br}.
\end{equation}
If $\bS\in R_{\L}$, then we are done, so let us suppose that $\bS\not\in R_{\L}$.
Under this condition more than $4/5$ of all blocks corresponding to vertices in $\p\L$ are good but, since they are not part of $\L$, they are of type $\Gcal_z$. Then, however, more than $2/5$ of $\Gcal_z$-blocks in $\p\L$ are adjacent to $\L$ in one coordinate direction --- say it is the first one --- and thus $1/5$ of them are adjacent in either positive or negative coordinate direction.
Observing that $\bS\in  \theta_{\br}(\Gcal_x)\cap\theta_{\br+\be_j}(\Gcal_z)$ implies that
$\sigma^{-1}_1(\bS)\in \theta_{\br}(\Bcal)$ --- and similarly for the opposite direction --- we get~\eqref{E:RsR}.

With \eqref{E:RsR} on hand, we now perform a standard version of the Peierls argument combined with chessboard estimates. A key input is the fact that, for some $c\in(1,\infty)$,
\begin{equation}
\abs{\{\L\in\Ycal(\br): \abs{\p\L}=n\}}\le c^n.
\end{equation}
Using inclusion-exclusion and $\sigma_j$-invariance of $\mu_{N,\beta}$, it now suffices to show that for some $\tilde c<\infty$ and $N_0<\infty$,
\begin{equation}
\label{E:muR<}
\mu_{N,\beta}( R_{\Lambda})\le \tilde c^{\abs{\p\L}} \mathfrak p_\beta(\Bcal)^{\abs{\p\L}/5},
\qquad\Lambda\in\Ycal(\br).
\end{equation}
To this end, let $N_0$ be such that, for all $N\ge N_0$, the probability on the right hand side of \eqref{E:pbeta}
for $\Acal:=\Bcal$ is at most $\bigl[2  \mathfrak p_\beta(\Bcal)\bigr]^{\T_N}$. Now, cover $R_{\L}$ by the union over all subsets of $\p\L$ with $\abs{\p\L}/5$ elements where the bad event occurs.
The number of such partitions is at most $2^{\abs{\p\L}}$;
the probability of each occurrence is estimated by $\bigl[2  \mathfrak p_\beta(\Bcal)\bigr]^{\abs{\p\L}/5}$.
This proves \eqref{E:muR<} with $\tilde c:=2^{6/5}$ and thus the above claim.

Applying this claim alongside the fact that, on a good event, the spins are oriented along one of the coordinate directions proves \eqref{E:musym} and \eqref{E:ES^2x}. To get also \eqref{E:ES^2}, one just follows literally the argument proving the main result in~\cite{BK2}.
\end{proof}

\begin{proof}{Proof of Theorem~\ref{T:away-from-symmetry}}
This is a consequence of Theorem~\ref{T:LRO} and a convexity argument. Let $\Ecal^{x}=\Ecal_0$ denote the Wenzel-Janke plaquette energy \eqref{E:WJen} for the type-$x$ plaquette at the origin. Let $\mathfrak G(J_1,J_2)$  denote the set of all translation-invariant, ergodic Gibbs states. The convexity of $J_1\mapsto\log Z_N$ then implies that once $J_1<J_1'$, for any $\mu\in \mathfrak G(J_1,J_2)$ and $\mu'\in \mathfrak G(J_1',J_2)$ we have
\begin{equation}
%\label{}
\E_\mu(\Ecal^{x})\le \E_{\mu'}(\Ecal^{x}).
\end{equation}
Now, at $J_1=J_2=:J$ and $\b\ge\b_0$, Theorem~\ref{T:LRO} guarantees the existence of a
$\mu_\beta^{x}\in \mathfrak G(J,J)$, such that 
\begin{equation}
%\label{}
\E_{\mu_\beta^{x}}(\Ecal^{x})\ge 4-\delta.
\end{equation}
(Here~$\delta$ may differ from the one used in the statement of Theorem~\ref{T:LRO}.) Thus, for all $J_1>J_2$ and $\mu\in \mathfrak G(J_1,J_2)$, we have 
\begin{equation}
%\label{}
\E_{\mu}(\Ecal^{x})\ge 4-\delta.
\end{equation}
As $\Ecal^{x}$ is the sum of four terms of the type $S_{\br}^{x}S_{\br'}^{x}$, which are less than one, we must have
\begin{equation}
%\label{}
\E_{\mu}(S_{\br}^{x}S_{\br'}^{x})\ge 1-\delta
\end{equation}
for all nearest-neighbour pairs $\langle \br, \br' \rangle_x$ of type $x$. This forces $\E_\mu(|S_{\br}^{x}|)\ge1-\delta$ and thus $\E_\mu([S_{\br}^{x}]^2)\ge1-2\delta$. As~$\delta$ is arbitrary, this proves the claim.
\end{proof}

We will also give the formal proof of absence of magnetic order:

\begin{proof}{Proof of Theorem~\ref{T:no-magnet}}
It is easy to check that every Gibbs measure is invariant under the action of any $\bvarphi_{\br}$. As each spin belongs to one plaquette of type-$x$ and one plaquette of type-$z$, it follows that the distribution of $S_{\br}^{\alpha}$ is symmetric around zero, for any~$\br$ and any~$\alpha=x,z$. This implies the claim.
\end{proof}

\subsection{Quantum model}
\label{sec-V.2}
Our set of results for the quantum model will be derived by an application of the general theory developed in \cite{Biskup-Chayes-Starr} whose main conclusion can be found in Theorem~3.7 of~\cite{Biskup-Chayes-Starr}. This theorem says roughly the following: Whenever a quantum-spin model satisfies the conditions of (quantum) reflection positivity, and the classical system admits a proof of phase coexistence at a positive temperature by means of chessboard estimates, then the same phase coexistence occurs in the quantum system provided the magnitude of the quantum spin is sufficiently larger than the inverse temperature squared.

First let us check that the prerequisite concerning the quantum reflection positivity is satisfied. Since we are using reflections in planes bisecting edges of~$\T_N$, this is proved by the same argument as in the classical case, except that we need to write all operators in a basis in which their matrix elements are all real valued. This is satisfied automatically in the representation \eqref{E:Spm} in which~$\hat S_{\br}^z,\hat S_{\br}^\pm$ are real and so is~$\hat S_{\br}^x=\frac12(\hat S_{\br}^++\hat S_{\br}^-)$.

All we have to do is thus adapt the proof for the classical model to plug into Theorem~3.7 of~\cite{Biskup-Chayes-Starr}. We begin by introducing the formalism of \emph{coherent states} that the whole connection is based on. Consider the space~$\mathbb C^{2\Scal+1}$ that carries the corresponding $2\Scal+1$-dimensional representation of $\su(2)$. Let~$\Omega$ be a vector on the unit sphere $\mathscr S_2$ in~$\mathbb R^3$ that is described by the spherical angles $\theta$ and $\phi$. Then we set
\newcommand\ket[1]{|#1\rangle}
\begin{equation}
%\label{}
\ket{\Omega}:= \sum_{M=-\Scal}^{\Scal}{{2\Scal}\choose{\Scal+M}}^{\ffrac12}\,\,
\bigl[\cos(\ffrac\theta2)\bigr]^{\Scal+M}\,
\bigl[\sin(\ffrac\theta2)\bigr]^{\Scal-M}\,
   \texte^{\texti(\Scal-M)\phi}\, \ket{M}.
\end{equation}
Abusing the notation slightly, whenever $\Omega:=(\Omega_{\br})$ is a collection of such vectors, we will denote the corresponding product state by $\ket{\Omega}:=\bigotimes_{\br}\ket{\Omega_{\br}}$.

The coherent states have a number of remarkable properties of which relevant for us are particularly those listed in Sect.~2.1 of~\cite{Biskup-Chayes-Starr}. Here we will only need the notions of the \emph{lower} and \emph{upper symbols}. Given a linear operator $\hat A$ on $\bigotimes_{\br\in\Lambda}\mathbb C^{2\Scal+1}$, the lower symbol is the function $\Omega\mapsto\langle \hat A\rangle_\Omega$ on $(\mathscr S_2)^\Lambda$ such that
\begin{equation}
%\label{}
\langle \hat A\rangle_\Omega:=\langle\Omega|\hat A|\Omega\rangle.
\end{equation}
The upper symbol is, in turn, a function $\Omega\mapsto[\hat A]_\Omega$ such that
\begin{equation}
%\label{}
\hat A=\left(\frac{2\Scal+1}{4\pi}\right)^{|\Lambda|}\int_{(\mathscr S_2)^{\Lambda}}\textd\Omega\,\,[\hat A]_\Omega\,|\Omega\rangle\langle\Omega|,
\end{equation}
where $\textd\Omega:=\prod_{\br\in\Lambda}\textd\Omega_{\br}$ with $\textd\Omega_{\br}$ denoting the uniform measure on $\mathscr S_2$ with total mass~$4\pi$. The upper symbol is not necessarily unique, so we use the notation $[\hat A]_\Omega$ to denote any version thereof. The upper and lower symbols are two natural classical approximations of the quantum Hamiltonian, so we need to check that they are reasonably close:

\begin{lemma}
\label{lemma-q-class}
Consider the operator $\hat{\mathcal H}_N$ in \eqref{E:quantum-Ham}. Then there is a constant $c=c(J_1,J_2)<\infty$ and a version of the upper symbol $[\hat{\mathcal H}_N]_\Omega$ such that for each $\Omega\in(\mathscr S_2)^{\T_N}$,
\begin{equation}
\label{E:upper-lower}
\bigl|\langle\hat{\mathcal H}_N\rangle_\Omega-[\hat{\mathcal H}_N]_\Omega\bigr|\le \frac c{\Scal}|\T_N|.
\end{equation}
\end{lemma}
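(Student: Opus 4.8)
The plan is to reduce the estimate to a single bilinear term by linearity and then exploit the product structure of the coherent states. Both maps $\hat A\mapsto\langle\hat A\rangle_\Omega$ and $\hat A\mapsto[\hat A]_\Omega$ are linear, and $\hat{\mathcal H}_N$ is the sum over the $x$- and $z$-edges of $\T_N$ of the two-site operators $-J_\alpha\Scal^{-2}\hat S_{\br}^{\alpha}\hat S_{\br'}^{\alpha}$ with $\br\neq\br'$. I would therefore build a version of the upper symbol edge by edge, setting
$[\hat{\mathcal H}_N]_\Omega:=-J_1\Scal^{-2}\sum_{\langle\br,\br'\rangle_x}[\hat S_{\br}^{x}\hat S_{\br'}^{x}]_\Omega-J_2\Scal^{-2}\sum_{\langle\br,\br'\rangle_z}[\hat S_{\br}^{z}\hat S_{\br'}^{z}]_\Omega$,
after which it suffices to bound $\bigl|\langle\hat S_{\br}^{\alpha}\hat S_{\br'}^{\alpha}\rangle_\Omega-[\hat S_{\br}^{\alpha}\hat S_{\br'}^{\alpha}]_\Omega\bigr|$ by $O(\Scal)$ uniformly over edges, multiply by $J_\alpha\Scal^{-2}$, and sum over the $O(|\T_N|)$ edges of each type.

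For the single-site symbols I would quote the standard coherent-state identities collected in Sect.~2.1 of~\cite{Biskup-Chayes-Starr}: by covariance under the $\su(2)$-action that rotates $\Omega$ (any vector operator on the irreducible representation is a multiple of $\hat{\bS}_{\br}$), the lower symbol is $\langle\hat{\bS}_{\br}\rangle_\Omega=\Scal\,\Omega_{\br}$, the constant being fixed by evaluating at the north pole; and, using $\frac{2\Scal+1}{4\pi}\int\textd\Omega\,|\Omega\rangle\langle\Omega|=\id$ together with $\int_{\mathscr S_2}\Omega^i\Omega^j\,\frac{\textd\Omega}{4\pi}=\frac13\delta_{ij}$, the function $(\Scal+1)\Omega_{\br}^{j}$ is an admissible version of the upper symbol of $\hat S_{\br}^{j}$. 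Since $\ket{\Omega}=\bigotimes_\br\ket{\Omega_{\br}}$ and the resolution of identity factorizes over sites, for $\br\ne\br'$ the lower symbol factorizes, $\langle\hat S_{\br}^{\alpha}\hat S_{\br'}^{\alpha}\rangle_\Omega=\Scal^2\,\Omega_{\br}^{\alpha}\Omega_{\br'}^{\alpha}$, and the product $(\Scal+1)^2\,\Omega_{\br}^{\alpha}\Omega_{\br'}^{\alpha}$ is an admissible upper symbol of $\hat S_{\br}^{\alpha}\hat S_{\br'}^{\alpha}$. Subtracting and using $|\Omega_{\br}^{\alpha}|,|\Omega_{\br'}^{\alpha}|\le1$ gives
\begin{equation}
\bigl|\langle\hat S_{\br}^{\alpha}\hat S_{\br'}^{\alpha}\rangle_\Omega-[\hat S_{\br}^{\alpha}\hat S_{\br'}^{\alpha}]_\Omega\bigr|=\bigl|\Scal^2-(\Scal+1)^2\bigr|\,\bigl|\Omega_{\br}^{\alpha}\,\Omega_{\br'}^{\alpha}\bigr|\le 2\Scal+1\le4\Scal ,
\end{equation}
and multiplying by $J_\alpha\Scal^{-2}$, summing over the (at most $|\T_N|$) edges of each of the two types and adding yields \eqref{E:upper-lower} with, e.g., $c:=4(J_1+J_2)$.

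I do not expect a genuine obstacle: the only point needing a little care is the tensorization claim — that a product of single-site upper symbols is again a valid upper symbol of the corresponding product operator — which follows by inserting the site-factorized resolution of identity and which I would cite from~\cite{Biskup-Chayes-Starr} rather than reprove. Everything else is a one-line bound, and the mechanism of the lemma is already visible in the displayed inequality: the lower and upper symbols of a single bilinear term differ only at order $\Scal$, not $\Scal^2$, so after the $\Scal^{-2}$ normalization built into $\hat{\mathcal H}_N$ one is left with an $O(1/\Scal)$ discrepancy per bond, hence $O(|\T_N|/\Scal)$ in total.
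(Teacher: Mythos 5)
Your proposal is correct and follows essentially the same route as the paper: both exploit the (multi)linearity of $\hat{\mathcal H}_N$ in the spin operators, the product structure of the coherent states, and the standard single-site symbols $\langle\hat{\bS}_{\br}\rangle_\Omega=\Scal\,\Omega_{\br}$ and $[\hat{\bS}_{\br}]_\Omega=(\Scal+1)\Omega_{\br}$ quoted from~\cite{Biskup-Chayes-Starr}, so that each bond contributes a discrepancy $\Scal^{-2}\bigl|(\Scal+1)^2-\Scal^2\bigr|=O(1/\Scal)$. Your explicit per-edge bookkeeping (yielding $c=4(J_1+J_2)$) is just a spelled-out version of the paper's observation that $[\hat{\mathcal H}_N]_\Omega=(1+1/\Scal)^2\langle\hat{\mathcal H}_N\rangle_\Omega$ with $|\mathcal H_N(\Omega)|\le\const|\T_N|$.
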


\begin{proof}{Proof}
Let $\mathcal H_N(\bS)$ denote the classical Hamiltonian corresponding to \eqref{E:quantum-Ham}  ---  obtained by replacing  the operators 
$\hat{\bS}_{\br}$  by vectors $\bS_{\br}\in\mathscr S_2$ and dropping the normalization by $\Scal^2$.
Then the desired bound will hold once we verify that
\begin{equation}
%\label{}
\langle\hat{\mathcal H}_N\rangle_\Omega= \mathcal H_N(\Omega)\quad\mbox{and}\quad
[\hat{\mathcal H}_N]_\Omega=(1+1/\Scal)^2 \mathcal H_N(\Omega)
\end{equation}
for some version of $[\hat{\mathcal H}_N]_\Omega$. This is in turn shown by noting that $\hat{\mathcal H}_N$ is multilinear in the operators $
\hat{\bS}$  ---  which means that for each term in the sum the corresponding symbols take the form of a product  ---  and by the fact that $\langle
\hat{\bS}
_{\br}\rangle_\Omega=\Scal\Omega_{\br}$ and that $[
\hat{\bS}
_{\br}]_\Omega:=(\Scal+1)\Omega_{\br}$ is a version of the upper symbol for $
\hat{\bS}
$. The bound \eqref{E:upper-lower} now follows by the fact that $\mathcal H_N(\Omega)$ is bounded by a constant times $|\T_N|$, uniformly in~$\Omega$ and~$\Scal\ge\ffrac12$.
\end{proof}

Our next step will be to  verify the conditions required by \cite[Theorem~3.7]{Biskup-Chayes-Starr} for the corresponding classical model. The problem here is that the link through the coherent states naturally leads to \emph{three}-component classical spins. Consider the events $\widetilde\Gcal_x, \widetilde\Gcal_z$ and $\Bcal$ that are defined as follows. For a three-component spin configuration $\bS=(\bS_{\br})$ with $\bS_{\br}=(S_{\br}^x,S_{\br}^y,S_{\br}^z)$, let $\bS^{xz}$ denote its projection onto the $xz$-plane \emph{scaled} to have a unit length. For a $B$-block event~$\Acal\subset(\mathscr S_2)^{\T_N}$, let 
\begin{equation}
%\label{}
\widetilde\Acal:=\bigl\{\bS\colon\bS^{xz}\in\Acal\bigr\}\cap\bigcap_{\br\in\Lambda_B}\bigl\{\bS\colon |S_{\br}^{y}|
\le\Delta\bigr\}
\end{equation}
denote its natural extension to three-component spin configurations. This immediately defines the  events $\widetilde\Gcal_x$, $\widetilde\Gcal_z$, and $\widetilde\Bcal$. However, $\widetilde\Bcal$ does not cover the complement $\widetilde\Gcal_x\cup\widetilde\Gcal_z$; for that we will also need the event
\begin{equation}
%\label{}
\Bcal_y:=\bigcup_{\br\in\Lambda_B}\bigl\{\bS\colon |S_{\br}^{y}|>\Delta\bigr\}.
\end{equation}
Then, obviously, $\Bcal:=(\widetilde\Gcal_x\cup\widetilde\Gcal_z)^{\rm c}$ satisfies $\Bcal = \widetilde\Bcal\cup\Bcal_y$. In order to plug into our calculations for the two-component spin, we will need to prove the two lemmas.

\begin{lemma}
\label{lemma-5.2}
Suppose~$\Delta<\ffrac12$. Then there exists a constant~$c_3<\infty$ such that, for any~$B$-block event~$\Acal$ such that $\Acal\cap\BBE=\emptyset$, its counterpart~$\widetilde\Acal$ defined as above satisfies
\begin{equation}
%\label{}
\mathfrak p_\beta(\widetilde\Acal)\le \Bigl(\pi\Delta^{-2}\,\texte^{\beta J c_3 \Delta^3}\Bigr)^{B^2}
\frac{\mathfrak p_\beta(\Acal)}{1-\mathfrak p_\beta(\BBE)}
\end{equation}
\end{lemma}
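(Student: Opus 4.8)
The plan is to pass from the three-component classical model on $\mathscr S_2$ back to the two-component model on the circle by separating, spin by spin, the $y$-coordinate from the rescaled $xz$-projection; since the $y$-component does not enter $\mathcal H_N$, the only effects are an elementary change in the a priori measure and a small deformation of the energy on the slab $\{|S^y|\le\Delta\}$ where $\widetilde\Acal$ lives. First I would rephrase both sides through disseminated partition functions: writing $Z_{N,\beta}^{(3)}$, $Z_{N,\beta}^{(2)}$ for the torus partition functions of the two models and $Z_{N,\beta}^{(3)}(\overline{\widetilde\Acal})$, $Z_{N,\beta}^{(2)}(\overline\Acal)$ for the integrals restricted to $\overline{\widetilde\Acal}:=\bigcap_{\br}\theta_{\br}(\widetilde\Acal)$ and $\overline\Acal:=\bigcap_{\br}\theta_{\br}(\Acal)$, the definition~\eqref{E:pbeta} of $\mathfrak p_\beta$ reduces the claim to an upper bound on $Z_{N,\beta}^{(3)}(\overline{\widetilde\Acal})$, a lower bound on $Z_{N,\beta}^{(3)}$, and the limit $N\to\infty$ along multiples of $B$. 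On the slab write $\bS_{\br}=\bigl(\sqrt{1-u_{\br}^2}\,\boldsymbol\sigma_{\br},u_{\br}\bigr)$ with $\boldsymbol\sigma_{\br}=\bS_{\br}^{xz}$ (angle $\phi_{\br}$) and $|u_{\br}|\le\Delta$; then $\textd\Omega_{\br}=\textd u_{\br}\,\textd\phi_{\br}$, whereas the two-component a priori measure is $\nu=\textd\phi/\sqrt{2\pi}$. On $\overline{\widetilde\Acal}$ the constraints $\boldsymbol\sigma\in\overline\Acal$ and $|u_{\br}|\le\Delta$ decouple, so integrating out the $u$'s yields the factor $(2\Delta)^{|\T_N|}$.

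The heart of the matter is the energy comparison of $\mathcal H_N(\bS)$ with $\mathcal H_N(\boldsymbol\sigma)$ on $\overline{\widetilde\Acal}$. Using Lemma~\ref{lemma-H-rewrite} at $J_1=J_2=J$ one has $\mathcal H_N(\bS)+J|\T_N|=\tfrac J2\sum_{\langle\br,\br'\rangle_x}(S_{\br}^x-S_{\br'}^x)^2+\tfrac J2\sum_{\langle\br,\br'\rangle_z}(S_{\br}^z-S_{\br'}^z)^2+J\sum_{\br}(S_{\br}^y)^2$, and the same identity without the last term for $\boldsymbol\sigma$. Since $\sqrt{1-u^2}=1+O(\Delta^2)$ uniformly on the slab, and since $\Acal\cap\BBE=\emptyset$ keeps the nearest-neighbour differences of the $\boldsymbol\sigma$-components appearing in the gradient terms under control, the two gradient parts differ by $O(\beta J\Delta^3)$ per site; dropping the non-negative term $J\sum_{\br}(S_{\br}^y)^2$ then gives $\mathcal H_N(\bS)\ge\mathcal H_N(\boldsymbol\sigma)-c_3J\Delta^3|\T_N|$ on $\overline{\widetilde\Acal}$, hence $Z_{N,\beta}^{(3)}(\overline{\widetilde\Acal})\le\bigl(2\Delta\sqrt{2\pi}\,\texte^{\beta Jc_3\Delta^3}\bigr)^{|\T_N|}Z_{N,\beta}^{(2)}(\overline\Acal)$. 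For the denominator I would restrict $Z_{N,\beta}^{(3)}$ to a thinner slab $\{|u_{\br}|\le\Delta^{3/2}\}$, on which the same comparison costs only $O(\beta J\Delta^3)$ per site with no constraint on $\boldsymbol\sigma$, obtaining $Z_{N,\beta}^{(3)}\ge\bigl({\rm const}\cdot\Delta^{3/2}\,\texte^{-\beta Jc_3\Delta^3}\bigr)^{|\T_N|}Z_{N,\beta}^{(2)}$; the mild factor $(1-\mathfrak p_\beta(\BBE))^{-1}$ then absorbs the harmless loss incurred if one instead keeps the $xz$-part confined away from the disseminated bad event. Dividing the two bounds, raising to the power $(B/N)^2$, letting $N\to\infty$, and using $\Delta^{-1/2}\le\Delta^{-2}$, the stated inequality follows, the constants collapsing into the $\pi$.

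The main obstacle is the quantitative energy estimate: the naive bound $|\mathcal H_N(\bS)-\mathcal H_N(\boldsymbol\sigma)|\le cJ\Delta^2|\T_N|$ is not sharp enough for the eventual scaling of $\Delta$, so one must genuinely exploit $\Acal\cap\BBE=\emptyset$ to push the gradient-difference contribution down to order $\Delta^3$, and one must keep track of the sign of the $\sum_{\br}(S_{\br}^y)^2$ term so that it can be discarded in the upper bound yet kept negligible — via the thinner slab — in the lower bound. A secondary point is verifying that the nearest-neighbour control on the $\boldsymbol\sigma$-components survives across the boundaries of the chessboard blocks in the disseminated event. The remaining steps — the a priori-measure bookkeeping and the appearance of $\mathfrak p_\beta(\BBE)$ — are routine.
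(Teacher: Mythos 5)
Your proposal follows the paper's own proof in all essentials: both arguments bound the disseminated three-component partition function restricted to $\widetilde\Acal$ by its two-component counterpart restricted to $\Acal$, using the rewrite of Lemma~\ref{lemma-H-rewrite}, an $O(\beta J\Delta^3)$-per-site energy comparison on the slab $\{|S^y_{\br}|\le\Delta\}$ (with the non-negative term $J\sum_{\br}[S^y_{\br}]^2$ simply discarded), and integration over the $y$-coordinates; and both lower-bound the full three-component partition function by restricting to a thinner slab before taking the ratio, the power $(B/N)^2$ and the limit $N\to\infty$. The one genuine deviation is in the denominator: the paper uses slab width $\Delta^2$ and additionally confines the $xz$-part to the dissemination of $\BBE^{\rm c}$, which is exactly where the factor $(1-\mathfrak p_\beta(\BBE))^{-1}$ comes from; you use width $\Delta^{3/2}$, note that then the energy comparison costs $O(\beta J\Delta^3)$ per site with \emph{no} constraint on the $xz$-part, and recover the stated bound since $(1-\mathfrak p_\beta(\BBE))^{-1}\ge1$. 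That variant is correct and, if anything, slightly cleaner; the constant bookkeeping ($\Delta^{-1/2}\le\pi\Delta^{-2}$ for $\Delta<\ffrac12$) goes through with room to spare. The ``secondary point'' you flag --- that the gradient control supplied by $\Acal\cap\BBE=\emptyset$ is available only for edges internal to a block, not for the $O(N^2/B)$ edges straddling block boundaries, where the per-edge cost is $O(\Delta^2)$ rather than $O(\Delta^3)$ --- is left equally unaddressed in the paper, whose proof simply asserts the comparison \eqref{E:5.20}; the same goes for the tacit use of $\Acal\subset\Bcal$ when deducing the gradient bound from $\Acal\cap\BBE=\emptyset$ (recall $\BBE$ is by definition a subset of $\Bcal$, so the hypothesis is vacuous for good events). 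So on these points your write-up matches, rather than falls short of, the paper's level of rigor.
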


\begin{proof}{Proof}
Consider the expectation of the disseminated event~$\widetilde\Acal$ in the torus measure~$\tilde\mu_{N,\beta}$ for the three-component model. We will write this expectation as~${\widetilde Z}_{N,\beta}(\widetilde\Acal)/{\widetilde Z}_{N,\beta}$. Pick $\tilde{\bS}\in\widetilde\Acal$ and define $\bS:=\tilde{\bS}^{xz}$. First, by the very definition $\bS$ belongs to the disseminated event~$\Acal$. So in order to compare the expectation for the three-component spins with that for the two-component spins, we will need to control the change in the Hamiltonian and the \emph{a priori} measure under the map $\tilde{\bS}\mapsto\bS$. 

To treat both ${\widetilde Z}_{N,\beta}(\widetilde\Acal)$ and~${\widetilde Z}_{N,\beta}$ in a unified fashion, assume that $|\tilde S_{\br}^{y}|\le \delta$ for some~$\delta\le\Delta$ and all $\br\in \T_N$. Let~$\bS:=\tilde{\bS}^{xz}$. Note that
\begin{equation}
%\label{}
\tilde S_{\br}^{\alpha}=\frac{S_{\br}^{\alpha}}{\sqrt{1-[\tilde S_{\br}^{y}]^2}}=S_{\br}^{\alpha}+O(\delta^2),\qquad\alpha=x,z.
\end{equation}
From the fact that $\Acal\cap\BBE=\emptyset$ it follows that
\begin{equation}
%\label{}
(\tilde S_{\br}^{\alpha}-\tilde S_{\br'}^{\alpha})^2=(S_{\br}^{\alpha}-S_{\br'}^{\alpha})^2+O(\delta^2\Delta)
\end{equation}
and so we have
\begin{equation}
\label{E:5.20}
\mathcal H_N(\tilde{\bS})=\mathcal H_N(\bS)+J\sum_{\br\in\T_N}[\tilde S_{\br}^{y}]^2+O(\delta^2\Delta)J|\T_N|.
\end{equation}
The \emph{a priori} measures are related by $\tilde\nu(\textd\tilde S_{\br})=\sqrt{1-[\tilde S_{\br}^{y}]^2}\,\,\nu(\textd S_{\br})\textd\tilde S_{\br}^y$.

We will now derive  bounds on the  partition functions ${\widetilde Z}_{N,\beta}(\widetilde\Acal)$ and ${\widetilde Z}_{N,\beta}$
in terms of their two-component spin counterparts $ Z_{N,\beta}(\Acal)$ and $Z_{N,\beta}$.
For ${\widetilde Z}_{N,\beta}(\widetilde\Acal)$ we set $\delta:=\Delta$ and note that 
$\mathcal H_N(\tilde{\bS})\ge\mathcal H_N(\bS)+ O(\Delta^3) \beta J \abs{\T_N}$. Integrating out the components 
$\tilde S_{\br}^{y}$ gives
\begin{equation}
%\label{}
{\widetilde Z}_{N,\beta}(\widetilde\Acal)\le  Z_{N,\beta}(\Acal)  \bigl(\pi\texte^{O(\Delta^3)\beta J}\bigr)^{\abs{\T_N}}.
\end{equation}
For a lower bound on~${\widetilde Z}_{N,\beta}$, we first note that ${\widetilde Z}_{N,\beta}\ge {\widetilde Z}_{N,\beta}(\BBE^{\rm{c}}\cap\Gcal_y')$, where $\Gcal_z'$ is the event 
\begin{equation}
%\label{}
\Gcal_y'=  \bigcap_{\br\in\Lambda_B}\bigl\{\bS : \abs{S_{\br}^{y}}<\delta\bigr\}
\end{equation}
with $\delta:=\Delta^2$.  Applying \eqref{E:5.20}, the Hamiltonians now differ by $O(\Delta^3) J\abs{\T_N}$ and the integral over $\tilde S_{\br}^{y}$, $\br\in\T_N$, now yields a term 
$\Delta^2$ per site. This shows
\begin{equation}
%\label{}
{\widetilde Z}_{N,\beta}\ge \bigl(\Delta^{2}\texte^{O(\Delta^3)\beta J}\bigr)^{\abs{\T_N}} Z_{N,\beta}(\BBE^{\rm{c}}).
\end{equation}
Combining the upper and lower bounds and applying the subadditivity bound $\mathfrak p_\beta(\BBE^{\rm c})\ge1-\mathfrak p_\beta(\BBE)$ we get the desired claim.
\end{proof}

\begin{lemma}
\label{lemma-5.3}
There are constants~$c_4,c_5\in(0,\infty)$ such that
\begin{equation}
%\label{}
\max\bigl\{\mathfrak p_\beta(\widetilde{\BBE}),\mathfrak p_\beta(\Bcal_y)\bigr\}\le 
\Bigl(c_4\Delta^{-2}\,\texte^{\beta J c_3 \Delta^3}\Bigr)^{B^2}\frac{B^2\texte^{-c_5\beta J\Delta^2}}{1-\mathfrak p_\beta(\BBE)}
\end{equation}
\end{lemma}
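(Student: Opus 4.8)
The plan is to bound $\mathfrak p_\beta(\widetilde\BBE)$ and $\mathfrak p_\beta(\Bcal_y)$ by the same energy-versus-entropy scheme that produced \eqref{E:BEbound}, now run in the three-component torus model, and then to convert the resulting three-component partition-function ratios into two-component ones exactly as in the proof of Lemma~\ref{lemma-5.2}. Throughout, $N$ runs to infinity along even multiples of $B$, so that for a $B$-block event $\Acal$ one has $\mathfrak p_\beta(\Acal)=\lim_N[\widetilde Z_{N,\beta}(\overline\Acal)/\widetilde Z_{N,\beta}]^{(B/N)^2}$ with $\overline\Acal:=\bigcap_{\br\in\T_{N/B}}\theta_{\br}(\Acal)$. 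The denominator is controlled once and for all by the lower bound $\widetilde Z_{N,\beta}\ge(\Delta^2\texte^{O(\Delta^3)\beta J})^{|\T_N|}Z_{N,\beta}(\BBE^{\rm c})$ from the proof of Lemma~\ref{lemma-5.2}, combined with $Z_{N,\beta}\ge\texte^{(\beta J-F(\tau)-\tau)|\T_N|}$ from Lemma~\ref{L:LB-PF} and the subadditivity estimate $\mathfrak p_\beta(\BBE^{\rm c})\ge 1-\mathfrak p_\beta(\BBE)$; it is precisely the per-site factor $(\Delta^2)$ in this lower bound that supplies the $\Delta^{-2}$ in the claimed estimate.

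For $\mathfrak p_\beta(\widetilde\BBE)$: on $\overline{\widetilde\BBE}$ the $xz$-projection lies in $\overline\BBE$, so in each of the $(N/B)^2$ translates of $\Lambda_B$ there is an $\alpha$-edge across which the $\alpha$-component of two neighbours differs by at least $\Delta/(16B)$. The identity of Lemma~\ref{lemma-H-rewrite}, which holds verbatim for $\R^3$-valued spins once $S^x,S^z$ are read as the first and third components, then gives $\mathcal H_N(\bS)\ge -J|\T_N|+\tfrac12(\Delta/16B)^2J(N/B)^2$ on $\overline{\widetilde\BBE}$ (drop the positive gradient terms, keep one charged edge per block). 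There are at most $2B(B-1)$ positions for the charged edge in each block, and on $\widetilde\BBE$ every $|S_{\br}^y|\le\Delta$, so integrating out the $y$-components over $(-\Delta,\Delta)$ costs at most a bounded constant times $\Delta$ per site. Feeding these into $\widetilde Z_{N,\beta}(\overline{\widetilde\BBE})$, dividing by $\widetilde Z_{N,\beta}$ as above, and letting $N\to\infty$ collects all factors into $(c_4\Delta^{-2}\texte^{\beta Jc_3\Delta^3})^{B^2}\,B^2\texte^{-c_5\beta J\Delta^2}/(1-\mathfrak p_\beta(\BBE))$ with $c_5=\tfrac1{512}$, with $c_4$ absorbing $\texte^{F(\tau)+\tau}$ and the remaining $O(1)$ constants and $c_3$ enlarged to swallow the $O(\Delta^3)$ corrections coming from \eqref{E:5.20} and from replacing the $x,z$-components of $\bS$ by those of $\bS^{xz}$.

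For $\mathfrak p_\beta(\Bcal_y)$: there is no energetically charged edge, but one uses the \emph{diagonal} part of the rewrite instead. Since $J_1=J_2=J$ and $[S_{\br}^x]^2+[S_{\br}^z]^2=1-[S_{\br}^y]^2$ for a unit $3$-vector, Lemma~\ref{lemma-H-rewrite} gives $\mathcal H_N(\bS)\ge -J|\T_N|+J\sum_{\br}[S_{\br}^y]^2$. On $\overline{\Bcal_y}$ at least $(N/B)^2$ sites satisfy $[S_{\br}^y]^2>\Delta^2$, hence $\mathcal H_N(\bS)\ge -J|\T_N|+J\Delta^2(N/B)^2$ there; the entropy is now at most $B^2$ choices of the offending site per block, and the $y$-integration at the remaining sites costs an $O(1)$ factor per site. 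The same division by $\widetilde Z_{N,\beta}$ yields a bound of exactly the claimed shape, now with $\texte^{-\beta J\Delta^2}$ in the exponent, i.e. with effective constant $1$ in place of $\tfrac1{512}$. Taking $c_5:=\tfrac1{512}$, the smaller of the two exponents, and $c_4$ the larger of the two prefactors, gives the asserted bound for $\max\{\mathfrak p_\beta(\widetilde\BBE),\mathfrak p_\beta(\Bcal_y)\}$.

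The main obstacle is that Lemma~\ref{lemma-5.2} cannot be invoked for $\widetilde\BBE$ as a black box, since its hypothesis $\Acal\cap\BBE=\emptyset$ plainly fails there; one must redo the $N$-dependent partition-function accounting directly, and the one genuinely new ingredient — absent from the two-component analysis and from \eqref{E:BEbound} — is the energy lower bound for $\overline{\Bcal_y}$, which relies on reading the diagonal term of Lemma~\ref{lemma-H-rewrite} through $[S^x]^2+[S^z]^2=1-[S^y]^2$. Everything else is the routine bookkeeping of powers of $\Delta$, $B$, and $\texte^{\beta J}$ already carried out for \eqref{E:BEbound} and Lemma~\ref{lemma-5.2}.
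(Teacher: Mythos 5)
Your proposal follows essentially the same route as the paper's (sketched) proof: a dissemination/energy--entropy estimate for the two pieces of the bad event, with the suppression of $\Bcal_y$ coming from the diagonal term of the rewrite via $[S^x]^2+[S^z]^2=1-[S^y]^2$ --- which is exactly the mechanism behind \eqref{E:5.20} that the paper invokes --- and with the denominator handled through the three-to-two-component comparison from the proof of Lemma~\ref{lemma-5.2} together with Lemma~\ref{L:LB-PF}. Your treatment of $\Bcal_y$ (drop the nonnegative gradient terms pointwise) is in fact slightly more robust than the paper's, since it does not need the disjointness from $\BBE$ that the sketch loosely invokes. Two bookkeeping caveats. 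First, the charged edge defining $\widetilde\BBE$ is charged for the projected spins $\bS^{xz}$, so for the three-component spins the jump is only $\Delta/(16B)-O(\Delta^2)$; this is harmless under the choices \eqref{E:Delta-B}, where $\Delta/(16B)\gg\Delta^2$, and is covered by your enlargement of constants. Second, your energy input for $\widetilde\BBE$ yields a per-block factor $\texte^{-\beta J\Delta^2/(512B^2)}$, not $\texte^{-\beta J\Delta^2/512}$, so the final display you assert --- and, for that matter, the lemma as printed, whose sketch contains the same slip --- should carry a $B^{-2}$ in the exponent for that part; this is immaterial for the application, since $\beta J\Delta^2/B^2\to\infty$ under \eqref{E:Delta-B}, but as written your last ``collection of factors'' does not follow from the estimate that precedes it.
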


\begin{proof}{Proof (Sketch)}
We will instead estimate the objects $\mathfrak p_\beta(\widetilde{\BBE}\setminus\Bcal_y)$ and $\mathfrak p_\beta(\Bcal_y\setminus\widetilde{\BBE})$; from these the claim will follow by invoking subadditivity $\Acal\mapsto\mathfrak p_\beta(\Acal)$. Consider the partition function $\widetilde Z_{N,\beta}(\widetilde{\BBE}\setminus\Bcal_y)$ and let~$\tilde{\bS}$ be a configuration contributing to the sum. As the second component of all spins are small, the partition function is bounded as in the two-component spin case. The result is
\begin{equation}
%\label{}
\widetilde Z_{N,\beta}(\widetilde{\BBE}\setminus\Bcal_y)
\le \texte^{-\beta J|\T_N|}[2\sqrt{2\pi}]^{|\T_N|}\bigl[B^2\texte^{-\beta J\Delta^2}\bigr]^{(N/B)^2}.
\end{equation}
Combining this with the lower bound on~$Z_{N,\beta}$ from the previous proof now gives
the desired bound for $\mathfrak p_\beta(\widetilde{\BBE}\setminus\Bcal_y)$. 

Concerning the event $\Bcal_y\setminus\widetilde{\BBE}$, as this is disjoint from~$\BBE$, we just note that, in this case, that we can bound
\begin{equation}
%\label{}
\mathcal H_N(\tilde{\bS})\ge\mathcal H_N(\bS)+J\Delta^2\Bigl(\frac NB\Bigr)^2+O(\Delta^3)J|\T_N|
\end{equation}
and then proceed as in the previous lemma. The desired suppression now comes from the second term on the right hand side of the last display.
\end{proof}

We are now ready to prove our main result on quantum systems with large spins:

\begin{proof}{Proof of Theorem~\ref{T:QLRO}}
Key to the formalism of~\cite{Biskup-Chayes-Starr} is the operator $\hat Q_{\Acal}$ associated with the event~$\Acal$ as follows:
\begin{equation}
%\label{}
\hat Q_{\Acal}:=\left(\frac{2\Scal+1}{4\pi}\right)^{\T_N}\int_{\Acal}\textd\Omega\,\,|\Omega\rangle\langle\Omega|.
\end{equation}
Notice that if $\Acal$ is a $B$-block event, then $\hat Q_{\Acal}$ behaves as identity on the part of the Hilbert space outside $\Lambda_B$. The map $\Acal\mapsto\hat Q_{\Acal}$ is countably additive; in particular, if $\Acal_1,\dots,\Acal_n$ form a partition of the probability space, then $\hat Q_{\Acal_1}+\dots+\hat Q_{\Acal_n}=\1$.

Consider now the operators $\hat Q_{\widetilde\Gcal_x}$, $\hat Q_{\widetilde\Gcal_z}$ and $\hat Q_{\Bcal}$. Notice that these operators are invariant under reflection of the box through any mid-plane (because so are the events they arise from). Let $\vartheta_{\bt}$ denote the shift on $\T_N$ by $B\bt$, where $\bt\in\T_{N/B}$. Let $\xi:=\frac c{\Scal}$, where $c$ is as in Lemma~\ref{lemma-q-class}. Since $\beta\le c_1\sqrt{\Scal}$ and $\mathfrak p_\beta(\Bcal)\texte^{\xi+c_2\beta/\sqrt{\Scal}}$ is small --- of course, for~$\beta$ large, thanks to Lemmas~\ref{lemma-5.2} and~\ref{lemma-5.3}, Proposition~\ref{P:smallBcal} and the choices \eqref{E:Delta-B} --- for some absolute constant $c_1,c_2\in(0,\infty)$, Theorem~3.7 of~\cite{Biskup-Chayes-Starr} tells us that for some $\epsilon>0$ small
\begin{equation}
\label{E:B}
\langle\hat Q_{\Bcal}\rangle_{N,\beta}<\epsilon
\end{equation}
and
\begin{equation}
\label{E:Q}
\bigl\langle\hat Q_{\theta_{\bt_1} \widetilde\Gcal_x}(1-\hat Q_{\theta_{\bt_2} \widetilde\Gcal_z})\bigr\rangle_{N,\beta}<\epsilon.
\end{equation}
for any $\bt_1,\bt_2\in\T_{N/B}$.

In order to process these to the bounds in the statement of the theorem, let us note that, if~$\Ocal$ is a bounded operator generated by $\hat S_{\br}^{\alpha}$ with $\alpha=x,y,z$, and $\br\in (\bt B+\Lambda_B)\cup (\bt' B+\Lambda_B)$ and  $[\Ocal]_{\Omega}$ is its upper index for which we set
\begin{equation}
%\label{}
\gamma:=\sup_\Omega\bigl|[\Ocal]_{\Omega}\bigr|
\end{equation}
and
\begin{equation}
\label{E:gamma}
\gamma_\alpha:= \sup\Bigl\{ \bigl|[\Ocal]_{\Omega}\bigr| : \Omega\in \theta_{\bt}\Gcal_{\a} \cup \theta_{\bt'}\Gcal_{\a}  \Bigr\},\qquad\alpha=x,z,
\end{equation}
then \eqref{E:Q} and \eqref{E:gamma} imply
\begin{equation}
\label{E:O<gamma}
\langle \Ocal \rangle_{N,\b}\le 3\epsilon \gamma+\max\{\g_x,\g_z\}.
\end{equation}
Indeed, writing 
\begin{equation}
\label{E:3e+g}
\langle \Ocal \rangle_{N,\b}=\frac1{Z_{N,\b}}\left(\frac{2\Scal+1}{4\pi}\right)^{\abs{\T_N}}\int_{(\mathscr S_2)^{\Lambda}}\textd\Omega\,\,[\Ocal]_\Omega\,\langle\Omega| \texte^{-\b \hat H_N}  |\Omega\rangle,
\end{equation}
we can estimate $[\Ocal]_\Omega$ by~$\gamma_\alpha$ on $\theta_{\bt}\Gcal_{\a} \cup \theta_{\bt'}\Gcal_{\a}$, and by~$\gamma$ on the complement of these. The positivity of $\langle\Omega| \texte^{-\b \hat H_N}  |\Omega\rangle$ permits us to convert the resulting integrals  to expectations of the kind \twoeqref{E:B}{E:Q}.

To see how this applies in a specific situation, consider $\Ocal:= \Scal^{-4}\bigl[\hS_{\br}^{y}\bigr]^2\bigl[\hS_{\br'}^{y}\bigr]^2$ for $\br, \br'$ belonging to distinct translates of $\Lambda_B$. We then get
\begin{equation}
%\label{}
[\Ocal]_\Omega=\bigl[\O_{\br}^{y}\bigr]^2  \bigl[\O_{\br'}^{y}\bigr]^2 + O(\Scal^{-1})   ,
\end{equation}
which is $O(\eta+\Scal^{-1})$ on $\cup_{\a=x,z}(\theta_{\bt}\Gcal_{\a} \cup \theta_{\bt'}\Gcal_{\a}) $ and $O(1)$ otherwise. As a consequence, for some $c<\infty$ and $N$ is sufficiently large,
\begin{equation}
%\label{}
\Scal^{-4}\Bigl|\bigl\langle\bigl[\hS_{\br}^{y}\bigr]^2\bigl[\hS_{\br'}^{y}\bigr]^2\bigr\rangle_{N,\beta}\Bigr|\le c(\eta+\Scal^{-1}+\epsilon).
\end{equation}
The other cases needed to establish \twoeqref{E:QLRO22}{E:QLRO13} are checked analogously.
Once \twoeqref{E:QLRO22}{E:QLRO13} are proved, the bound \eqref{E:QLRO} follows by the symmetries of the model and the fact that $\Scal^{-2} \sum_{\a} \bigl[S_{\br}^{\a}\bigr]^2=1+O(1/\Scal)$.
\end{proof}

\begin{proof}{Proof of Theorem~\ref{T:QLRO2}}
Fix $J>0$ and throughout this proof let $\hHcal_N$ denote the Hamiltonian for $J_1=J_2=J$.
Consider the operator 
\begin{equation}
\hat{\mathcal E}_N^{\alpha}:={\mathcal S}^{-2}\sum_{\begin{array}{c}\scriptstyle\,\langle\br,\br'\rangle_\alpha\\*[-6pt]\scriptstyle\br,\br'\in \T_N\end{array}
} \, S_{\br}^{x}S_{\br'}^{x}, \qquad  \a=x,y,z
\end{equation}
and notice that increasing $J_1$ above the common value $J$ amounts to adding the term 
$(J-J_1)\hat{\mathcal E}_N^{x}$ to $\hHcal_N$. Key to the proof is to show that 
\begin{equation}
f(h):= \lim_{N\to\infty} \frac1{N^2}\log \frac{\Tr(\texte^{-\b\hHcal_N+h \hat{\mathcal E}_N^{x} })}{\Tr ( \texte^{-\b\hHcal_N})}
\end{equation}
satisfies 
\begin{equation}
\label{E:f+}
\frac{\textd}{\textd h^+}f(h)|_{h=0} > 2(1-\delta).
\end{equation}
Indeed, $h\mapsto f(h)$ is convex (by H\"older inequality) and so \eqref{E:f+} implies the same bound for all $h>0$.
This in turn shows that,  for $N$ sufficiently large and $\br$ with both coordinate  even,
\begin{equation}
{\mathcal S}^{-2}\langle \hat{\mathcal E}_{\br}   \rangle_{N,\beta}\ge 2(1-\delta)
\end{equation}
where $\hat{\mathcal E}_{\br}$ is quantum counterpart of the plaquette energy \eqref{E:WJen}. From here the desired claims follow along the same argument as in the classical case.

We thus have to show \eqref{E:f+}. To this end consider the torus events
\begin{equation}
%\label{}
\Acal_\alpha:=\Bigl\{\Omega\colon\sum_{\langle\br,\br'\rangle_\alpha}\Omega_{\br}^\alpha\Omega_{\br'}^\alpha\ge(2-\delta)|\T_N|\Bigr\},\qquad \alpha=x,z.
\end{equation}
A straightforward application of chessboard estimates shows that, for each~$\delta>0$, there is $\beta_1=\beta_1(\delta)$ such that for $\beta\ge\beta_1$ and $\Scal\ge c\beta^2$, we have
\begin{equation}
\label{E:QAbd}
\bigl\langle\hat Q_{\Acal_\alpha}\bigr\rangle_{N,\beta}\ge\frac12-\epsilon_N,\qquad \alpha=x,z,
\end{equation}
where $\epsilon_N\to0$ as $N\to\infty$. Next we note that, by Jensen's inequality
\begin{equation}
\label{E:TR-LB}
\Tr\bigl(\texte^{-\beta\hHcal_N+h\hat{\mathcal E}_N^\alpha}\bigr)
\ge\left(\frac{2\Scal+1}{4\pi}\right)^{\abs{\T_N}}\int_{\Acal_\alpha}\textd\Omega\,\,\texte^{-\beta\langle\hHcal_N\rangle_\Omega+h(2-\delta)|\T_N|},
\end{equation}
where we already applied that $\langle\hat{\mathcal E}_N^\alpha\rangle_\Omega\ge(2-\delta)|\T_N|$ for $\Omega\in\Acal_\alpha$. Theorem~3.1 of~\cite{Biskup-Chayes-Starr} now shows that
\begin{equation}
%\label{}
\texte^{-\beta\langle\hHcal_N\rangle_\Omega}\ge\langle\Omega|\texte^{-\beta\hHcal_N}|\Omega\rangle\,\texte^{-c\beta|\T_N|/\sqrt\Scal}
\end{equation}
for some~$c<\infty$. Substituting this into \eqref{E:TR-LB} yields
\begin{equation}
%\label{}
\frac{\Tr(\texte^{-\beta\hHcal_N+h\hat{\mathcal E}_N^\alpha})}{\Tr(\texte^{-\beta\hHcal_N})}
\ge\texte^{h(2-\delta)|\T_N|-c\beta|\T_N|
/\sqrt\Scal}\bigl\langle\hat Q_{\Acal_\alpha}\bigr\rangle_{N,\beta}.
\end{equation}
Setting $\alpha:=x$ and applying \eqref{E:QAbd}, the bound \eqref{E:f+} follows once $c\beta/\sqrt\Scal<\delta$.
\end{proof}

\section{Concluding remarks}
\label{sec-VI}

\subsection{Ne\'el \emph{vs} orientational order}
As already mentioned, Wenzel and Janke determined in their numerical experiments that the model exhibits a Ne\'el ordering of the plaquette energies; see Figs.~2(a-b) of~\cite{Wenzel:2009p4934}. Explicitly, energy was found low on the $z$-plaquettes and high on the $x$-plaquettes in a sample obtained by multiple updates of a configuration using Glauber dynamics with the Metropolis rule. We wish to point out that this ordering is a direct consequence of the ORLO established rigorously in the present work. Indeed, if the spins are with high probability aligned with $\hate_2$  ---  i.e., pointing north or south  ---  then the $z$-plaquettes will have energy $\mathcal E_{\br}\approx-4$ while the $x$-plaquettes will have energy $\mathcal E_{\br}\approx0$. In addition, the mixed plaquettes will settle at energy $\mathcal E_{\br}\approx-2$. This is consistent with Fig.~2(d) of~\cite{Wenzel:2009p4934}. 

The state with the spins aligned with $\hate_1$ will have the roles of the $x$ and $z$-plaquettes interchanged, giving the energy distribution again a Ne\'el type order resemblance. Notwithstanding, the physical significance of a Ne\'el order is unclear given the period-2 nature of the interaction. And, in fact, matters seem to look quite different when instead of~$\mathcal E_{\br}$ we work with more natural quantity,
\begin{eqnarray}
\nonumber
\widetilde{\mathcal E}_{\br}(\bS):=
\bigl[S_{\br}^{\alpha_1}-S_{\br+\hate_1}^{\alpha_1}\bigr]^2
+\bigl[S_{\br}^{\alpha_2}-S_{\br+\hate_2}^{\alpha_2}\bigr]^2\\
\qquad\qquad\qquad
+\bigl[S_{\br+\hate_1}^{\alpha_3}-S_{\br+\hate_1+\hate_2}^{\alpha_3}\bigr]^2
+\bigl[S_{\br+\hate_2}^{\alpha_4}-S_{\br+\hate_1+\hate_2}^{\alpha_4}\bigr]^2,
\end{eqnarray}
which is the plaquette energy normalized to \emph{vanish} in all ground states. Although we do not see any reason why a \emph{strong} Ne\'el ordering should be exhibited by these plaquette energies, it would be perhaps of some interest to rerun the numerical experiments at higher precision to check this fact numerically.

\subsection{Enhanced sampling}
The samples of actual configurations shown in Fig.2(d) of~\cite{Wenzel:2009p4934} deserve one more comment. A cursory look at the figure reveals some level of orientational order in the $\hate_2$-spin direction  ---  which is consistent with our mathematical results --- but a more careful analysis uncovers an apparent statistical discrepancy. Indeed, most of the spins point down in the figure although the plaquette-flip symmetries $(\bvarphi_{\br})$ of the Hamiltonian, which can be applied independently at all even locations, indicate that about half of the $z$-plaquettes should be pointing up! It is thus somewhat surprising that in Fig.2(d) of~\cite{Wenzel:2009p4934} only 7 such plaquettes out of the total of 25 have upward-pointing spins; the others are clearly pointing down.
As the number of upward-pointing plaquettes is well approximated by a binomial distribution, the probability that this happens is 
\begin{equation}
\biggl(\begin{array}{c}{25}\\7\end{array}\biggr)\frac1{2^{25}}\approx0.014,
\end{equation}
i.e., the configuration in the figure will typically appear only once in about 70 samples! We take this as a possible indication that, despite judicious methods of simulations, the configuration may not have fully equilibrated at the time the snapshot was taken.

Turning this observation into a positive statement, one can try to use the plaquette-spin flips to accelerate the convergence of the computer sample to equilibrium.  Notice that, even at moderate temperatures, any single-spin update rule will have considerable difficulties to overcome the energy barrier associated with changing the orientation of an entire plaquette. The dynamics would naturally mix faster if an occasional flip of an entire plaquette --- by an application of one of the maps $(\bvarphi_{\br})$ --- is incorporated into the stochastic dynamics. This would result in an algorithm reminiscent of the Swendsen-Wang method for sampling configurations in the $q$-state Potts model~\cite{Swendsen-Wang}. It could be expected that this enhancement would result in a substantially better performance of the simulations.
An algorithm of this sort has been recently attempted in the context of the orbital compass model~
\cite{WJL}, although there the cluster flips have to be performed along entire lines of sites which makes them very non-local.

\subsection{Correlation decay}
Our mathematical argument establishes rigorously long-range order in the system. Nonetheless, we do so without giving any bound on the decay of (truncated) correlations. Since our argument is based on contour methods and suppression of long-wavelength part of the spin-wave decomposition, we tend to believe that the correlations generally decay exponentially fast in any translation-invariant, ergodic Gibbs measure for this system. However, we have not been able to find a rigorous argument in the vain. Again, it would be of interest to see if this question could be addressed by numerical methods.

\vskip0.8cm\noindent
\leftline{\bf Acknowledgments}

\bigskip\noindent
The research of M.B. was partially supported by the NSF~grant DMS-0949250. The research of R.K. was partially supported by the grants GA\v CR~201-09-1931 and MSM~0021620845.

\vskip0.8cm\noindent
\leftline{\bf References}

\medskip\noindent
%\bibliographystyle{amsplain}
% Create the reference section using BibTeX:
%\bibliography{K}

\end{document}